\documentclass{amsart}

  \usepackage{amsmath}
  \usepackage{amssymb}
  \usepackage[bbgreekl]{mathbbol}
  \DeclareSymbolFontAlphabet{\mathbbm}{bbold}
  \DeclareSymbolFontAlphabet{\mathbb}{AMSb}
  \usepackage{amsthm}
  \usepackage{mathrsfs}
  \usepackage{mathtools}
  \usepackage{stmaryrd}
  \usepackage{dsfont}
  \usepackage{amsmath}
  \usepackage{nicefrac}
  \usepackage[english]{babel}
%  \usepackage[babelshorthands]{polyglossia}
%  \setdefaultlanguage{english}
%  \defaultfontfeatures{Ligatures=TeX}
  \usepackage{enumitem}
  \usepackage[a4paper,margin=3cm]{geometry}
  \usepackage{tikz}
  \usepackage{graphicx,float}
  \usetikzlibrary{matrix,arrows,decorations.pathmorphing}
  \usepackage{tikz-cd}
  \usepackage{caption, subcaption}
  \usepackage[vlined,boxed]{algorithm2e}

  \usepackage{lmodern}
  \usepackage{xcolor}
  \usepackage[plainpages=false,pdfpagelabels,colorlinks=true,citecolor=blue,urlcolor=magenta,hypertexnames=false]{hyperref}
  \usepackage{todonotes}
  \theoremstyle{plain}{
    \newtheorem{Theoreme}{Theorem}[section]
    \newtheorem{Lemme}[Theoreme]{Lemma}
    \newtheorem{Corollaire}[Theoreme]{Corollary}
    \newtheorem{Proposition}[Theoreme]{Proposition}
    \newtheorem{Definition-Proposition}[Theoreme]{Definition-Proposition}
  }
  \theoremstyle{definition}{
    \newtheorem{Definition}[Theoreme]{Definition}
    \newtheorem{Hypothesis}[Theoreme]{Hypothesis}
    \newtheorem{Notation}[Theoreme]{Notation}
  }
  \theoremstyle{remark}{
    \newtheorem{Remarque}[Theoreme]{Remark}
    
  }
  % "MathRight" {{{

    %%%
    
  % }}}

  % "MathBlackBoard" {{{
    
    % \newcommand{\Bbb}{\mathbb{B}}
    \newcommand{\Cbb}{\mathbb{C}}

    \newcommand{\Fbb}{\mathbb{F}}

    \newcommand{\Kbb}{\mathbb{K}}

    \newcommand{\Nbb}{\mathbb{N}}
    
    \newcommand{\Pbb}{\mathbb{P}}
    \newcommand{\Qbb}{\mathbb{Q}}
    \newcommand{\Rbb}{\mathbb{R}}
    \newcommand{\Sbb}{\mathbb{S}}

    \newcommand{\Zbb}{\mathbb{Z}}

  % }}}

  % "MathCaligraphy" {{{
    
    \newcommand{\Bcal}{\mathcal{B}}
    \newcommand{\Ccal}{\mathcal{C}}
    \newcommand{\Dcal}{\mathcal{D}}

    \newcommand{\Gcal}{\mathcal{G}}
    
    \newcommand{\Ical}{\mathcal{I}}

    \newcommand{\Lcal}{\mathcal{L}}
    
    \newcommand{\Ncal}{\mathcal{N}}
    \newcommand{\Ocal}{\mathcal{O}}

    \newcommand{\Tcal}{\mathcal{T}}

  % }}}

  % "MathSCR" {{{

  % }}}

  % "MathFraktur" {{{

    \newcommand{\Gfrak}{\mathfrak{G}}

    \newcommand{\Pfrak}{\mathfrak{P}}

    %%%

    \newcommand{\pfrak}{\mathfrak{p}}

\newcommand{\nocontentsline}[3]{}
\newcommand\stoptoc{%
   \let\origcontentsline\addcontentsline
   \let\addcontentsline\nocontentsline
}
\newcommand\resumetoc{%
   \let\addcontentsline\origcontentsline
}
  % }}}
  \DeclareMathOperator{\Disc}{Disc}
  \DeclareMathOperator{\Supp}{Supp}
  \DeclareMathOperator{\Tr}{Tr}
  \newcommand{\Id}{\mathrm{Id}}
  \newcommand{\Gal}[2]{\mathrm{Gal}(\nicefrac{#1}{#2})}
  \newcommand{\ud}{\,\mathrm{d}}
  \renewcommand{\Im}{\mathrm{Im}\,}
  
  \renewcommand{\leq}{\leqslant}
  \renewcommand{\geq}{\geqslant}
  
  \newcommand{\ldbrack}{\llbracket}
  \newcommand{\rdbrack}{\rrbracket}
  \newcommand{\overbar}[1]{\mkern 1.5mu\overline{\mkern-1.5mu#1\mkern-1.5mu}\mkern 1.5mu}
  % "MathBlackBoard" {{{

\newcommand{\gcrd}{\mathrm{gcrd}}
\newcommand{\DyN}[2]{\nicefrac{#1\langle\partial\rangle #2}{(\partial^p-y_N)}}

\newcommand{\Diff}{\mathrm{Diff}}
\title[Solving the $p$-Riccati equation]
{Factoring linear differential operators in positive characteristic
by means of solving a norm equation.}
\author{Raphaël Pagès}
\begin{document}
\maketitle
\begin{abstract}
  The solutions of the equation $f^{(p-1)}+f^p=h^p$ in the unknown function
  $f$ over an algebraic function field of characteristic $p$ are very
  closely linked to the structure and factorisations of linear differential
  operators with coefficients in function fields of characteristic $p$.
  However, while being able to solve this equation over general algebraic
  function fields is necessary even for operators with rational
  coefficients, no general resolution method has been developed. We present
  an algorithm for testing the existence of solutions in polynomial time in the
  ``size'' of~$h$ and a Las Vegas algorithm based on the computation of Riemann-Roch
  spaces and the selection of elements in the divisor class group, for
  computing solutions in time polynomial in the size of $h$ and linear in the
  characteristic~$p$. The size of the solutions yielded by this algorithm
  is polynomial in the ``size'' of~$h$. We discuss the applications of
  those algorithms to the factorisation of
  linear differential operators in positive characteristic $p$.
\end{abstract}
\tableofcontents
\section{Introduction}
This article deals with some algorithmic questions related to the
factorisation of linear differential operators in positive characteristic
$p$.
Let $K$ be a field equipped with an additive map $a\mapsto a'$ verifying the
Leibniz rule $(ab)'=a'b+ab'$. Such a map is called a derivation on $K$ and
$K$ is called a differential field. An example of such a field is $k(x)$,
where $k$ is any field, equipped with the derivation $\frac{\ud}{\ud x}$. 
We can consider the ring $K\langle\partial\rangle$ of linear differential 
operators with coefficients in~$K$, whose elements are polynomials in the
variable $\partial$ of the form
\[a_n\partial^n+a_{n-1}\partial^{n-1}+\cdots+a_1\partial+a_0\] with 
$a_i\in K$, and where
the (noncommutative) multiplication verifies the commutation rule
$\partial a=a\partial+a'$ for any $a\in K$. This formalism allows for a
wider variety of differential problems to be considered than the usual
analytical one, among which differential equations with coefficients in a
field of positive characteristic or with $p$-adic coefficients. The latter
has found many applications, $\emph{e.g.}$ to count points on elliptic
curves~\cite{Lauder04zeta}, to compute isogenies~\cite{LaVa16-old,Eid21}
and, more generally, to study (the cohomology of) many algebraic
varieties.\\
The problem of the factorisation of linear differential operators with coefficients
in $K=\Cbb(x)$ or $K=\overbar{\Qbb}(x)$ has been
well studied and several algorithms have been proposed over the
years~\cite{ChGoMe22,Gri90,VHo97}. The question of factorisation for
operators with coefficients in $\Fbb_p(x)$ has also been studied in the
perspective of developing modular algorithms to factor operators in
$\Qbb(x)\langle\partial\rangle$~\cite{Put96,Cluzeau03} after van der Put
published in~\cite{Put95} a full classification of finite dimensional differential modules in
characteristic~$p$ which serves as the basis of all factorisation
algorithms for operators in $\Fbb_p(x)\langle\partial\rangle$. The most
notable difference between operators in characteristic $0$ and in
characteristic $p$ is the size of the field of constants. Indeed, whereas it
is reduced to $\Cbb$ over $\Cbb(x)$, the field of constants of $\Fbb_p(x)$
is $\Fbb_p(x^p)$ over which the field of rational functions $\Fbb_p(x)$ is of finite
dimension $p$.  As a consequence, any operator
$L\in\Fbb_p(x)\langle\partial\rangle$ is a divisor
of an element $N\in\Fbb_p(x^p)[\partial^p]$, the center of $\Fbb_p(x)\langle\partial\rangle$.
Factoring those central elements (as a product of central elements) is much easier as they behave as bivariate
polynomials. Furthermore, the factorisation of central multiples of $L$ can
be used to recover information on the factorisations of $L$. This allows
to reduce the problem of factorisation in
$\Fbb_p(x)\langle\partial\rangle$ to the factorisation of divisors of some
$N(\partial^p)$ where $N$ is an irreducible polynomial over $\Fbb_p(x^p)$. 
To further improve the factorisation, we take advantage of the fact that
$\Fbb_p(x)\langle\partial\rangle$ is an Azumaya algebra (which is to say
here that all of the quotient rings
$\nicefrac{\Fbb_p(x)\langle\partial\rangle}{N(\partial^p)}$ are central
simple algebras over $\nicefrac{\Fbb_p(x^p)[\partial^p]}{N(\partial^p)}$, where $N$ is an
irreducible polynomial over $\Fbb_p(x^p)$) and comes equipped
with a reduced norm (which is locally equal to the reduced norms of the
local central simple algebras). A central element of the form
$N(\partial^p)$ where $N$ is an irreducible polynomial over $\Fbb_p(x^p)$
is then reducible as a differential operator if and only if it is a reduced
norm. Furthermore, in this case, its factors (and thus of any of its right
factors) are
all solutions of the induced norm equation.\\
In the case where $N(\partial^p)$ is of the form $\partial^p-a$ with $a\in
\Fbb_p(x^p)$, the solutions of the norm equations, if they exist, are
operators of order $1$. It was shown~\cite{Put95,Cluzeau03}, that the norm
equation can then be written as such
\[\Ncal(\partial-f)=\partial^p-\frac{\ud^{p-1}}{\ud
x^{p-1}}f-f^p=\partial^p-a\]
with $f\in\Fbb_p(x)$. This result generalizes to higher degrees of $N$. For the
sake of simplicity we shall assume $N$ to be separable. Let $y_N$ be a root
of $N$ in a separable closure of $\Fbb_p(x^p)$. The solutions of the norm
equation in $\Fbb_p(x)\langle\partial\rangle$ are in bijection with the
solutions of
\begin{equation}\label{p-riccati_equation}
  \frac{\ud^{p-1}}{\ud x^{p-1}} f+f^p=y_N
\end{equation}
in $\Fbb_p(x)[y_N]$. We call this equation the $p$-Riccati equation relative to $N$.
Furthermore, if $L$ is a divisor of $N(\partial^p)$ (not necessarily
irreducible) then the solutions of the $p$-Riccati equation can be used to
recover irreducible divisors of $L$. One way of doing that is to notice
that if $f$ is a solution of the $p$-Riccati equation relative to $N$ then
$L(\partial-f)$ has an algebraic solution $b\in\Fbb_p(x)[y_N]$. It follows
that an irreducible divisor of $L$ is given by the smallest left multiple
of $\partial-f-\frac{b'}{b}$ in $\Fbb_p(x)\langle\partial\rangle$. The
``size'' of the irreducible divisor of $L$ that this method returns thus depends at
least in part on the ``size'' of the solution to the $p$-Riccati equation
used. In particular, while it is not sufficient it is important in the
perspective of developing modular methods for factorisation that the ``size''
of the solution to the $p$-Riccati equation is independent from $p$.\\
Finally the existence of solutions to the $p$-Riccati equation acts as an
irreducibility test for $N(\partial^p)$.
\subsection{State of the art}
  After publishing a full classification of finite dimensional differential
  modules, Marius van der Put worked on the algorithmic aspects of the
  factorisation in positive characteristic, with the perspective of
  developing modular algorithms for the characteristic $0$. He wrote in
  1997 a manuscrit~\cite{Put97} which would go unpublished, in which he describes a nearly complete
  algorithm to factor linear differential operators in positive
  characteristic. This algorithm makes use of the $p$-curvature of an
  operator to compute a central multiple of it and do the aforementioned
  reduction to the case of a divisor of some $N(\partial^p)^m$ where $N$ is
  an irreducible polynomial over $\Fbb_p(x^p)$. This part of the algorithm
  is called the computation of the \emph{isotypical decomposition} of the
  operator.\\
  In this same manuscript, van der Put wrote the only method known up to
  this point to solve the $p$-Riccati equation relative to a general $N$,
  and suppose known a non trivial divisor $L$ of $N(\partial^p)$. In this
  case, if $y_N$ is a root of $N$, then one may consider the operator
  $L_*:=\gcd(L,\partial^p-y_N)\in\Fbb_p(x)[y_N]\langle\partial\rangle$. By writing
  \[L_*=\partial^m+b_{m-1}\partial^{m-1}+\cdots+b_1\partial+b_0\]
  it can be shown that $-\frac{b_{m-1}}{m}$ is a solution to the
  $p$-Riccati equation relative to $N$. As previously stated, this method can only be used if one already 
  knows a nontrivial divisor of
  $N(\partial^p)$. In particular if nothing else is known it cannot be used
  as an irreducibility test for $N(\partial^p)$. Furthermore, computing the
  greatest common divisor of $L$ with an operator of order $p$ yields an
  operator whose coefficients are of linear ``size'' in $p$. Thus the
  solution to the $p$-Riccati equation that this method returns has
  linear size in $p$ as well. This size dependency on $p$ spreads to the
  corresponding factors of $N(\partial^p)$.\\

  In the case where $N$ is a polynomial of degree $1$, an algorithm to solve the $p$-Riccati equation was
  proposed~\cite[§13.2.1]{PuSi03} which does not have those drawbacks. In that setting, the
  $p$-Riccati equation can be written as 
  \[\frac{\ud^{p-1}}{\ud x^{p-1}} f+f^p=g^p\] with $g\in\Fbb_p(x)$. This method consists in
  showing that if rational solutions exists then one of them has the same
  denominator as $g$ and a numerator of degree at most the maximum of the
  degrees of the numerator and the denominator of $g$. Finding this solution is now an easy
  task since the map $f\mapsto\frac{\ud^{p-1}}{\ud x^{p-1}} f+f^p$ is
  $\Fbb_p$-linear. This method returns a solution of degree polynomial (in
  fact linear here) in that of $g$ and a naive computational approach
  outputs the result in polynomial time in the degree of
  $g$ and linear time in $p$.

  Since then, works around the factorisation of differential operators in
  positive characteristic have mostly shied away from the prospect of
  solving the $p$-Riccati equation in order to finish the factorisation.
  In~\cite{Cluzeau03}, Thomas Cluzeau adapted van der Put's {isotypical
  decomposition} algorithm to
  the case of differential systems of the form $Y'=AY$ where $A\in
  M_n(\Fbb_q(x)$, with $q$ being a power of $p$. To further the
  factorisation, Cluzeau suggested applying a similar process to other
  endomorphism of the system $Y'=AY$ (the set of which constitutes a ring
  called the \emph{Eigenring}. While this method usually works well in
  characteristic $0$, later experiences have shown that this method may not
  be as efficient in positive characteristic due to the difference in
  nature of the constant field ($\Cbb$ for systems with coefficients in
  $\Cbb(x)$ which is algebraically closed, and
  $\Fbb_q(x^p)$ for systems with coefficients in $\Fbb_q(x)$ which is not).
   
  In 2003~\cite{GiZh03}, Mark Giesbrecht and Yang Zhang studied the factorisation of Ore
  polynomials (of which differential operators are a specific case) in
  positive characteristic. In their paper, the authors establish a direct connection between
  nontrivial factors of a given Ore polynomial and nontrivial zero divisors
  of the Eigenring. However, later work~\cite{GoTLoNa15,GoTLoNa19} showed
  that computing zero divisors was not always easy, for example  when the Eigenring 
  is a simple Artinian ring, which is precisely the case
  when factoring central operators.

  The case of the factorisation of central operators of the form
  $N(\partial^p)$ has been ignored by the previous works on factorisation.
  However, this case is highly nontrivial as we will see, and was until now the last
  missing piece for a complete factorization algorithm of linear
  differential operators in positive characteristic.

  \subsection{Contribution}
  We present two new algorithms regarding the factorisation of central
  operators in characteristic $p$ that are irreducible as polynomials, that is to say operators of the form
  $N(\partial^p)$ where $N$ is an irreducible polynomials with coefficients
  in $\Fbb_q(x^p)$ (with $q$ being a power of $p$). The first is a polynomial time
  irreducibility test.
  \begin{Theoreme}
    Let $q\in\Nbb^*$ be a power of $p$ and $N_*\in\Fbb_q[x,Y]$ be an irreducible
    bivariate polynomial of degree $d_x$ with respect to $x$ and $d_y$ with
    respect to $Y$. There exists an algorithm testing the irreducibility of
    $N_*^p(\partial)$ as a linear differential operator in polynomial time in $d_x, d_y$ and $\log(q)$.
  \end{Theoreme}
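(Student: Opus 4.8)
The plan is to translate the irreducibility question into the solvability of a $p$-Riccati equation over an auxiliary function field, and then to show that this solvability is decided by a computation of size polynomial in $d_x$, $d_y$ and $\log q$. Writing $N_*=\sum_{i,j}c_{ij}x^iY^j$ with $c_{ij}\in\Fbb_q$, one has $N_*^p(\partial)=\big(\sum_{i,j}c_{ij}^p x^{ip}Y^{jp}\big)\big|_{Y=\partial}=\tilde N(\partial^p)$, where $\tilde N(Z)=\sum_{i,j}c_{ij}^p x^{ip}Z^j\in\Fbb_q(x^p)[Z]$ has degree $d_y$ in $Z$ and is irreducible over $\Fbb_q(x^p)$ (being a Frobenius twist of $N_*$ with $x$ replaced by $x^p$); thus $N_*^p(\partial)$ is a central operator. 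After the standard Frobenius-descent reduction one may assume $\tilde N$ separable. Fixing a root $y_N$ of $\tilde N$ and setting $F=\Fbb_q(x)[y_N]$, the extension $F/\Fbb_q(x)$ has degree $d_y$; and although the coefficients of $\tilde N$ have $x$-degree up to $pd_x$, the function field $F$ is purely inseparable over the function field of a plane curve of bidegree at most $(d_x,d_y)$, so the genus $g_F$, the degree of the different of $F/\Fbb_q(x)$, and the number and residue degrees of the places of $F$ lying above $\infty$ and above the poles of the coefficients of $\tilde N$ are all bounded by a polynomial in $d_x,d_y$ that is independent of $p$. By the correspondence recalled in the introduction, $N_*^p(\partial)$ is irreducible if and only if the $p$-Riccati equation $f^{(p-1)}+f^p=y_N$ has no solution $f\in F$.

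First I would pin down the pole structure of a hypothetical solution. With $\varphi(f)=f^{(p-1)}+f^p$, a local computation at a place $P$ of $F$ gives $v_P\big(f^{(p-1)}\big)\geq v_P(f)-(p-1)\big(1+v_P(\drm x)\big)$ while $v_P(f^p)=p\,v_P(f)$; comparing these shows that if $P$ is unramified in $F/\Fbb_q(x)$, does not lie above $\infty$, and is not above a pole of $y_N$, then any solution has at worst a simple pole at $P$, whose leading coefficient must lie in $\Fbb_p$. At the finitely many remaining ``bad'' places the pole order of a solution is bounded by the local different exponent plus the local pole order of $y_N$. Summing, every solution lies in a Riemann--Roch space $L(D_{\mathrm{bad}}+E)$, where $D_{\mathrm{bad}}$ is a fixed effective divisor supported on the boundedly many bad places and $E$ is a reduced effective divisor supported on good places, with all residues in $\Fbb_p$.

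To remove both the unknown $E$ and the $p$-dependent pole orders at the bad places, I would exploit the gauge freedom at hand: since $\partial^p$ acts as zero on $F$, one has $\varphi(h'/h)=h^{(p)}/h=0$ for every $h\in F^\times$, so $\varphi$ annihilates all logarithmic derivatives and its solution set is a coset of $\mathrm{dlog}(F^\times)=\{h'/h:h\in F^\times\}$; moreover $\varphi(f)'=0$, so $\varphi$ takes values in the field $C$ of constants of $F$ and, by Cartier's theorem that a rank-one differential module with zero $p$-curvature is trivial, induces an \emph{injection} $F/\mathrm{dlog}(F^\times)\hookrightarrow C$. Thus solving $\varphi(f)=y_N$ becomes a membership test for $y_N$ in $\mathrm{Im}(\varphi)$, which I would decide via: (i) local solvability of $\varphi(f)=y_N$ in the completion $\widehat F_P$ at each bad place $P$, which a successive-approximation (Newton-polygon) argument should reduce to a number of $\Fbb_p$-linear conditions independent of $p$; (ii) the vanishing in $\mathrm{Cl}(F)/p$ of the class determined by the residue divisor that $y_N$ forces on any solution; and (iii) one residual $\Fbb_p$-linear condition in the cokernel of the Cartier operator, a space of dimension polynomial in $d_x,d_y$ and independent of $p$. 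Each of these is computable in time polynomial in $d_x,d_y,\log q$ using Riemann--Roch and divisor-class-group algorithms, so one declares $N_*^p(\partial)$ irreducible precisely when one of (i)--(iii) fails.

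The hard part will be making (i)--(iii) work uniformly in $p$: one has to show that the a priori $p$-sized local data at the wildly ramified places above $\infty$, where $y_N$ has poles of order $\sim pd_x$, contributes only finitely many genuine obstructions whose number does not grow with $p$, and that the global obstruction lives in a group whose dimension is polynomial in $d_x,d_y$ and independent of $p$ — this is what keeps the running time polynomial in $\log q$ rather than in $p$. Verifying the Frobenius-descent reduction in the inseparable case and propagating the various size estimates is more routine but still requires care.
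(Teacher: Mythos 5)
Your reduction to the $p$-Riccati equation is right, and your local analysis at the finitely many bad places is close to the paper's. But your proposal has a genuine gap that would make the resulting test incorrect or at least unproven: you do not establish that your criterion (i)--(iii) is equivalent to solvability of the $p$-Riccati equation over $K_N$.

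The paper's argument rests on one clean structural fact that you do not invoke: $\Dcal_{N(\partial^p)}$ is a central simple algebra over $C_N$, and by the injectivity of
$\mathrm{Br}(C_N)\hookrightarrow\bigoplus_{\Pfrak}\mathrm{Br}(C_{N,\Pfrak})$
(the Hasse local-global principle for Brauer groups of global fields), it splits globally if and only if it splits locally at every place. Translated back through the $p$-Riccati correspondence, this says that $N_*^p(\partial)$ is reducible if and only if the $p$-Riccati equation has a solution in each completion $K_{N,\Pfrak}$. Combined with the observation (via a Newton iteration in $\Gcal_\Pfrak((t_\Pfrak))$) that local solvability is automatic whenever $\nu_\Pfrak(y_N)\geq p\,\nu_\Pfrak(t_\Pfrak')$, this leaves only finitely many ``bad'' places and at each of them an $\Fbb_p$-linear system of size independent of $p$ --- your condition (i) and nothing else. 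Your conditions (ii) (vanishing in $\mathrm{Cl}(F)/p$) and (iii) (a Cartier-cokernel condition) are neither proved necessary nor proved sufficient in your sketch. If they were genuinely extra necessary conditions, the paper's criterion would be wrong; if they are automatic given (i), you need to prove that. As written, the equivalence ``reducible $\Leftrightarrow$ (i)$\wedge$(ii)$\wedge$(iii)'' is simply asserted.

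The class-group and Riemann--Roch machinery you bring in does appear in the paper, but in a different role: it is used in Section~\ref{section_resolution} to \emph{construct a small solution} once existence is known, not to \emph{decide} existence. You are conflating the existence test with the solution-finding algorithm. The missing idea in your proposal is exactly the Brauer-group Hasse principle, which replaces your global conditions (ii) and (iii) by nothing at all and is what makes a purely local test correct.
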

  We then use this irreducibility test to design an algorithm computing an
  irreducible factor of $N(\partial^p)$ when $N(\partial^p)$ is reducible
  as a linear differential operator. This algorithm works by computing a
  solution to
  the $p$-Riccati equation relative to $N$. We will also discuss the
  implications of those algorithms to the general factorisation of
  nontrivial factors of $N(\partial^p)$.
  % We show the following result.
  \begin{Theoreme}
    Let $q\in\Nbb^*$ be a power of $p$ and let $N_*\in\Fbb_q[x,Y]$ be an
    irreducible polynomial of degree $d_x$ with respect to $x$ and $d_y$
    with respect to $Y$. We denote by $N\in\Fbb_q[x^p,Y]$ the unique
    polynomial such that $N_*^p(Y)=N(Y^p)$.
    \begin{itemize}
      \item There exists a solution to the $p$-Riccati equation relative to
        $N$ of size polynomial in $d_x$ and~$d_y$ and a Las Vegas algorithm taking
        $N_*$ as input and outputting this solution in linear time in~$p$
        and polynomial time in $d_x$ and $d_y$.
      \item $N(\partial^p)$ has irreducible factors in
        $\Fbb_q(x)\langle\partial\rangle$ of size polynomial in $d_x$ and
        $d_y$. There exists a Las Vegas algorithm taking
        $N_*$ as input and outputting such a factor in linear time in $p$
        and polynomial time in $d_x$ and $d_y$.
    \end{itemize}
  \end{Theoreme}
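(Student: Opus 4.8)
The plan is to build the solution to the $p$-Riccati equation relative to $N$ on algebraic curves by reducing the problem to divisor-theoretic data on the curve defined by $N_*$, and then using Riemann–Roch spaces to produce a small solution. First I would set $C$ to be the smooth projective model of the function field $\Fbb_q(x)[Y]/(N_*(x,Y))$; let $F=\Fbb_q(x)[y_N]$ denote this function field, so that the $p$-Riccati equation $\frac{\ud^{p-1}}{\ud x^{p-1}}f+f^p=y_N$ is to be solved with $f\in F$. The key structural fact I would invoke is the classical linearisation: the operator $\wp_\partial\colon f\mapsto f^{(p-1)}+f^p$ is additive and $\Fbb_p$-linear, and its study reduces to understanding, for a given $f$, the logarithmic-derivative shape $f=\frac{b'}{b}+c$ where $b$ is an algebraic solution of $L(\partial-f)$ as explained in the introduction. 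Concretely, solutions $f$ of the $p$-Riccati equation relative to $N$ correspond (via the bijection quoted above with irreducible factors of $N(\partial^p)$, which exist since the previous theorem's irreducibility test can be turned into a constructive splitting) to certain line bundles / divisor classes on $C$: an irreducible factor of $N(\partial^p)$ gives a rank-one differential submodule, hence a divisor class $[\Dcal]$ on $C$, and $f$ is recovered from a rational function realising $\wp$-type relations attached to $[\Dcal]$.

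Next I would make the size bound quantitative. The poles of any solution $f$ are controlled by $y_N$: writing the equation locally at a place $v$ of $F$, if $v$ does not lie over a pole of $y_N$ then $f$ must be $v$-integral (else the dominant term $f^p$ has a pole of order divisible by $p$ that cannot be cancelled by $f^{(p-1)}$ unless it already appears in $y_N$), and at places over poles of $y_N$ the valuation of $f$ is bounded by a simple function of $v(y_N)$. This pins the polar divisor of any solution inside an explicit effective divisor $D_0$ of degree $O(d_x d_y)$, computable from $N_*$ by a Newton-polygon / local analysis. Then the set of candidate solutions lies in the finite-dimensional $\Fbb_p$-space $\Lcal(D_0)$, but naively this has dimension growing with $p$; to get a bound independent of $p$ I would instead use the correspondence with divisor classes: the relevant $f$ is $\frac{b'}{b}+(\text{fixed small piece})$ where $b$ generates a line bundle of degree $O(d_x d_y)$ lying in a prescribed coset of $p\cdot\mathrm{Cl}(C)$ inside $\mathrm{Cl}(C)$. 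Selecting a representative divisor of small degree in that coset — this is exactly the "selection of elements in the divisor class group" step advertised in the abstract — and computing the associated Riemann–Roch space $\Lcal$ over $\Fbb_q$ gives an element $b$ whose logarithmic derivative has height (degree of numerator and denominator over $\Fbb_q(x)$) polynomial in $d_x,d_y$ and, crucially, independent of $p$. Feeding this $b$ back produces the desired solution $f$, and the smallest left multiple of $\partial-f-\frac{b'}{b}$ (equivalently, the left gcd construction from the introduction) yields the claimed irreducible factor of $N(\partial^p)$ of size polynomial in $d_x,d_y$.

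For the complexity claims I would assemble standard effective algebraic-curve subroutines: computing the genus and a basis of $\Lcal(D)$ for a divisor $D$ of degree $O(d_x d_y)$ on $C$ costs polynomial time in $d_x,d_y,\log q$ (e.g. via Brill–Noether or Hess's algorithm); arithmetic in $\mathrm{Cl}(C)$ and the selection of a small-degree representative in a given coset modulo $p$ is again polynomial in $d_x,d_y,\log q$ and contributes the single factor linear in $p$ (it is here that $p$ enters, through the index $[\mathrm{Cl}(C):p\,\mathrm{Cl}(C)]$ and the need to adjust by a $p$-torsion-controlled shift); all remaining steps — local expansions of $y_N$, solving the resulting $\Fbb_p$-linear system on $\Lcal(D_0)$, and the final noncommutative gcd in $\Fbb_q(x)\langle\partial\rangle$ — are polynomial in $d_x,d_y,\log q$ and linear in $p$. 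The main obstacle, and the technical heart of the argument, is the second bullet's size control: ensuring that the divisor-class representative chosen is simultaneously of degree $O(d_x d_y)$ \emph{and} compatible with the $p$-th-power constraint coming from $f^p$, so that the reconstructed $f$ genuinely solves the equation rather than merely satisfying it modulo $p$-th powers; handling the interaction between the Hasse–Schmidt higher derivations implicit in $\frac{\ud^{p-1}}{\ud x^{p-1}}$ and the Cartier operator on $C$ is where the delicate curve-specific estimates are needed, and this is the step I expect to require the most care.
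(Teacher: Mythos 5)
Your proposal identifies the right high-level ingredients — local pole bounds for solutions derived from the $p$-Riccati equation, the affine structure of the solution set under $f\mapsto f-\frac{g'}{g}$, working in $\mathrm{Cl}(K_N)/p\,\mathrm{Cl}(K_N)$ to select small representatives, and Riemann--Roch spaces to pin down the candidate space — and these are indeed the backbone of the paper's Section~\ref{section_resolution}. But there are two substantive problems.

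First, you misidentify the source of the factor linear in $p$. You claim it enters "through the index $[\mathrm{Cl}(C):p\,\mathrm{Cl}(C)]$ and the need to adjust by a $p$-torsion-controlled shift.'' That index is $p^r$ with $r\leq g+1$ the $\Fbb_p$-rank of $\Gfrak^p_N$, so any procedure that walks the cokernel would be exponential, not linear, in $p$ for $r>1$. The paper avoids this entirely: it only needs $O(r)$ uniformly random elements of $\mathrm{Cl}^0(K_N)$ to generate $\Gfrak^p_N$, and $r$ is bounded by $g+1=O(d_xd_y)$ with no $p$-dependence. The genuine linear cost in $p$ comes from elsewhere: (i) the precision of the Taylor expansions needed to test membership in the kernel of $f\mapsto f-S_{p-1}(f)$, which scales as $p\cdot\deg A(D)/\deg\Pfrak$; and (ii) computing the zeta function of the curve so that the Bruin-style random sampling in $\mathrm{Cl}^0(K_N)$ is actually uniform. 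Getting this wrong affects both the algorithm design and the complexity analysis.

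Second, your proposal never makes the constructive step concrete: you say "feeding this $b$ back produces the desired solution $f$'' but you do not describe how a solution in $\Lcal(A(D))$ is actually \emph{extracted}. The key technical point in the paper is Proposition~\ref{Proposition_section_stability}: the $\Fbb_p$-linear map $f\mapsto\Phi_N^{-1}(f^{(p-1)}+f^p)=f-S_{p-1}(f)$ (where $S_{p-1}$ is the last section in the decomposition $f=\sum f_i^p x^i$, essentially a Cartier-type operator as you intuit) stabilises the Riemann--Roch space $\Lcal(A(D))$, so the $p$-Riccati equation becomes a genuine $\Fbb_p$-linear system inside $\Lcal(A(D))$, and this system can be written purely in terms of Taylor coefficients at one well-chosen unramified place away from $\Supp A(D)$ (Corollary~\ref{coeff_poles_trace}, Notation~\ref{notation_T_P(B)}). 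You also combine the "ramified residue'' idea and the divisor-class reduction into a single vague sentence about cosets; the paper needs the more delicate Definition-Proposition~\ref{residue_manip} and Theorem~\ref{theo_local_sol}, which track a specific residue in $\Fbb_p$ at each ramified/polar place to guarantee that the valuation improvement survives a global change $f\mapsto f-\frac{g'}{g}$. Without these two pieces, the size bound $O(rd_xd_y)$ and the polynomial-time claim are not actually established.
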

  \begin{Remarque}
    It should be noted that while we limit, for the sake of simplicity, our
    complexity study to the case of operators whose coefficients are
    rational functions over $\Pbb^1$,
    all of the aforementioned algorithms can in fact be designed for factoring operators
    whose coefficients are rational functions over an algebraic
    curve $\Ccal$.
  \end{Remarque}
\subsubsection*{Complexity basics}

We use the soft-$O$ notation $\tilde{O}$ which indicates that
polylogarithmic factors are not displayed. More precisely, if
$\lambda,\mu:\Nbb\rightarrow \Rbb_+$ are increasing functions, saying that
$\lambda(n)=\tilde{O}(\mu(n))$ means that there exists an integer
$k\in\Nbb$ such that $\lambda(n)=O(\mu(n)\log^k(\mu(n)))$.\\
%We will also
%locally use the notation $O_\varepsilon$. With the same notations, saying
%that $\lambda(n)=O_\varepsilon(\mu(n))$ means that for any $\varepsilon>0$,
%$\lambda_n=O(\mu(n)^{1+\varepsilon})$.\\
We denote by $2\leq\omega\leq 3$ a feasible exponent for matrix
multiplication, that is, by definition, a real number for which
we are given an algorithm that computes the product of two $m$-by-$m$ matrices
over a ring $R$ for a cost of $O(m^\omega)$ operations in $R$.
From
\cite{VVYZZ23}, we know that we can take $\omega<2.371552$.
We shall also need estimates on the cost of computing characteristic
polynomials. Let denote $\Omega\in\mathbb{R}^*_+$ such that the computation of the
characteristic polynomial of a square matrix of size~$m$ with coefficients in
a ring $R$ can be done in $\tilde{O}(m^{\Omega})$ arithmetic operations in
$R$.
From \cite[Section~6]{KaVi04}, we know that it is theoretically possible to
take $\Omega\simeq 2.697263$.
 Finally, we assume that any two polynomials of degree~$d$ over a ring~$R$
  (resp. integers of bit size $n$) can be multiplied in $\tilde{O}(d)$
operations in $R$ (resp. $\tilde{O}(n)$ bit operations); FFT-like algorithms
allow for these complexities~\cite{CaKa91,HaHo21}.
\stoptoc
  \subsection*{Acknowledgments:} I am indebted to Alin Bostan and Xavier Caruso
  for their continuous support during my PhD. I also thank Martin Weimann for
  his input and his help regarding local computation in
  section~\ref{section_irreducibility_test}, in particular suggesting computing
  rational Puiseux expansions. I was supported by the ANR
  \emph{DeRerumNatura}~-~ANR-19-CE40-0018 and the
  ANR~\emph{ClapClap}~-~ANR-18-CE40-0026 as well as the Austrian FWF grants
  10.55776/PAT9952223.\\

  We begin by recalling some facts about differential operators in
  characteristic $p$.\\
\resumetoc
  \section{Reduction to a norm equation}
  In this section we will work on differential operators with coefficients
  in a differential field $(K,\partial)$ of characteristic $p$ verifying the following
  hypothesis:
  \begin{Hypothesis}\label{hypothesis_x}
    Let $C$ be the subfield of constants of $K$. We assume:
    \begin{enumerate}
      \item $[K:C]=p$.
      \item There exists $x\in K$ such that $\partial(x)=1$.
    \end{enumerate}
  \end{Hypothesis}
  \begin{Remarque}
    Since we will work on operators in $K\langle\partial\rangle$,
    $\partial$ will denote both a formal operator and a derivation on $K$.
    For the sake of simplicity we will write
    \[f':=\partial(f)\text{ and }f^{(k)}:=\partial^k(f)\]
    for any $f\in K$.
  \end{Remarque}
  We recall here a few properties of the ring of linear differential
  operators $K\langle\partial\rangle$ which are classic results on Ore
  polynomials.
  \begin{Definition-Proposition}\label{def_lclm_gcrd}
    \begin{enumerate}[label=\roman*)]
      \item Let $L_1,L_2\in K\langle\partial\rangle$. There exists a unique
        couple $Q,R\in K\langle\partial\rangle$ with
        $\mathrm{ord}(R)<\mathrm{ord}(L_2)$ such that $L_1=QL_2+R$.
      \item Every left ideal of $K\langle\partial\rangle$ is principal.
      \item Let $L_1,\dots,L_n\in K\langle\partial\rangle$. We denote by
        $\mathrm{LCLM}(L_1,\dots,L_n)$ the unique monic generator of
        $\bigcap_{i=1}^n K\langle\partial\rangle L_i$.
      \item Let $L_1,\dots,L_n\in K\langle\partial\rangle$. We denote by
        $\mathrm{GCRD}(L_1,\dots,L_n)$ the unique monic generator of the
        left ideal generated by $L_1,\dots L_n$.
      \item Let $L_1,L_2\in K\langle\partial\rangle$.
        $$\mathrm{ord}(\mathrm{LCLM}(L_1,L_2))+\mathrm{ord}(\mathrm{GCRD}(L_1,L_2))=\mathrm{ord}(L_1)+\mathrm{ord}(L_2).$$
    \end{enumerate}
  \end{Definition-Proposition}
  It should be noted that this equivalent definitions exists for right
  ideals. However we won't need to use them.\par
  The right euclidean division can be used to compute $\mathrm{GCRD}$ the
  same way Euclidean algorithm is used to compute $\mathrm{gcd}$'s in the
  commutative setting. With a little more work, this algorithm can be
  adapted to compute $\mathrm{LCLM}$'s. Faster algorithms have been
  developed for those operations, for example in~\cite{BCSZ12,Gri90}. It
  is enough for now to know that those operations are computationally
  available.
  \begin{Notation}
    Let $L\in K\langle\partial\rangle$. We denote
    \[\Dcal_L:=\nicefrac{K\langle\partial\rangle}{K\langle\partial\rangle
    L}\]
    and for any right divisor $L_*\in K\langle\partial\rangle$, 
    \[\Dcal_LL_*:=\nicefrac{(K\langle\partial\rangle
    L_*+K\langle\partial\rangle L)}{K\langle\partial\rangle
    L}\]
  \end{Notation}
  \begin{Lemme}
    Let $L,L_*\in K\langle\partial\rangle$. $\Dcal_L
    L_*=\nicefrac{K\langle\partial\rangle
    \mathrm{GCRD(L,L_*)}}{K\langle\partial\rangle L}$.
  \end{Lemme}
  \begin{proof}
    This is a direct consequence of the definition of
    $\mathrm{GCRD}(L,L_*)$.
  \end{proof}
  The quotient module $\Dcal_L$ is important because of its relation to the
  factors of $L$.
  \begin{Proposition}\label{prop_bij_submodule_divisor}
    The map
    \[L_*\mapsto \Dcal_L L_*\]
    induces a bijection between the set of monic right divisors of $L$ and the
    set of submodules of $\Dcal_L$.
  \end{Proposition}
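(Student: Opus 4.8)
The plan is to establish the bijection in both directions, using the fact that $K\langle\partial\rangle$ is a (left and right) Euclidean domain because $K$ is a field, so that all the relevant quotients are well-behaved.

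First I would set up the correspondence. Given a monic right divisor $L_*$ of $L$, write $L = P L_*$ for some $P \in K\langle\partial\rangle$. Then $K\langle\partial\rangle L = K\langle\partial\rangle P L_* \subseteq K\langle\partial\rangle L_*$, so $K\langle\partial\rangle L_* / K\langle\partial\rangle L$ is a well-defined submodule of $\Dcal_L = K\langle\partial\rangle / K\langle\partial\rangle L$; this is exactly $\Dcal_L L_*$. Conversely, given a submodule $M \subseteq \Dcal_L$, its preimage under the quotient map $\pi : K\langle\partial\rangle \to \Dcal_L$ is a left ideal $\Ifrak$ of $K\langle\partial\rangle$ containing $K\langle\partial\rangle L$. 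Since $K\langle\partial\rangle$ is left principal (left Euclidean division by the leading coefficient, which is a unit in $K$), we have $\Ifrak = K\langle\partial\rangle L_*$ for a unique monic $L_*$, and $L \in \Ifrak$ forces $L_*$ to be a right divisor of $L$ (divide $L$ on the right by $L_*$; the remainder lies in $\Ifrak$ and has degree less than $\deg L_*$, hence is zero by minimality of $\deg L_*$ among nonzero elements of $\Ifrak$).

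Then I would check the two composites are the identity. Starting from $L_*$, forming $\Dcal_L L_*$ and taking the preimage gives back $K\langle\partial\rangle L_* + K\langle\partial\rangle L = K\langle\partial\rangle L_*$ (the sum collapses since $K\langle\partial\rangle L \subseteq K\langle\partial\rangle L_*$), whose associated monic generator is $L_*$ again; uniqueness of the monic generator handles well-definedness. Starting from a submodule $M$, its preimage is $K\langle\partial\rangle L_*$, and $\pi(K\langle\partial\rangle L_*) = M$ because $\pi$ is surjective and $M = \pi(\pi^{-1}(M))$. This shows the map is a bijection with explicit inverse.

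The only mildly delicate point — the main obstacle, such as it is — is the non-commutativity: one must be careful that ``right divisor'' pairs with \emph{left} ideals (left submodules of $\Dcal_L$), and that left Euclidean division is available, which relies on the leading coefficient being a unit since $K$ is a field. Once that bookkeeping is fixed the argument is the standard principal-ideal-domain correspondence between quotient-module submodules and intermediate ideals, transported verbatim to the noncommutative but still left-principal ring $K\langle\partial\rangle$. I would also remark that under this bijection a monic $L_*$ of degree $r$ corresponds to a submodule of codimension $r$ (as a $K$-vector space) inside the $(\deg L)$-dimensional space $\Dcal_L$, which is the form in which the proposition is used later.
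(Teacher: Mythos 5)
Your proof is correct. The paper itself does not argue this proposition but simply cites \cite[Section~2.2, page~47]{PuSi03}, and the argument given there is exactly the standard correspondence you spell out: submodules of $\Dcal_L$ pull back to left ideals containing $K\langle\partial\rangle L$, which are principal with a unique monic generator because $K\langle\partial\rangle$ admits division with remainder (the leading coefficient being invertible in the field $K$), and that generator is a right divisor of $L$ — so your route is essentially the same, just written out in full.
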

  \begin{proof}
    Let $M$ be a submodule of $\Dcal_L$. Then the kernel of the map
    $K\langle\partial\rangle\rightarrow\nicefrac{\Dcal_L}{M}$ is a left
    ideal of $K\langle\partial\rangle$ therefore it has a unique monic
    generator $L_M$ and is equal to $K\langle\partial\rangle L_M$. Since
    $L$ is an element of this kernel it follows that $L_M$ is a monic
    right divisor of $L$.\\
    Since $K\langle\partial\rangle\rightarrow \Dcal_L$ is
    surjective and $\nicefrac{\Dcal_L}{\Dcal_L L_*}\simeq\Dcal_{L_*}$ it
    follows that the maps $M\mapsto L_M$ and
    $L_*\mapsto\Dcal_L L_*$ are inverse of one another.
  \end{proof}
  As we previously mentioned, we will restrict our study to operators $L\in
  K\langle\partial\rangle$ for which there exists an irreducible polynomial
  $N\in C[Y]$ such that $L$ is a divisor of $N(\partial^p)$. One way to
  bring ourselves back to this case is through the following lemma:

  \begin{Lemme}
    Let $\psi^L_p$ be the $K$-endomorphism of
    $\Dcal_L$ corresponding to the multiplication by $\partial^p$. Its
    characteristic polynomial $\chi(\psi^L_p)$ lives in $C[Y]$. Let $N\in C[Y]$ be an
    irreducible divisor of $\chi(\psi^L_p)$. Then
    $\mathrm{GCRD}(N(\partial^p),L)\neq 1$.
  \end{Lemme}
  \begin{proof}
    We first show that $\psi^L_p$ is a $K\langle\partial\rangle$-endomorphism of $\Dcal_L$. Indeed,
    according to Hypothesis~\ref{hypothesis_x}, any element of $K$ can be
    written as $\sum_{i=0}^{p-1}c_i x^i$ with the $c_i\in C$, from which it
    follows that $\partial^p(K)=0$. Then for any $f\in K$ we find that
    $\partial^p f=f\partial^p+f^{(p)}=f\partial^p$, therefore $\partial^p$
    commutes with all the elements of $K$, and since it commutes with
    $\partial$ it is an element of the center of $K\langle\partial\rangle$.
    The multiplication by a central element of a ring is always an
    endomorphism on the ring's modules.\\
    Let's show that $\psi^L_p$ has its minimal polynomial
    $\chi_{\mathrm{min}}(\psi^L_p)$ in $C[Y]$, which also shows that
    $\chi(\psi^L_p)\in C[Y]$. Let
    $\chi_{\min}(\psi^L_p)(Y)=\sum_{i=0}^{r}l_i Y^i$ with $l_i\in K$ and
    $l_r=1$.
    Since $\psi^L_p$ is a $K\langle\partial\rangle$-endomorphism, it
    commutes with the multiplication by $\partial$ on $\Dcal_L$ so for all
    $f\in\Dcal_L$ we have
    \begin{align*}
      \chi_{\mathrm{min}}(\psi^L_p)(\partial\cdot f)&=0\\
      &=\sum_{i=0}^r
      l_i(\psi^L_p)^i(\partial\cdot f)\\
      &=\sum_{i=0}^r l_i \partial\cdot (\psi^L_p)^i(f)\\
      &=\partial\cdot\underbrace{\big(\sum_{i=0}^rl_i(\psi^L_p)^i(f)\big)}_{=\chi_\mathrm{min}(\psi^L_p)(f)=0}-\sum_{i=0}^r
      l_i'(\psi^L_p)^i(f)\\
      &=-\sum_{i=0}^r
      l_i'(\psi^L_p)^i(f)
    \end{align*}
    It follows that $\chi_\mathrm{min}(\psi^L_p)|\sum_{i=0}^r l_i' Y^i$. But since
    $l_r=1$, this is only possible if $l_i'=0$ for all $i$.\\

    Let $Q\in C[Y]$ be such that $NQ=\chi_{\mathrm{min}}(\psi^L_p)$. If $\deg Q=0$ then up
    to a multiplicative constant, this means that $\chi_\mathrm{min}(\psi^L_p)=N$. In
    particular, $N(\psi^L_p)(1)=N(\partial^p)\mod L=0$. It follows that $L$
    is a divisor of $N(\partial^p)$ and
    $\mathrm{GCRD}(N(\partial^p),L)=L$.
    If $\deg Q\neq 0$ then $\ker Q(\psi^L_p)$ is a proper submodule of
    $\Dcal_L$ therefore there exists $L_*\neq 1$ a divisor of $L$ such that
    $\ker Q(\psi^L_p)=\Dcal_L L_*$. But then $\ker
    Q(\psi^L_p)=N(\psi^L_p)(\Dcal_L)=\nicefrac{K\langle\partial\rangle
    L+K\langle\partial\rangle
    N(\partial^p)}{K\langle\partial\rangle}=\Dcal_L\mathrm{GCRD}(L,N(\partial^p))$.

    It follows that $\mathrm{GCRD}(L,N(\partial^p))=L_*\neq 1$.
  \end{proof}

  For the rest
  of this section we suppose that $N$ is fixed.
  \begin{Notation}\label{notation_K_N}
    We denote by $C_N:=\nicefrac{C[Y]}{N(Y)}$ the splitting field of $N$ over
    $C$ and by $y_N$ the image of $Y$ in $C_N$.
    We also set $K_N=K[y_N]$.
  \end{Notation}
  \begin{Proposition}\label{properties_sec_1}
    \begin{enumerate}[label=\roman*)]
      \item For any $f\in K$, $f^{(p)}=0$.
      \item $K\langle\partial\rangle$ is a free algebra of dimension $p^2$
        over its center $C[\partial^p]$.
      \item $\Dcal_{N(\partial^p)}$ is a central simple $C_N$-algebra of
        dimension $p^2$ (by identifying $C_N:=\nicefrac{C[Y]}{N(Y)}$ with
        $\nicefrac{C[\partial^p]}{N(\partial^p)}\subset\Dcal_{N(\partial^p)}$).
      \item $\Dcal_{N(\partial^p)}$ is either a division algebra or is
        isomorphic to $M_p(C_N)$.
      \item If $N(\partial^p)$ is a division algebra then $N(\partial^p)$
        is irreducible. If $\Dcal_{N(\partial^p)}$ is isomorphic to
        $M_p(C_N)$ then all irreducible divisors of $N(\partial^p)$ are of
        order $\deg(N)$.
    \end{enumerate}
  \end{Proposition}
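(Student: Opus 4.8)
The plan is to establish the five items essentially in order, with the later ones depending on the earlier.

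For (i), I would use the two parts of Hypothesis \ref{hypothesis_x}. Pick $x\in K$ with $x'=1$. Since $[K:C]=p$ and $x\notin C$ (because $x'=1\neq 0$), the minimal polynomial of $x$ over $C$ has degree $p$, so $K=C(x)=C[x]$, and $1,x,\dots,x^{p-1}$ is a $C$-basis. As $C$ consists of constants, it suffices to check $f^{(p)}=0$ on this basis, i.e.\ that $(x^j)^{(p)}=0$ for $0\le j<p$; this is an elementary induction using $\partial(x^j)=jx^{j-1}$ and the fact that in characteristic $p$ the $p$-th iterate of $\partial$ kills any $C$-polynomial in $x$ of degree $<p$ (indeed of any degree, but degree $<p$ is all we need). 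Alternatively one invokes the classical fact that $\partial^p$ is again a derivation on $K$ which vanishes on $x$ and on $C$, hence on $K=C[x]$.

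For (ii), the commutation rule gives $\partial^p a = a\partial^p + a^{(p)} = a\partial^p$ for all $a\in K$ by (i), so $\partial^p$ is central; $C\subseteq K$ is central since $C$ is the field of constants, hence $C[\partial^p]$ lies in the center. A standard degree/basis argument then shows the monomials $x^i\partial^j$ with $0\le i,j<p$ form a free $C[\partial^p]$-basis of $K\langle\partial\rangle$ (using $K=C[x]$ and the division-with-remainder on the right by $\partial^p$), giving dimension $p^2$; and one checks these are exactly the elements of the center by writing a central element in this basis and imposing commutation with $x$ and with $\partial$. For (iii), reducing modulo the central element $N(\partial^p)$ turns $C[\partial^p]$ into $C_N=\nicefrac{C[\partial^p]}{N(\partial^p)}$, which is a field since $N$ is irreducible; so $\Dcal_{N(\partial^p)}$ is a $C_N$-algebra of dimension $p^2$. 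To see it is central simple, note first that $C_N$ is visibly in its center; simplicity follows because two-sided ideals of $\Dcal_{N(\partial^p)}$ correspond to two-sided ideals of $K\langle\partial\rangle$ containing $N(\partial^p)$, which by (ii) correspond to ideals of $C[\partial^p]$ containing $N(\partial^p)$, and there are none proper since $N$ is irreducible; centrality of $C_N$ (no larger center) then follows from simplicity plus the dimension count, or directly, since a larger center would have to meet $K\langle\partial\rangle$ outside $C[\partial^p]$.

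Items (iv) and (v) are then formal consequences of the structure theory of central simple algebras. By Wedderburn, $\Dcal_{N(\partial^p)}\cong M_r(D)$ for a division algebra $D$ over $C_N$ with $r^2\dim_{C_N}D=p^2$; since $p$ is prime, either $r=1$ (division algebra case) or $r=p$ and $D=C_N$, i.e.\ $\Dcal_{N(\partial^p)}\cong M_p(C_N)$. This gives (iv). For (v): a monic right divisor $L_*$ of $N(\partial^p)$ is proper and nontrivial iff the corresponding submodule $\Dcal_{N(\partial^p)}L_*$ of $\Dcal_{N(\partial^p)}$ (via the Proposition above) is proper and nonzero, which forces $\Dcal_{N(\partial^p)}$ to have proper nonzero left ideals and hence not be a division algebra — so in the division algebra case $N(\partial^p)$ is irreducible. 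In the split case $\Dcal_{N(\partial^p)}\cong M_p(C_N)$, the simple left modules are all isomorphic (the minimal column ideal, of $C_N$-dimension $p\deg N$), and an irreducible divisor corresponds to a maximal left ideal, i.e.\ to $\Dcal_{N(\partial^p)}/(\text{simple module})$, whose $C_N$-dimension is $p^2\deg N - p\deg N$; comparing with the $C$-dimension $p\cdot\operatorname{ord}(L_*)$ of $\Dcal_{N(\partial^p)}L_*$ — or more simply, counting: $\dim_C \Dcal_{N(\partial^p)} = p^2\deg N$ and a simple submodule has $\dim_C = p\deg N$, whereas $\dim_C \Dcal_{L_*} = p\operatorname{ord}(L_*)$ — yields $\operatorname{ord}(L_*)=\deg N$ for every irreducible divisor.

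The only genuinely delicate point is the central simplicity in (iii): one must be careful that the correspondence between two-sided ideals of the quotient and ideals of $C[\partial^p]$ is exact, i.e.\ that every two-sided ideal of $K\langle\partial\rangle$ is generated by its intersection with the center $C[\partial^p]$ — this uses that $K\langle\partial\rangle$ is Azumaya (a free module of rank $p^2$ over its center whose fibres are central simple, as can be checked by reduction modulo each maximal ideal of $C[\partial^p]$ via the same argument applied over the residue field). Everything else is bookkeeping with dimensions and the Wedderburn–Artin theorem.
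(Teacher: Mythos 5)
Your proposal is correct in substance, but it takes a genuinely different route from the paper: the paper simply cites van der Put for items (i)--(iv) (\cite{Put95}, Lemmas~1.1 and 1.2 and Corollary~1.3) and only proves (v) itself, by observing that $\Dcal_L$ is a simple $K\langle\partial\rangle$-module, invoking van der Put's classification of such modules (his Proposition~1.7.1, giving $\Dcal_L\simeq I(N)$ with $\dim_{C_N}\Dcal_L=p$), and then running the same dimension count $\mathrm{ord}(L)=\dim_K\Dcal_L=\frac{[C_N:C]}{[K:C]}\dim_{C_N}\Dcal_L=\deg(N)$ that you run. Your treatment is instead self-contained: the derivation $\partial^p$ killing $K=C[x]$ for (i), the basis $x^i\partial^j$ and commutators with $x$ and $\partial$ for (ii), Wedderburn--Artin with $p$ prime for (iv), and, for (v), the fact that in the split case every simple module over $M_p(C_N)$ has $C_N$-dimension $p$ — which replaces the citation of Proposition~1.7.1 by elementary matrix-algebra theory. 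What your route buys is independence from \cite{Put95}; what the paper's buys is brevity and consistency with the rest of the text, which is built on van der Put's classification anyway. Two bookkeeping slips in your (v): the minimal column ideal of $M_p(C_N)$ has $C_N$-dimension $p$ (its $C$-dimension is $p\deg N$), and a maximal left ideal has $C_N$-dimension $p^2-p$; your final count phrased in terms of $\dim_C$ is the correct one and the conclusion stands.

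The one thin spot is simplicity in (iii). As you note, freeness over the center does not by itself give the correspondence between two-sided ideals of $K\langle\partial\rangle$ containing $N(\partial^p)$ and ideals of $C[\partial^p]$ containing it; but your justification of the Azumaya property — ``the fibres are central simple, by the same argument applied over the residue field'' — is circular as stated, because the fibre at the maximal ideal $(N(\partial^p))$ is precisely $\Dcal_{N(\partial^p)}$, i.e.\ the algebra whose central simplicity is the thing to be proven. The repair is to prove simplicity of the fibre directly, after which the detour through Azumaya is unnecessary: if $I$ is a nonzero two-sided ideal of $\Dcal_{N(\partial^p)}$, write a nonzero $a\in I$ as $\sum_{j<p}a_j\partial^j$ with $a_j\in K\otimes_C C_N$ and iterate $\mathrm{ad}_x$; since $[\partial^j,x]=j\partial^{j-1}$ with $0<j<p$, this lowers the $\partial$-degree without killing the top coefficient, so $I$ meets $K\otimes_C C_N$ nontrivially. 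Because $N$ is separable and $K/C$ is purely inseparable of degree $p$ (so $N$ stays irreducible over $K$), $K\otimes_C C_N=K_N$ is a field, hence $I$ contains a unit and $I=\Dcal_{N(\partial^p)}$. Centrality of $C_N$ is then exactly your commutator computation carried out over $C_N$. With that patch your argument is complete.
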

  \begin{proof}
      $(i)$ and $(ii)$ are done in~\cite[Lemma~1.1]{Put95}. $(iii)$
      is~\cite[Lemma~1.2]{Put95}. Wedderburn's
      Theorem~\cite[Theorem~2.1.3]{GiSz06} states that, as a central simple
      $C_N$-algebra, $\Dcal_{N(\partial^p)}$ must be of the form $M_n(D)$
      where $D$ is itself a central simple $C_N$-division algebra. In this
      case we must have
      $\dim_{C_N}\Dcal_{N(\partial^p)}=p^2=n^2\dim_{C_N}D$. Since $p$ is a
      prime number, we either have $n=p$ in which case $D$ must be equal to
      $C_N$ or $n=1$ in which case $\Dcal_{N(\partial^p)}$ is a division
      algebra, which proves $(iv)$.\\
      Suppose that $\Dcal_{N(\partial^p)}$ is a division algebra. Then in
      particular it has no nontrivial zero divisor. Thus $N(\partial^p)$
      has no nontrivial divisor.\\
      Suppose now that $\Dcal_{N(\partial^p)}$ is isomorphic to
      $M_p(C_N)$ and let $L$ be an irreducible divisor of $N(\partial^p)$.
      Then $\Dcal_L$ is a simple $K\langle\partial\rangle$-module. We can
      apply~\cite[Proposition~1.7.1]{Put95} to $\Dcal_L$. Since $\Dcal_L$
      is simple we must have $\Dcal_L\simeq I(N)$ (with the notations of~\cite[Proposition~1.7.1]{Put95}). 
      In particular, $\dim_{C_N}\Dcal_L=p$ and
      $\mathrm{ord}(L)=\dim_K\Dcal_L=\frac{[C_N:C]}{[K:C]}\dim_{C_N}\Dcal_L=\deg(N)$.
  \end{proof}
  \begin{Remarque}
    $(iv)$ was also done in~\cite[Corollary~1.3]{Put95}
  \end{Remarque}
  \begin{Lemme}
    If $N$ is not separable over $C$ then $\Dcal_{N(\partial^p)}\simeq
    M_p(C_N)$.
  \end{Lemme}
  \begin{proof}
    If $N$ is not separable then there exists $N_*\in K[Y]$ such that
    $N(Y)=N_*^p(Y)$. Therefore $N_*(\partial^p)$ is a non trivial divisor
    of $N(\partial^p)$ (because $\partial^p$ commutes with the elements of
    $K$ therefore $(N_*(\partial^p))^p=N_*^p(\partial^p)=N(\partial^p)$) so
    $\Dcal_{N(\partial^p)}$ cannot be a division algebra.
  \end{proof}
  We now show how the $p$-Riccati equation appears when $N$ is a polynomial
  of degree $1$.
  \begin{Proposition}\label{p-riccati_order_one}
    Let $a\in C$. If $\partial^p-a$ is not irreducible then its monic irreducible
    divisors are the operators of the form $\partial-f$ with $f$ verifying
    \[f^{(p-1)}+f^p=a\]
  \end{Proposition}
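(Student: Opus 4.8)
The plan is to combine the structural information of Proposition~\ref{properties_sec_1} with an explicit computation of right division by a first-order operator. First I would apply Proposition~\ref{properties_sec_1} to $N=Y-a$, which is trivially separable and irreducible over $C$, so that $C_N=C$ and $K_N=K$. Since $\partial^p-a=N(\partial^p)$ is assumed not irreducible, parts~(iv)--(v) give $\Dcal_{\partial^p-a}\simeq M_p(C)$ and that every irreducible (right) divisor of $\partial^p-a$ has order $\deg N=1$. A monic operator of order $1$ is $\partial-f$ for a unique $f\in K$, and is automatically irreducible; any such operator dividing $\partial^p-a$ does so properly since $p\geq 2$. Hence it only remains to determine for which $f\in K$ the operator $\partial-f$ is a right divisor of $\partial^p-a$.

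Second, I would translate right divisibility into the module $\Dcal_{\partial-f}$, which is a one-dimensional $K$-vector space generated by the class $\overline 1$ of $1$, on which $\partial\cdot\overline 1=f\,\overline 1$. Setting $g_0=1$ and $g_{k+1}=g_k'+f g_k$, an immediate induction gives $\partial^k\cdot\overline 1=g_k\,\overline 1$, hence $(\partial^p-a)\cdot\overline 1=(g_p-a)\,\overline 1$. By the correspondence recalled before Notation~\ref{notation_K_N}, $\partial-f$ is a right divisor of $\partial^p-a$ if and only if $\overline{\partial^p-a}=0$ in $\Dcal_{\partial-f}$, i.e. if and only if $g_p=a$. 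Note that $g_k$ is exactly the $k$-fold iterate of the operator $D_f\colon h\mapsto h'+fh$ applied to the function $1$, that is, the effect of $(\partial+f)^k$ acting on $1$.

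Third --- and this is the crux --- I would identify $g_p$ with $f^{(p-1)}+f^p$. The clean route is to establish the operator identity $(\partial+f)^p=\partial^p+f^{(p-1)}+f^p$ in $K\langle\partial\rangle$: applying both sides to the function $1$ and using $\partial^p\cdot 1=0$ (Proposition~\ref{properties_sec_1}(i)) then yields $g_p=f^{(p-1)}+f^p$, and the Proposition follows by combining with the previous step. The operator identity itself comes from Jacobson's formula for $p$-th powers in an associative $\Fbb_p$-algebra applied to the pair $(f,\partial)$: one checks that $\mathrm{ad}(fT+\partial)^{j}(f)=f^{(j)}$ for all $j\geq 1$ --- the formal parameter $T$ never survives, since $f$ commutes with $f$ and $\mathrm{ad}(fT+\partial)$ acts as the derivation $\partial$ on every element of $K$ --- so that $\mathrm{ad}(fT+\partial)^{p-1}(f)=f^{(p-1)}$ is constant in $T$, and Jacobson's formula collapses to $(\partial+f)^p=f^p+\partial^p+f^{(p-1)}$. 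As an alternative one can prove by induction that $(\partial+f)^k=\sum_{j=0}^{k}\binom{k}{j}g_j\,\partial^{k-j}$; reducing modulo $p$ gives $(\partial+f)^p=\partial^p+g_p$, but one then still needs the separate verification $g_p=f^{(p-1)}+f^p$, for which the $\mathrm{ad}$ computation above (or a direct reference to this classical identity) is again the quickest.

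I expect the identification of $g_p$ to be the only genuinely non-formal point; the reduction to order-$1$ divisors is pure bookkeeping with Proposition~\ref{properties_sec_1}, and the passage from right divisibility to the scalar equation $g_p=a$ is just the computation in the one-dimensional module $\Dcal_{\partial-f}$. One small thing to be careful about in the write-up is to state clearly that, once the order-$1$ reduction is in place, "is a right divisor" and "$f^{(p-1)}+f^p=a$" are genuinely equivalent, so that the set of monic irreducible divisors is described exactly, not merely contained in, $\{\partial-f : f^{(p-1)}+f^p=a\}$.
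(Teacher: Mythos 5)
Your proposal is correct, and it takes essentially the same first step as the paper --- applying Proposition~\ref{properties_sec_1}(iv)--(v) with $N=Y-a$ to reduce to monic divisors of order~$1$ --- but it diverges in how the scalar equation is extracted. The paper considers the $K$-linear map $\psi^L_p\colon M\mapsto\partial^p\cdot M$ on $\Dcal_L$, notes it is multiplication by some $g$, shows $g'=0$ so $g\in C$, uses the characteristic polynomial to get $\partial^p\equiv g\pmod L$, and deduces $g=a$ because $L$ divides both $\partial^p-a$ and $\partial^p-g$; the identity $g=b^{(p-1)}+b^p$ is then \emph{cited} from van der Put's Lemma~1.4.2. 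You instead compute directly in $\Dcal_{\partial-f}\simeq K\cdot\overline 1$ via the recursion $g_0=1$, $g_{k+1}=g_k'+fg_k$, obtain the clean equivalence ``$\partial-f$ right-divides $\partial^p-a$ iff $g_p=a$'' without the detour through $\psi^L_p$, and \emph{derive} $g_p=f^{(p-1)}+f^p$ from the operator identity $(\partial+f)^p=\partial^p+f^{(p-1)}+f^p$, which you in turn get from Jacobson's formula by the observation that $\mathrm{ad}(tf+\partial)^j(f)=f^{(j)}$ kills all the $t$-dependence so that only $s_1=f^{(p-1)}$ survives. Your version is more self-contained (no external citation for the key identity) and the divisibility criterion $g_p=a$ gives both inclusions of the set equality at once, which is a small gain in transparency; the paper's version is shorter on the page precisely because it delegates the computation to van der Put. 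Both proofs are sound.
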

  \begin{Remarque}
    It should be noted that as $\partial^p-a$ is a central operator in
    $K\langle\partial\rangle$, any of its right divisor is also a left
    divisor and vice-versa. Thus, for central operators we will not need to
    specify the ``side'' of the divisors. 
  \end{Remarque}
  \begin{proof}
    Let us suppose that $\partial^p-a$ is not irreducible. Then
    $\Dcal_{\partial^p-a}$ is isomorphic to $M_p(C)$. Let $L$ be a
    monic irreducible divisor of $\partial^p-a$. From
    Proposition~\ref{properties_sec_1} (v) we know that $L$ is of order $1$
    so it is of the form $\partial-b$. We consider the $K$-linear
    endomorphism $\psi^L_p$ of $\Dcal_L$ given by $\psi^L_p:M\mapsto \partial^p\cdot M$. Since
    $\partial^p$ is central in $K\langle\partial\rangle$, this is indeed a
    $K$-linear map. Furthermore, $\Dcal_L$ is isomorphic to $K$ as a
    $K$-vector space so there exists $g\in K$
    such that $\psi^L_p$ is the multiplication by $g$. Then we have
    \begin{align*}
      \partial g &\equiv\partial\cdot \partial^p\pmod L\\
      &\equiv\partial^p\cdot \partial\pmod L\\
      &\equiv g\partial\pmod L
    \end{align*}
    Thus $\partial g-g\partial=0\pmod L$. Since $\partial
    g-g\partial=g'$, it follows that $g'=0$ and $g\in C$. Moreover, $Y-g$ is the
    characteristic polynomial of $\psi^L_p$, so $\psi^L_p-g\Id=0$. In
    particular $\partial^p-g=0\pmod L$. Thus $L$ is a common divisor of both
    $\partial^p-a$ and $\partial^p-g$. This is possible only if $g=a$.
    According to~\cite[Lemma~1.4.2]{Put95}, $g=b^{(p-1)}+b^p$.\\
    Conversely if $L$ is of the form $\partial-b$ with $b^{(p-1)}+b^p=a$ then
    from what precedes it is a divisor of
    $\partial^p-b^{(p-1)}-b^p=\partial^p-a$.
  \end{proof}
  It follows that $\partial^p-a$ is irreducible in
  $K\langle\partial\rangle$ if and only if the equation
  $f^{(p-1)}+f^p=a$ has no solution in $K$. We now extend this result to
  separable polynomials $N$ of higher degree.
  \begin{Proposition}\label{prop_split_riccati}
    We assume $N$ to be separable over $C$.
    \begin{enumerate}[label=\roman*)]
      \item $K_N$ verifies Hypothesis~\ref{hypothesis_x} and
        $[K_N:K]=\deg(N)$.
      \item The canonical morphism $\Dcal_{N(\partial^p)}\rightarrow
        \nicefrac{K_N\langle\partial\rangle}{(\partial^p-y_N)}$ is an
        isomorphism of $C_N$-algebras.
      \item $\Dcal_{N(\partial^p)}$ is isomorphic to $M_p(C_N)$ if and only
        if
        the $p$-Riccati equation relative to $N$,
        \[f^{(p-1)}+f^p=y_N,\]has a solution in $K_N$.
    \end{enumerate}
  \end{Proposition}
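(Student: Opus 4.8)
The plan is to prove the three parts essentially in sequence, using the degree-one case (Proposition~\ref{p-riccati_order_one}) together with the structure theory from Proposition~\ref{properties_sec_1}.

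\textbf{Part (i).} First I would check that $K_N = K[y_N]$ still satisfies Hypothesis~\ref{hypothesis_x}. The field of constants of $K_N$ should be $C_N = C[y_N]$: since $N$ is separable and irreducible over $C$, the extension $C_N/C$ is separable of degree $\deg(N)$, and because $K/C$ is purely a "derivation" extension with $[K:C]=p$, one gets $K_N = K \otimes_C C_N$ as rings; separability of $N$ over $C$ ensures $K_N$ is a field (the polynomial $N$ stays irreducible over $K$ because $[K:C]=p$ is coprime to considerations at play — more carefully, $C$ is algebraically closed in $K$ since $[K:C]=p$ forces $K = C(x)$ with $x$ transcendental-like... actually one argues $C$ is relatively algebraically closed in $K$ because any element of $K$ algebraic over $C$ would generate a subfield of $K$, and $[K:C]=p$ is prime, so it is $C$ or $K$; it cannot be $K$ as $K$ is not algebraic over $C$). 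Then the derivation $\partial$ extends uniquely to $K_N$ by $\partial(y_N)=0$, its constant field is exactly $C_N$, $[K_N:C_N] = [K:C] = p$, and the same $x$ with $\partial(x)=1$ works. Finally $[K_N:K] = [C_N:C] = \deg(N)$.

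\textbf{Part (ii).} The canonical map sends $\partial \mapsto \partial$ and is $C_N$-linear; it is well-defined because $N(\partial^p) \mapsto N(y_N) = 0$ in $\nicefrac{K_N\langle\partial\rangle}{(\partial^p - y_N)}$. Both sides are central simple $C_N$-algebras of dimension $p^2$: the left side by Proposition~\ref{properties_sec_1}(iii), and the right side because $K_N$ satisfies Hypothesis~\ref{hypothesis_x} by part (i), so applying Proposition~\ref{properties_sec_1}(iii) to the field $K_N$ and the degree-one polynomial $Y - y_N \in C_N[Y]$ shows $\nicefrac{K_N\langle\partial\rangle}{(\partial^p-y_N)}$ is a central simple $C_N$-algebra of dimension $p^2$. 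A nonzero homomorphism between central simple algebras of the same finite dimension over the same field is an isomorphism (the kernel is a two-sided ideal, hence $0$ by simplicity, and equality of dimensions gives surjectivity); so it remains only to see the map is nonzero, which is clear since $1 \mapsto 1$.

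\textbf{Part (iii).} Combine (ii) with Proposition~\ref{properties_sec_1}(iv) and Proposition~\ref{p-riccati_order_one} applied over $K_N$. By (ii), $\Dcal_{N(\partial^p)} \cong \nicefrac{K_N\langle\partial\rangle}{(\partial^p - y_N)} = \Dcal_{\partial^p - y_N}$ as a module/algebra over $K_N$. By Proposition~\ref{properties_sec_1}(iv) applied to $K_N$ and $Y - y_N$, this algebra is either a division algebra or isomorphic to $M_p(C_N)$, and these two cases are mutually exclusive. By Proposition~\ref{p-riccati_order_one} over $K_N$, $\partial^p - y_N$ is reducible in $K_N\langle\partial\rangle$ (equivalently $\Dcal_{\partial^p-y_N} \not\cong$ division algebra, equivalently $\cong M_p(C_N)$) if and only if there exists $f \in K_N$ with $f^{(p-1)} + f^p = y_N$, i.e. a solution of the $p$-Riccati equation relative to $N$ in $K_N$. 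Chaining these equivalences, and noting that the isomorphism in (ii) is $C_N$-linear so $M_p(C_N)$ on one side matches $M_p(C_N)$ on the other, yields the claim.

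The step I expect to require the most care is part (i): verifying that $C_N$ is exactly the constant field of $K_N$ (no "new" constants appear) and that $K_N$ is genuinely a field rather than a product of fields. Separability of $N$ is what makes $K \otimes_C C_N$ reduced, and relative algebraic closedness of $C$ in $K$ (a consequence of $[K:C]=p$ being prime together with $K$ not being algebraic over $C$) is what keeps $N$ irreducible over $K$; once those two facts are nailed down, the rest of (i), (ii), (iii) is formal.
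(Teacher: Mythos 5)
Parts (ii) and (iii) of your argument are correct. For (ii) you replace the paper's explicit injectivity computation (writing a lift of a kernel element as $\sum_{i,j} l_{i,j}(\partial^p)x^i\partial^j$ and showing $N$ divides every $l_{i,j}$) by the observation that the source $\Dcal_{N(\partial^p)}$ is a simple algebra, so the nonzero canonical morphism has trivial kernel, and equality of $C_N$-dimensions (both $p^2$, using (i)) forces surjectivity; this is a legitimate and slightly slicker route. Your (iii) follows the same path as the paper: reduce via (ii) to the existence of an order-one irreducible divisor of $\partial^p-y_N$ over $K_N$ and invoke Proposition~\ref{p-riccati_order_one}.

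Part (i), however, on which (ii) and (iii) both rest, contains a genuine error. You justify that $N$ stays irreducible over $K$ (equivalently that $K\otimes_C C_N$ is a field and $[K_N:C_N]=p$) by claiming $C$ is relatively algebraically closed in $K$ ``because $K$ is not algebraic over $C$''. This is false: Hypothesis~\ref{hypothesis_x} gives $[K:C]=p<\infty$, so $K$ is a finite, hence algebraic, extension of $C$, and the algebraic closure of $C$ in $K$ is $K$ itself (think of $K=\Fbb_p(x)$, $C=\Fbb_p(x^p)$). The fact you actually need is that $K/C$ is \emph{purely inseparable}: since $(x^p)'=0$ one has $x^p\in C$, primality of $[K:C]=p$ gives $K=C[x]$, and the minimal polynomial of $x$ over $C$ is $Y^p-x^p$. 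Pure inseparability of $K/C$ against separability of $C_N/C$ is what yields linear disjointness, hence $[K_N:K]=\deg N$ and $[K_N:C_N]=p$; this is exactly the paper's argument (since $N$ is separable, $x\notin C_N$, and the minimal polynomial of $x$ over $C_N$ remains inseparable, hence of degree $p$). Once $[K_N:C_N]=p$ is established, the point you flagged as delicate, that no new constants appear, follows from primality: the constant field of $K_N$ contains $C_N$ (as $y_N'=0$ by separability of $N$), does not contain $x$, and sits inside an extension of prime degree $p$, so it equals $C_N$. Your proposal never identifies the inseparability of $K/C$, and without it the irreducibility-over-$K$ step does not go through.
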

  \begin{proof}
    \begin{enumerate}[label=\roman*)]
      \item Let $x\in K$ be such that $x'=1$. Since $[K:C]=p$, we have
        $K=C[x]$. Furthermore, noticing that $x^p\in C$, we find that $Y^p-x^p$ is the minimal
        polynomial of $x$ over $C$. In particular, since $N$ is supposed to
        be separable, $x\notin C_N$. Thus we also have $K_N=C_N[x]$.
        Furthermore, since the minimal polynomial of $x$ over $C$ is
        inseparable, that must also be the case of its minimal polynomial
        over $C_N$. Thus we have $[K_N:C_N]=p$.
        Furthermore, we have $[K_N:C]=[K_N:C_N][C_N:C]=[K_N:K][K:C]$ so 
        $[K_N:K]=\frac{p\deg(N)}{p}=\deg(N)$.
      \item   Let
        $\varphi_N:\Dcal_{N(\partial^p)}\rightarrow\nicefrac{K_N\langle\partial\rangle}{(\partial^p-
        y_N)}$ be the canonical morphism. We first show that $\varphi_N$ is injective.\\
        Let $\overbar{L}\in \ker(\varphi_N)$ and $L\in
        K\langle\partial\rangle$ be a lift of $\overbar{L}$.
  We can write $L=\sum_{0\leq i,j\leq p-1}l_{i,j}(\partial^p)x^i\partial^j$ with the
        $l_{i,j}\in C[Y]$. Since $K_N$ verifies
        Hypothesis~\ref{hypothesis_x}, we deduce that the
  family $(x^i\partial^j)_{0\leq i,j\leq p-1}$ is a $C_N$-basis of
  $\nicefrac{K_N\langle\partial\rangle}{\partial^p-y_N}$. This means that
  for all $i,j\in\ldbrack0;p-1\rdbrack$, $Y-y_N$ divides $l_{i,j}$.\\
  Thus $y_N$ is a root of all $l_{i,j}$. But since the $l_{i,j}$ all have
  coefficients in $C$ and $N$ is the minimal polynomial of $y_N$ over $C$,
  it follows that $N$ divides all $l_{i,j}$.\\
  Thus the ideal generated by $N(\partial^p)$ is precisely the kernel of
  the considered map. It follows that $\varphi_N$ is injective.
  Observe finally that $\dim_{C_N}(\Dcal_{N(\partial^p)})=p^2$ and
  $\dim_{C_N}(\nicefrac{K_N\langle\partial\rangle}{(\partial^p-y_N)})=
  p \cdot [K_N:C_N]=p^2$.
  It follows that $\varphi_N$ is also surjective by equality of dimensions.
\item $\Dcal_{N(\partial^p)}$ is isomorphic to $M_p(C_N)$ if and only if
  $\nicefrac{K_N\langle\partial\rangle}{(\partial^p-y_N)}$ is isomorphic to
        $M_p(C_N)$ which to say that $\partial^p-y_N$ admits an irreducible
        divisor of order $1$ in $K_N\langle\partial\rangle$. From
        Proposition~\ref{p-riccati_order_one}, this is equivalent to the
        $p$-Riccati equation with respect to $N$ having a solution in
        $K_N$.
    \end{enumerate}
  \end{proof}

  \begin{Notation}
    If $N\in C[Y]$ is an irreducible separable polynomial, we denote by
    $S_N$ the set of elements $f\in K_N$ verifying
    \[f^{(p-1)}+f^p=y_N.\]
  \end{Notation}
  \begin{Remarque}
    The equation $f^{(p-1)}+f^p=y_N$ can be seen as a norm equation on
    $K_N\langle\partial\rangle$. Indeed, as
    $\nicefrac{K_N\langle\partial\rangle}{\partial^p-y_N}$ is a
    central simple $C_N$-algebra, it is equipped with a uniquely defined
    reduced norm
    $\Ncal:\nicefrac{K_N\langle\partial\rangle}{\partial^p-y_N}\rightarrow C_N$. Saying that
    $f\in S_N$ is equivalent to saying that $\Ncal(\partial-f)=0$.\\

    In fact it is better to say that $K_N\langle\partial\rangle$ is an
    Azumaya algebra over $C_N[\partial^p]$ (which is locally isomorphic to a
    central simple algebra for the Zariski topology) and thus itself
    equipped with a reduced norm $\Ncal$. The $p$-Riccati equation
    relative to $N$ is thus the norm equation 
    \[\Ncal(\partial-f)=\partial^p-y_N.\]
  \end{Remarque}
  \begin{Lemme}\label{affine_S_N}
    Let $N\in C[Y]$ be an irreducible separable polynomial and $f\in S_N$.
    Then $S_N=\{f-\frac{g'}{g}|g\in K_N\}$.
  \end{Lemme}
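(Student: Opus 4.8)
The plan is to exploit that $\mathcal{L}\colon h\mapsto h^{(p-1)}+h^p$ is an additive (indeed $\Fbb_p$-linear) self-map of $K_N$: the term $h^{(p-1)}$ is linear in $h$ and $h\mapsto h^p$ is the Frobenius. Since $f$ is one solution, it follows that $S_N=f+\ker\mathcal{L}$, so the assertion reduces to the identity $\ker\mathcal{L}=\{-g'/g \mid g\in K_N\setminus\{0\}\}$. The translation between this additive equation and logarithmic derivatives is provided by the formula $(\partial+h)^p=\partial^p+h^{(p-1)}+h^p$ in $K_N\langle\partial\rangle$, which is \cite[Lemma~1.4.2]{Put95} and applies because $K_N$ satisfies Hypothesis~\ref{hypothesis_x}; it says exactly that $h\in\ker\mathcal{L}$ if and only if $(\partial+h)^p=\partial^p$.

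For the inclusion $\{-g'/g\}\subseteq\ker\mathcal{L}$, I would rewrite the commutation rule $\partial g=g\partial+g'$ as $g^{-1}\partial g=\partial+g'/g$. Since $\partial^p$ is central in $K_N\langle\partial\rangle$ (Proposition~\ref{properties_sec_1}(ii) applied to $K_N$, or directly: $\binom{p}{k}\equiv 0$ for $0<k<p$ and $g^{(p)}=0$), raising the previous identity to the $p$th power gives $(\partial+g'/g)^p=g^{-1}\partial^p g=\partial^p$, whence $g'/g$, and hence $-g'/g$, lies in $\ker\mathcal{L}$.

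For the reverse inclusion — the crux — take $h$ with $h^{(p-1)}+h^p=0$, so $(\partial+h)^p=\partial^p$ in $K_N\langle\partial\rangle$. Let $C_N$ be the field of constants of $K_N$, over which $\dim K_N=p$, and consider the natural action of $K_N\langle\partial\rangle$ on $K_N$, a ring homomorphism into $\mathrm{End}_{C_N}(K_N)$. Under it $\partial+h$ becomes the $C_N$-linear endomorphism $\phi\mapsto\phi'+h\phi$, while $\partial^p$ becomes the zero map, since $u^{(p)}=0$ for every $u\in K_N$ by Proposition~\ref{properties_sec_1}(i) applied to $K_N$. Therefore $\partial+h$ acts as a nilpotent endomorphism of the nonzero $C_N$-vector space $K_N$, so its kernel is nonzero: there is $g\in K_N\setminus\{0\}$ with $g'+hg=0$, i.e.\ $h=-g'/g$. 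Together with the previous paragraph this gives $\ker\mathcal{L}=\{-g'/g \mid g\in K_N\setminus\{0\}\}$, hence $S_N=f+\ker\mathcal{L}=\{f-g'/g \mid g\in K_N\setminus\{0\}\}$.

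The only genuinely non-formal step is the last one: the identity $(\partial+h)^p=\partial^p$ is a statement in the operator algebra, and the point is that it forces $\partial+h$ to act \emph{nilpotently} on $K_N$ itself (because the central operator $\partial^p$ acts there by zero), and a nilpotent operator on a nonzero vector space automatically has a kernel vector — which is the solution $g$ we want. Everything else is routine manipulation with the commutation rule and with the additivity of $\mathcal{L}$.
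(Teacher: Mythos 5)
Your proposal is correct and follows essentially the same route as the paper: reduce via additivity of $h\mapsto h^{(p-1)}+h^p$ to identifying the kernel, then translate through the operator identity $(\partial+h)^p=\partial^p+h^{(p-1)}+h^p$ (van der Put's Lemma~1.4.2), and observe that $\partial^p$ annihilates all of $K_N$ so that $\partial+h$ has a nonzero kernel vector $g$ giving $h=-g'/g$. The paper phrases the last step as ``$\partial-(h-f)$ divides $\partial^p$, hence has a solution $g\in K_N$ and is its minimal vanishing operator,'' while you phrase it as ``$\partial+h$ is nilpotent on $K_N$, hence has a kernel vector''; these are the same observation, and your conjugation argument $g^{-1}\partial g=\partial+g'/g$ for the easy inclusion is a clean alternative to the paper's appeal to minimal vanishing operators.
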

  \begin{proof}
    Let $h$ be an element of $K_N$. The element $h$ is in $S_N$ if and 
    only if $h-f$ verifies $(h-f)^{(p-1)}+(h-f)^p=0$, which is the same
    as requiring that
    $\partial-(h-f)$ is a divisor of $\partial^p$. Thus there exists $L_*$
    of order $p-1$ in $K_N\langle\partial\rangle$ such that
    $L_*(\partial-(h-f))=\partial^p$. Both $L_*$ and $\partial-(h-f)$ acts
    as $C_N$-linear maps on $K_N$ by evaluation. Thus
    $\dim_{C_N}\ker(\partial^p)\leq
    \dim_{C_N}\ker(L_*)+\dim_{C_N}\ker(\partial-(h-f))$. But since
    $\dim_{C_N}\ker(\partial^p)=p$ and $\dim_{C_N}\ker(L_*)\leq p-1$ and
    $\dim_{C_N}\ker(\partial-(h-f))\leq 1$ we must in fact have only
    equalities. In particular this means that $\partial-(h-f)$ has a
    solution $g\in K_N$ and is its minimal vanishing operator, that is to
    say is equal to $\partial-\frac{g'}{g}$. This is in fact an equivalence
    as every operator of the form $\partial-\frac{g'}{g}$ has a solution in
    $K_N$ (namely $g$) and is thus a divisor of $\partial^p$. Thus $h\in
    S_N$ if and only if $h-f$ is of the form $\frac{g'}{g}$.
  \end{proof}

  In Section~\ref{section_fact_eff}, we will also see how to effectively use the
  solutions of the $p$-Riccati equation to compute irreducible divisors of
  $N(\partial^p)$. Using those solutions to more generally compute
  irreducible divisors of some $L\in K\langle\partial\rangle$, where $L$ is
  a divisor of $N(\partial^p)$ is harder, in the sense that, as things
  stand, we are not able to avoid a linear dependency in $p$ in the
  ``size'' of the coefficients of the irreducible factor of $L$. The
  precise meaning of this statement will be explained in
  Section~\ref{section_fact_eff}.\par
  Still, from a theorical standpoint, solving the $p$-Riccati equation
  relative to $N$ is
  completely equivalent to being able to factor any divisor of
  $N(\partial^p)$. We show how in the rest of this section. From now on,
  $N$ is supposed separable over $C$ (unless stated otherwise).
  \begin{Proposition}\label{prop_factN(dp)}
    Let $\varphi_N:\Dcal_{N(\partial^p)}\rightarrow
    \nicefrac{K_N\langle\partial\rangle}{\partial^p-y_N}$ denote the
    canonical morphism. As per Proposition~\ref{prop_split_riccati}(ii),
    $\varphi_N$ is an isomorphism. Let $f\in S_N$. For any $f\in S_N$ we
    denote 
    \[\Lcal(f)=\mathrm{GCRD}(N(\partial^p),\varphi_N^{-1}(\partial-f)).\]
    \begin{enumerate}[label=\roman*)]
      \item 
        $f\mapsto \Lcal(f)$ is a bijection between $S_N$ and the proper
        irreducible monic
        divisors of $N(\partial^p)$.
      \item If $S_N\neq\varnothing$, for any $f\in S_N$
        \[N(\partial^p)=\mathrm{LCLM}_{i\in\Fbb_p}\Lcal\big(f+\frac{i}{x}\big).\]
    \end{enumerate}
  \end{Proposition}
  \begin{proof}
    \begin{enumerate}[label=\roman*)]
      \item Let $f\in S_N$. Let us first show that $\Lcal(f)$ is indeed a
        monic irreducible divisor of $N(\partial^p)$. Since $f\in S_N$, we
        know that $\partial-f$ is an irreducible divisor of
        $\partial^p-y_N$ in $K_N\langle\partial\rangle$. In particular this
        means that
        $\nicefrac{K_N\langle\partial\rangle}{K_N\langle\partial\rangle(\partial-f)}$
        is a simple $K_N\langle\partial\rangle$-module. But since 
        $\nicefrac{K_N\langle\partial\rangle}{K_N\langle\partial\rangle(\partial-f)}$
        is actually a
        $\nicefrac{K_N\langle\partial\rangle}{\partial^p-y_N}$-module and
        $K\langle\partial\rangle\twoheadrightarrow\nicefrac{K_N\langle\partial\rangle}{\partial^p-y_N}$
        is surjective, it is actually simple as a
        $K\langle\partial\rangle$-module.
        We claim that
        $\varphi_N^{-1}$ induces an isomorphism of
        $K\langle\partial\rangle$-modules between
        $\nicefrac{K_N\langle\partial\rangle}{K_N\langle\partial\rangle(\partial-f)}$
        and $\Dcal_{\Lcal(f)}$.
        This is not hard to see as
        \[
          \nicefrac{K_N\langle\partial\rangle}{K_N\langle\partial\rangle(\partial-f)}
          \simeq
        \left(\nicefrac{K_N\langle\partial\rangle}{\partial^p-y_N}\right)/
        \left(\nicefrac{K_N\langle\partial\rangle(\partial-f)}{\partial^p-y_N}\right)\]
      and
        \[\Dcal_{\Lcal(f)}\simeq
        \Dcal_{N(\partial^p)}/\Dcal_{N(\partial^p)}\Lcal(f)\]
        Since
        $\varphi_N^{-1}(\nicefrac{K_N\langle\partial\rangle}{\partial^p-y_N})=\Dcal_{N(\partial^p)}$
        it is enough to show that $\varphi_N^{-1}$ maps the left ideal of
        $\nicefrac{K_N\langle\partial\rangle}{\partial^p-y_N}$ generated by
        $\partial-f$ to the left ideal of $\Dcal_{N(\partial^p)}$ generated
        by $\Lcal(f)$. But since $\varphi_N^{-1}$ is a ring isomorphism,
        \begin{align*}
          \varphi_N^{-1}(\nicefrac{K_N\langle\partial\rangle(\partial-f)}{\partial^p-y_N})
          &=\Dcal_{N(\partial^p)}\varphi_N^{-1}(\partial-f)\\
          &=\Dcal_{N(\partial^p)}\mathrm{GCRD}(N(\partial^p),\varphi_N^{-1}(\partial-f))\\
          &=\Dcal_{N(\partial^p)}\Lcal(f)
        \end{align*}
        
        Let now $L$ be a monic irreducible divisor of $N(\partial^p)$. Since
        $\Dcal_L\simeq
        \nicefrac{\Dcal_{N(\partial^p)}}{\Dcal_{N(\partial^p)}L}$ is
        simple, it follows that $\Dcal_{N(\partial^p)}$ is a maximal proper
        submodule. Then
        $\varphi_N(\Dcal_{N(\partial^p)}L)$ is also a maximal proper submodule of
        $\nicefrac{K_N\langle\partial\rangle}{\partial^p-y_N}$, which means
        that there is a unique $f\in S_N$ such that
        $\varphi_N(\Dcal_{N(\partial^p)}L)=\nicefrac{K_N\langle\partial\rangle(\partial-f)}{\partial^p-y_N}$.
        Indeed, the quotient by $\varphi_N(\Dcal_{N(\partial^p)}L)$ is
        simple so it must be generated by an irreducible divisor of
        $N(\partial^p)$. Then $L=\Lcal(f)$. Indeed, as previously stated,
        $\Dcal_{N(\partial^p)}\Lcal(f)=\varphi_N^{-1}(\nicefrac{K_N\langle\partial\rangle(\partial-f)}{\partial^p-
        y_N})=\varphi_N^{-1}(\varphi_N(\Dcal_{N(\partial^p)}L)=\Dcal_{N(\partial^p)}L$.
        Since $\Lcal(f)$ and $L$ are both monic divisors of $N(\partial^p)$
        and generates the same submodule, by
        Proposition~\ref{prop_bij_submodule_divisor}, they must be equal.\\

        Finally, if $f_1,f_2\in S_N$ are such that $\Lcal(f_1)=\Lcal(f_2)$
        then
        $\varphi_N^{-1}(\nicefrac{K_N\langle\partial\rangle(\partial-f_1)}{\partial^p-y_N})=\Dcal_{N(\partial^p)}\Lcal(f_1)=\Dcal_{N(\partial^p)}\Lcal(f_2)=\varphi_N^{-1}(\nicefrac{K_N\langle\partial\rangle(\partial-f_2)}{\partial^p-y_N})$,
        thus $\partial-f_1$ and $\partial-f_2$ generates the same
        submodule. Since there are both monic divisors of $\partial^p-y_N$,
        by Proposition~\ref{prop_bij_submodule_divisor}, we have $f_1=f_2$.
      \item Let $f\in S_N$. Saying that
        $N(\partial^p)=\mathrm{LCLM}_{i\in\Fbb_p}\Lcal\big(f+\frac{i}{x}\big)$
        is by definition the same as saying that $K\langle\partial\rangle
        N(\partial^p)=\bigcap_{i\in\Fbb_p}K\langle\partial\rangle\Lcal\big(f+\frac{i}{x}\big)$.
        This is equivalent to saying that
        \begin{align*}
          &\bigcap_{i\in\Fbb_p}\Dcal_{N(\partial^p)}\Lcal\big(f+\frac{i}{x}\big)=\{0\}\\
          \Leftrightarrow&\varphi_N\bigg(\bigcap_{i\in\Fbb_p}\Dcal_{N(\partial^p)}\Lcal\big(f+\frac{i}{x}\big)\bigg)=\{0\}\\
          \Leftrightarrow&\bigcap_{i\in\Fbb_p}\varphi_N\bigg(\Dcal_{N(\partial^p)}\Lcal\big(f+\frac{i}{x}\big)\bigg)=\{0\}\\
          \Leftrightarrow&\bigcap_{i\in\Fbb_p}\nicefrac{K_N\langle\partial\rangle(\partial-f-\frac{i}{x})}{
            \partial^p-y_N}=\{0\}\\
          \Leftrightarrow&\mathrm{LCLM}_{i\in\Fbb_p}(\partial-f-\frac{i}{x})=\partial^p-y_N
        \end{align*}
        We introduce the automorphism of $K_N\langle\partial\rangle$,
        $\tau_f:\partial\mapsto \partial+f$. Since $\tau_f$ is a ring
        autmorphism, it commutes with $\mathrm{LCLM}$ operations.
        Furthermore it must map central elements to central elements and
        preserve divisibility relations. Since $\partial^p-y_N$ is the
        smallest central multiple of $\partial-f$, $\tau_f(\partial^p-y_N)$
        must be the smallest central multiple of
        $\tau_f(\partial-f)=\partial$, thus
        $\tau_f(\partial^p-y_N)=\partial^p$. Thus
        \begin{align*}
          &\mathrm{LCLM}_{i\in\Fbb_p}(\partial-f-\frac{i}{x})=\partial^p-y_N\\
          \Leftrightarrow&\mathrm{LCLM}_{i\in\Fbb_p}\tau_f(\partial-f-\frac{i}{x})=\tau_f(\partial^p-y_N)\\
          \Leftrightarrow&\mathrm{LCLM}_{i\in\Fbb_p}(\partial-\frac{i}{x})=\partial^p
        \end{align*}
        But $x^i$ is a solution of the operator $\partial-\frac{i}{x}$
        which means that $\mathrm{LCLM}_{i\in\Fbb_p}\partial-\frac{i}{x}$
        must vanish on $K_N$. Since its order is at most $p$, it must be
        equal to $\partial^p$ and we get the result.
    \end{enumerate}
  \end{proof}
  \begin{Proposition}\label{prop_factfactors}
    Let $L$ in $K\langle\partial\rangle$ be a divisor of $N(\partial^p)$
    and let $R\in K\langle\partial\rangle$be such that $LR=N(\partial^p)$.
    Let $f\in S_N$.
    There exists $n\in\ldbrack 0,p\rdbrack$ such $\mathrm{ord}(L)=n\deg(N)$
    and $J\subset\Fbb_p$ such that
    \begin{enumerate}[label=\roman*)]
      \item $\#J=n$
      \item For all $i\in J$,
        $\mathrm{LCLM}\big(R,\Lcal\big(f+\frac{i}{x}\big)\big)\cdot R^{-1}$ is an
        irreducible right divisor of $L$.
      \item $L=\mathrm{LCLM}_{i\in
        J}\bigg(\mathrm{LCLM}\big(R,\Lcal\big(f+\frac{i}{x}\big)\big)\cdot
        R^{-1}\bigg)$.
    \end{enumerate}
  \end{Proposition}
  \begin{proof}
    Let us first observe that for all $i\in\Fbb_p$,
    $\mathrm{LCLM}\big(R,\Lcal\big(f+\frac{i}{x}\big)\big)\cdot R^{-1}$ is either equal
    to $1$ if $\Lcal(f+\frac{i}{x})$ divides $R$, or is an irreducible
    divisor of $L$. The first part of the statement is obvious, so let us
    assume that $\Lcal\big(f+\frac{i}{x}\big)$ does not divide $R$.
    Replacing $f$ by $f+\frac{i}{x}$ which is another element of $S_N$, we
    can assume $i=0$. Since $\Lcal(f)$ is irreducible and does not divide
    $R$, $\mathrm{GCRD}(R,\Lcal(f))=1$. From
    Proposition~\ref{def_lclm_gcrd}($v$) it follows that
    $\mathrm{ord}(\mathrm{LCLM}(R,\Lcal(f))=\mathrm{ord}(R)+\deg(N)$. Thus
    $\mathrm{ord}(\mathrm{LCLM}(R,\Lcal(f))\cdot R^{-1})=\deg(N)$.
    Furthermore, by definition of $\mathrm{LCLM}(R,\Lcal(f))$, we have
    $\Dcal_{N(\partial^p)}\mathrm{LCLM}(R,\Lcal(f))=\Dcal_{N(\partial^p)}R\cap\Dcal_{N(\partial^p)}\Lcal(f)$
    is a proper submodule of $\Dcal_{N(\partial^p)}R$.
    The map $M\mapsto MR$ is $K\langle\partial\rangle$-module isomorphism from $\Dcal_L$ to
    $\Dcal_{N(\partial^p)}R$. Since $\mathrm{LCLM}(R,\Lcal(f))$ generates a
    proper submodule of $\Dcal_{N(\partial^p)}R$,
    $\mathrm{LCLM}(R,\Lcal(f))\cdot R^{-1}$ generates a proper submodule of
    $\Dcal_L$. It follows that its $\mathrm{GCRD}$ with $L$ is nontrivial
    and is a divisor of $N(\partial^p)$ (since $L$ itself is a divisor of
    $N(\partial^p)$) of order smaller than $\deg(N)$. This is only possible
    if $\mathrm{LCLM}(R,\Lcal(f))\cdot R^{-1}$ itself is a divisor of $L$
    since by Proposition~\ref{properties_sec_1}($iv$) all irreducible
    divisor of $N(\partial^p)$ are of order $\deg(N)$. In particular this
    means that $\mathrm{LCLM}(R,\Lcal(f))\cdot R^{-1}$ is irreducible.\\
    
    Now, since $L$ can be written as a product of irreducible divisors of
    $N(\partial^p)$, it follows that the order of $L$ is a multiple of
    $\deg(N)$. Let $n=\mathrm{ord}(L)/\deg(N)$. Let us show, by recurrence
    on $n$, that there exists a subset $J\subset\Fbb_p$ of cardinal $n$
    such that 
    \[\Dcal_{N(\partial^p)}R\cap\bigg(\bigcap_{i\in
    J}\Dcal_{N(\partial^p)}\Lcal\big(f+\frac{i}{x}\big)\bigg)=\{0\}.\]
    Since $N(\partial^p)=\mathrm{LCLM}_{i\in\Fbb_p}
    \Lcal\big(f+\frac{i}{x}\big)$ it follows that
    $\bigcap_{i\in\Fbb_p}\Dcal_{N(\partial^p)}\Lcal\big(f+\frac{i}{x}\big)=\{0\}$.
    Thus there exists $i\in\Fbb_p$ for which
    $\Dcal_{N(\partial^p)}\Lcal\big(f+\frac{i}{x}\big)$ does not contain
    $\Dcal_{N(\partial^p)}R$, which is to say that
    $\Lcal\big(f+\frac{i}{x}\big)$ does not divide $R$. Thus
    $\mathrm{ord}\big(\mathrm{LCLM}\big(R,\Lcal\big(f+\frac{i}{x}\big)\big)\big)=\mathrm{ord}(R)+\deg(N)$
    and there exists $L_*\in K\langle\partial\rangle$ of order
    $(n-1)\deg(N)$ such that
    $L_*\mathrm{LCLM}\big(R,\Lcal\big(f+\frac{i}{x}\big)\big)=N(\partial^p)$.
    By induction hypothesis, there exists $J'\subset\Fbb_p$ of cardinal
    $n-1$ such that 
    \[\Dcal_{N(\partial^p)}\mathrm{LCLM}\big(R,\Lcal\big(f+\frac{i}{x}\big)\big)\cap\bigg(\bigcap_{j\in
    J'}\Dcal_{N(\partial^p)}\Lcal\big(f+\frac{j}{x}\big)\bigg)=\{0\}.\]
    We claim that $J'$ does not contain $i$. Indeed, if such was the case
    we would have
    \begin{align*}
      &\Dcal_{N(\partial^p)}\mathrm{LCLM}\big(R,\Lcal\big(f+\frac{i}{x}\big)\big)\cap\bigg(\bigcap_{j\in
      J'}\Dcal_{N(\partial^p)}\Lcal\big(f+\frac{j}{x}\big)\bigg)\\
      &=\Dcal_{N(\partial^p)}R\cap \Dcal_{N(\partial^p)}\Lcal\big(f+\frac{i}{x}\big)\cap\bigg(\bigcap_{j\in
    J'}\Dcal_{N(\partial^p)}\Lcal\big(f+\frac{j}{x}\big)\bigg)\\
      &=\Dcal_{N(\partial^p)}R\cap\bigg(\bigcap_{j\in
    J'}\Dcal_{N(\partial^p)}\Lcal\big(f+\frac{j}{x}\big)\bigg)\\
      &=0
      \end{align*}
      and thus $N(\partial^p)$ would be the $\mathrm{LCLM}$ of $R$ and the
      $\Lcal\big(f+\frac{j}{x}\big)$ for $j\in J'$. But since $R$ is of
      order $(p-n)\deg(N)$ that would mean that $N(\partial^p)$ is of order
      less than $(p-1)\deg(N)$ which is impossible. Thus we can take
      $J=J'\cup\{i\}$.\\

      It follows that
      \[\bigcap_{i\in J}
      \Dcal_{N(\partial^p)}\mathrm{LCLM}\big(R,\Lcal\big(f+\frac{i}{x}\big)\big)=\{0\}.\]
      Since $M\mapsto MR$ is an isomorphism from $\Dcal_L$ to
      $\Dcal_{N(\partial^p)}R$, we deduce that
      \[\bigcap_{i\in J}
      \Dcal_{L}\mathrm{LCLM}\big(R,\Lcal\big(f+\frac{i}{x}\big)\big)\cdot
      R^{-1}=\{0\}.\]
      and ($iii$) follows from the fact that
      $\mathrm{LCLM}\big(R,\Lcal\big(f+\frac{i}{x}\big)\big)\cdot R^{-1}$ is a
      divisor of $L$.
  \end{proof}
  From Proposition~\ref{prop_factfactors} we see how solving the
  $p$-Riccati equation relative to $N(\partial^p)$ is enough to be able to
  factorize any divisor of $N(\partial^p)$. As a matter of fact, the
  inverse of the map
  $f\mapsto \Lcal(f)$ from Proposition~\ref{prop_factN(dp)} can be
  expressed in a similar manner as $L_*\mapsto
  \partial-\mathrm{GCRD}(\partial^p-y_N,\varphi_N(L_*))$. Thus we see that
  knowing some irreducible factor of $N(\partial^p)$ is also enough to
  solve the $p$-Riccati equation relative to $N$. From a computational
  standpoint however, the costs of the computations of both maps are not
  equivalent. For reasons we shall develop in
  section~\ref{section_fact_eff}, computing $\Lcal(f)$, given a certain
  $f$, is an operation of cost polynomial in the bidegree of $N$ and the
  ``size'' of $f$, whereas, given a certain irreducible divisor $L_*$ of
  $N(\partial^p)$, the computation of the corresponding element of $S_N$
  yields a result of ``size'' polynomial in the bidegree of $N$ and the
  degree of $L_*$, but also linear in $p$.
\stoptoc
  \subsection*{A word of the unseperable case}
  We know that when $N$ is inseparable then $N(\partial^p)$ has a divisor
  in $N_*\in K[\partial^p]$ such that $N_*^p=N(\partial^p)$ which proves
  that $\Dcal_{N(\partial^p)}\simeq M_p(C_N)$. We can use this $N_*$ to
  recover irreducible divisors of any divisor $L\in
  K\langle\partial\rangle$ of $N(\partial^p)$.
  \begin{Proposition}
    Let $N\in C[Y]$ be an irreducible unseparable polynomial over $C$. Let
    $L\in K\langle\partial\rangle\backslash K$ be a divisor of $N(\partial^p)$ and
    $N_*\in K[\partial^p]$ such that $N_*^p(\partial^p)=N(\partial^p)$. Then
    there exists a unique $i\in\ldbrack 1;p\rdbrack$ such that
    $\mathrm{GCRD}(L,N_*^i(\partial^p))$ is an irreducible divisor of $L$.
  \end{Proposition}
  \begin{proof}
    Since
    $\{0\}=\Dcal_{N(\partial^p)}N_*^p(\partial^p)\subsetneq\Dcal_{N(\partial^p)}N_*^{p-1}(\partial^p)\subsetneq\dots
    \subsetneq
    \Dcal_{N(\partial^p)}N_*^{0}(\partial^p)=\Dcal_{N(\partial^p)}$,
    there exists a smallest $i\in \ldbrack 1;p\rdbrack$ such that
    $\Dcal_{N(\partial^p)}N_*^i(\partial^p)+\Dcal_{N(\partial^p)}L\neq\Dcal_{N(\partial^p}$.
    Then
    $M=\Dcal_{\mathrm{GCRD}(L,N_*^i(\partial^p)}\simeq\nicefrac{\Dcal_{N(\partial^p)}}{\Dcal_{N(\partial^p)}N_*^i(\partial^p)+
    \Dcal_{N(\partial^p)}L}$ is a simple $K\langle\partial\rangle$-module.\\

    Indeed by hypothesis on $i$ we know that
    $\Dcal_{N(\partial^p}=\Dcal_{N(\partial^p)}L+\Dcal_{N(\partial^p)}N_*^{i-1}(\partial^p)$
    so there is a surjective homomorphism
    \[\Dcal_{N(\partial^p)}N_*^{i-1}(\partial^p)\twoheadrightarrow M\]
    which factors into
    \[\nicefrac{\Dcal_{N(\partial^p)}N_*^{i-1}(\partial^p)}{\Dcal_{N(\partial^p)}N_*^{i}(\partial^p)}\simeq
    \Dcal_{N_*(\partial^p)}\twoheadrightarrow M.\]
    But since $\Dcal_{N_*(\partial^p)}$ is a simple
    $K\langle\partial\rangle$-module (because $\mathrm{ord} N_*(\partial^p)=\deg N$
    so $N_*(\partial^p)$ is irreducible), this morphism is also injective
    which proves that $M$ is simple and
    $\mathrm{GCRD}(L,N_*^i(\partial^p))$ is an irreducible divisor of
    $L$.\\

    By hypothesis on $i$, for all $j<i$ we have
    $\mathrm{GCRD}(L,N_*^j(\partial^p))=1$ and by the same procedure we
    show that
    $\nicefrac{\Dcal_{\mathrm{GCRD}(L,N^{l+1}_*(\partial^p))}}{\Dcal_{\mathrm{GCRD}(L,N^l_*(\partial^p)}}$
    is isomorphic to $\Dcal_{N_*(\partial^p)}$ for $l\geq i$, which in
    particular proves that
    $\Dcal_{\mathrm{GCRD}(L,N^{l+1}_*(\partial^p))}$ is not
    simple and thus the unicity of $i$.
  \end{proof}  
\resumetoc
\section{Polynomial time irreducibility
test}\label{section_irreducibility_test}
We now present an irreducibility test for operators of the form
$N(\partial^p)$ with $N$ being an irreducible polynomial over $C$. 
We restrict ourselves to the case where $K$ is a finite separable
extension of $\Fbb_p(x)$.

\begin{Lemme}\label{lemma_struc_K}
  If $K$ is a finite separable field extension of $\Fbb_p(x)$ equipped with
  the derivation $\frac{\ud}{\ud x}$  then $K$
  satisfies Hypothesis~\ref{hypothesis_x}. Furthermore its constants are the
  $p$-th powers of elements of $K$.
\end{Lemme}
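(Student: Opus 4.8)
The plan is to check the two items of Hypothesis~\ref{hypothesis_x} and, along the way, to identify the field of constants. Item~(2) costs essentially nothing: since $K/\Fbb_p(x)$ is finite separable, the derivation $\frac{\ud}{\ud x}$ of $\Fbb_p(x)$ extends (uniquely) to a derivation of $K$ — this is exactly what lets us speak of ``the derivation $\frac{\ud}{\ud x}$ on $K$'' — and the element $x\in\Fbb_p(x)\subseteq K$ satisfies $\partial(x)=1$. So the real content is item~(1), together with the description $C=\{a^p : a\in K\}$.

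First I would record the chain of inclusions $\Fbb_p(x^p)\subseteq K^p\subseteq C$: the first holds because $\Fbb_p\subseteq K^p$ and $x^p\in K^p$, and the second because $(a^p)'=pa^{p-1}a'=0$ for every $a\in K$. Then comes the one genuine computation, namely $[K:K^p]=p$. For this I would use the Frobenius map $\phi\colon K\to K$, $a\mapsto a^p$, which is a field isomorphism onto $K^p$ and carries $\Fbb_p(x)$ isomorphically onto $\Fbb_p(x^p)$; hence $[K^p:\Fbb_p(x^p)]=[K:\Fbb_p(x)]$. Feeding this into the tower $K\supseteq K^p\supseteq\Fbb_p(x^p)$ and using $[K:\Fbb_p(x^p)]=[\Fbb_p(x):\Fbb_p(x^p)]\cdot[K:\Fbb_p(x)]=p\,[K:\Fbb_p(x)]$ forces $[K:K^p]=p$.

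To finish, I would observe that in the tower $K^p\subseteq C\subseteq K$ the outer index is the prime $p$, so $C$ is either $K^p$ or $K$; but $C\neq K$ because $x\in K$ has $\partial(x)=1\neq 0$, hence $x\notin C$. Therefore $C=K^p$ — which is precisely the statement that the constants are the $p$-th powers of elements of $K$ — and $[K:C]=[K:K^p]=p$, so item~(1) holds as well.

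There is no deep obstacle here; the only step demanding any care is $[K:K^p]=p$, where one must make sure the Frobenius/degree bookkeeping is correct. It is worth noting that the only place separability is genuinely used is in guaranteeing that $\frac{\ud}{\ud x}$ defines a derivation of $K$, so that $C$ is even meaningful and $x$ is genuinely a non-constant element; the degree computation itself is insensitive to it.
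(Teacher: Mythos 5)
Your proof is correct and follows essentially the same route as the paper: both establish $[K:K^p]=p$ by comparing degrees across the Frobenius and then squeeze $C$ between $K^p$ and $K$, ruling out $C=K$ via $\partial(x)=1$. The only cosmetic difference is that you work directly with $K^p$ while the paper introduces a primitive element $r$ and writes $K^p$ as $\Fbb_p(x^p)[r^p]$; your formulation is marginally leaner but the argument is the same.
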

\begin{proof}
  Let $C$ be the field of constants of $K$.
  $\Fbb_p(x)$ does satisfy Hypothesis~\ref{hypothesis_x} and its field of
  constant is $\Fbb_p(x^p)$. Since $K$ is a finite separable extension of
  $\Fbb_p(x)$, there exists $F\in\Fbb_p[x,Y]$ irreducible and a root $r$ of $F$
  in a separable closure of $\Fbb_p(x)$ such that $K=\Fbb_p(x)[r]$. But
  then $r^p\in C$, thus $\Fbb_p(x^p)[r^p]\subset C$. Let $\Phi$ denote the
  Frobenius endomorphism on $\Fbb_p(x)$. The element $r^p$ is a root of
  $\Phi(F)$ so $[\Fbb_p(x^p)[r^p]:\Fbb_p(x^p)]=[K:\Fbb_p(x)]$. We have
  $[K:\Fbb_p(x^p)]=[K:\Fbb_p(x)][\Fbb_p(x):\Fbb_p(x^p)]=[K:\Fbb_p(x^p)[r^p]][\Fbb_p(x^p)[r^p]:
  \Fbb_p(x^p)]$ which is to say that
  $[K:\Fbb_p(x^p)[r^p]]=\frac{\deg(F)p}{\deg(F)}=p$. Since
  $\Fbb_p(x^p)[r^p]\subset C\subset K$, and $K\neq C$ ($x\in K$) we have
  $C=\Fbb_p(x^p)[r^p]$ and $[K:C]=p$. Furthermore the elements of
  $C=\Fbb_p(x^p)[r^p]$ are exactly the $p$-th powers of elements of
  $K=\Fbb_p(x)[r]$.
\end{proof}

      \begin{Notation}
        % The following notation will be used in this section
        %and the next one.
        Let $N$ be an irreducible polynomial over $C$. For any place
        $\Pfrak$ of $K_N$ we denote by $K_{N,\Pfrak}$ the completion of
        $K_N$ with regard to the associated valuation $\nu_{\Pfrak}$. We
        also denote $\Gcal_\Pfrak$ the residue class field of $K_N$.
        Finally we will usually use the notation $t_\Pfrak$ to refer to a
        prime element of $\Pfrak$ in $K_N$.\\
        For any place of $\Pfrak'$ of $C_N$, we denote by $C_{N,\Pfrak'}$ the
        completion of $(C_N,\nu_{\Pfrak'})$.\\
        Note that $f\mapsto f^p$ is an isomorphism between $K_N$ and $C_N$
        and induces isomorphisms between their completion fields too. Thus
        if $\Pfrak$ is a place in $K_N$ we will allow ourselves to write
        $C_{N,\Pfrak}$ for the completion of $C_N$ in the place which makes
        it isomorphic to $K_{N,\Pfrak}$ through this isomorphism.\\
        Conversely, if $\Pfrak$ is a place in $C_N$ we will allow ourselves
        to write $K_{N,\Pfrak}$ for the completion of $K_N$ in the place
        which makes it isomorphic to $C_{N,\Pfrak}$ through this
        isomorphism.\\
        For any algebraic function field $F$ we denote by $\Pbb_F$ the set
        of places of $F$ and by $\mathrm{Div}(F)$ the group of divisors of 
        $F$; we recall that it is 
        the free $\Zbb$-module generated by the elements of $\Pbb_F$. If $f$ is a
        nonzero element of $F$, we denote by $(f)$ the principal divisor of
        $f$, by $(f)_+$ its divisor of zeros and by $(f)_-$ its divisor of
        poles. If $D$ is a divisor over $F$, we write $\Lcal(D)=\{f\in
        F|(f)\geq -D\}$ for the Riemann-Roch space associated to $D$.\\
        We denote by $\Diff(K_N/K)$ (or just $\Diff(K_N)$) the different
        divisor of $K_N$ over $K$.\\
        Finally if $k$ is a field, we denote by $\mathrm{Br}(k)$ the Brauer
        group of $k$.
      \end{Notation}
      The basis for our irreducibility test is the following proposition
      \begin{Proposition}
        Let $N$ be an irreducible polynomial over $C$. Then, $N(\partial^p)$ is
        reducible in $K\langle\partial\rangle$ if and only if the
        $p$-Riccati equation
        \[f^{(p-1)}+f^p=y_N\]
        has a solution in $K_{N,\Pfrak}$ for all $\Pfrak\in\Pbb_{K_N}$.
      \end{Proposition}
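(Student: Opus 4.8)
The plan is to reformulate the statement through the central simple $C_N$-algebra $\Dcal_{N(\partial^p)}$ and then to run a local--global argument over $C_N$, which is an algebraic function field in one variable over a finite field and hence a global field. The ``only if'' direction needs no such input: if $N(\partial^p)$ is reducible then, by Proposition~\ref{properties_sec_1}, $\Dcal_{N(\partial^p)}$ cannot be a division algebra and so is isomorphic to $M_p(C_N)$; the equivalence established above (that $\Dcal_{N(\partial^p)}\cong M_p(C_N)$ if and only if the $p$-Riccati equation relative to $N$ has a solution in $K_N$) then produces $f\in K_N$ with $f^{(p-1)}+f^p=y_N$, and since for every place $\Pfrak$ of $K_N$ the derivation $\frac{\ud}{\ud x}$ on $K_{N,\Pfrak}$ extends the one on $K_N$, this $f$ remains a solution in $K_{N,\Pfrak}$.

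For the converse, assume the $p$-Riccati equation has a solution in $K_{N,\Pfrak}$ for every $\Pfrak\in\Pbb_{K_N}$. Recall that $K_N=C_N[x]$ with $x^p\in C_N$, so $K_N/C_N$ is purely inseparable of degree $p$ and every place $\Pfrak'$ of $C_N$ carries a unique place $\Pfrak$ of $K_N$ above it. The first point I would establish is that $x^p$ is not a $p$-th power in $C_{N,\Pfrak'}$: its $p$-th root can only be $x$, which is not a constant, whereas $C_{N,\Pfrak'}$ is the completion of the field of constants $C_N$ and is therefore annihilated by the (continuous) extension of $\frac{\ud}{\ud x}$. Consequently $Y^p-x^p$ is irreducible over $C_{N,\Pfrak'}$, so $K_N\otimes_{C_N}C_{N,\Pfrak'}$ is a field, which must be $K_{N,\Pfrak}$, of degree $p$ over $C_{N,\Pfrak'}$. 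It follows that $(K_{N,\Pfrak},\frac{\ud}{\ud x})$ satisfies Hypothesis~\ref{hypothesis_x} with field of constants $C_{N,\Pfrak'}$: it contains $x$ with $x'=1$; $C_{N,\Pfrak'}$ consists of constants; and since $[K_{N,\Pfrak}:C_{N,\Pfrak'}]=p$ is prime while $x\notin C_{N,\Pfrak'}$, the constant field is exactly $C_{N,\Pfrak'}$.

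Next I would transport the splitting criterion to the completion. Base-changing the isomorphism $\Dcal_{N(\partial^p)}\cong\nicefrac{K_N\langle\partial\rangle}{(\partial^p-y_N)}$ of $C_N$-algebras along $C_N\hookrightarrow C_{N,\Pfrak'}$, and using that $y_N\in C_N$ is central, gives $\Dcal_{N(\partial^p)}\otimes_{C_N}C_{N,\Pfrak'}\cong\nicefrac{K_{N,\Pfrak}\langle\partial\rangle}{(\partial^p-y_N)}$. Now apply Propositions~\ref{properties_sec_1} and~\ref{p-riccati_order_one} over the differential field $K_{N,\Pfrak}$, with the irreducible polynomial $Y-y_N$ of degree one over $C_{N,\Pfrak'}$ in place of $N$: the algebra $\nicefrac{K_{N,\Pfrak}\langle\partial\rangle}{(\partial^p-y_N)}$ is a central simple $C_{N,\Pfrak'}$-algebra of dimension $p^2$, hence is either a division algebra or isomorphic to $M_p(C_{N,\Pfrak'})$, and it is isomorphic to $M_p(C_{N,\Pfrak'})$ exactly when $\partial^p-y_N$ is reducible over $K_{N,\Pfrak}$, that is, exactly when $f^{(p-1)}+f^p=y_N$ has a solution in $K_{N,\Pfrak}$. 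By assumption it does; so $\Dcal_{N(\partial^p)}\otimes_{C_N}C_{N,\Pfrak'}$ is split, for every place $\Pfrak'$ of $C_N$.

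Finally I would globalize: since $C_N$ is a global field, the localization map $\mathrm{Br}(C_N)\to\bigoplus_{\Pfrak'\in\Pbb_{C_N}}\mathrm{Br}(C_{N,\Pfrak'})$ is injective (the Hasse principle for Brauer groups of global fields). As $\Dcal_{N(\partial^p)}$ is trivial in every $\mathrm{Br}(C_{N,\Pfrak'})$, it is trivial in $\mathrm{Br}(C_N)$, i.e. $\Dcal_{N(\partial^p)}\cong M_p(C_N)$; by Proposition~\ref{properties_sec_1} this forces $N(\partial^p)$, whose order $p\deg(N)$ exceeds $\deg(N)$, to be reducible, which completes the argument. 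The main obstacle is precisely this last step and the local bookkeeping feeding it: one must be sure that the Brauer-group local--global principle over a characteristic-$p$ global function field applies including its $p$-primary part (which is the relevant part here, the index being $p$), and that the ``apply an earlier proposition over the completion'' reductions are sound --- chiefly the identification $\Dcal_{N(\partial^p)}\otimes_{C_N}C_{N,\Pfrak'}\cong\nicefrac{K_{N,\Pfrak}\langle\partial\rangle}{(\partial^p-y_N)}$, which hinges on $x^p$ never becoming a $p$-th power in $C_{N,\Pfrak'}$, and the verification of Hypothesis~\ref{hypothesis_x} for each completion.
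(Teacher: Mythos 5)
Your proposal is correct and follows essentially the same route as the paper's proof: both reduce the question to the splitting of the central simple $C_N$-algebra $\Dcal_{N(\partial^p)}$, invoke the local--global (Hasse) injectivity of $\mathrm{Br}(C_N)\hookrightarrow\bigoplus_{\Pfrak'}\mathrm{Br}(C_{N,\Pfrak'})$, identify each localization with $\nicefrac{K_{N,\Pfrak}\langle\partial\rangle}{(\partial^p-y_N)}$, and then apply the order-one criterion (Proposition~\ref{p-riccati_order_one}) over each completion. The only difference is one of exposition --- you spell out why $K_N\otimes_{C_N}C_{N,\Pfrak'}$ is a field satisfying Hypothesis~\ref{hypothesis_x} (via the uniqueness of $p$-th roots and the triviality of the derivation on $C_{N,\Pfrak'}$), a point the paper passes over tacitly; this is a welcome sharpening rather than a different argument.
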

      \begin{proof}
        We know from Proposition~\ref{prop_split_riccati} that $N(\partial^p)$ is reducible in
        $K\langle\partial\rangle$ if and only if
        $\Dcal_{N(\partial^p)}\simeq\DyN{K_N}{}$ is
        isomorphic to $M_p(C_N)$, which amounts to saying that
        $\DyN{K_N}{}$ vanishes in $\mathrm{Br}(C_N)$.
        We know that $D\mapsto \bigoplus_{\Pfrak\in\Pbb_{C_N}}
        D\otimes_{C_N}C_{N,\Pfrak}$ induces an injective group
        morphism~\cite[Corollary~6.5.4]{GiSz06}
        \[\mathrm{Br}(C_N)\hookrightarrow
        \bigoplus_{\Pfrak\in\Pbb_{C_N}}\mathrm{Br}(C_{N,\Pfrak}).\]
        In particular this means that $\DyN{K_N}{}$ is isomorphic to $M_p(C_N)$
        if and only if $$\DyN{K_N}{}\otimes_{C_N}C_{N,\Pfrak}$$ is isomorphic to
        $M_p(C_{N,\Pfrak})$ for all $\Pfrak\in\Pbb_{C_N}$.

        Besides we know that $\DyN{K_N}{}\otimes_{C_N}C_{N,\Pfrak}$ is
        isomorphic to $\DyN{K_{N,\Pfrak}}{}$. Thus
        $\DyN{K_N}{}$ is isomorphic to $M_p(C_N)$ if and only if
        $\DyN{K_{N,\Pfrak}}{}$ is isomorphic to $M_p(C_{N,\Pfrak})$
        for all $\Pfrak\in\Pbb_{K_N}$.

        Lastly $K_{N,\Pfrak}$ is of the form
        $\Fbb_q((t_\Pfrak))$ for $q$ some power of $p$. In particular it is
        a field verifying Hypothesis~\ref{hypothesis_x}. Thus it is
        isomorphic to $M_p(C_{N,\Pfrak})$ if and only if the equation
        \[f^{(p-1)}+f^p=y_N\] has a solution in $K_{N,\Pfrak}$.
      \end{proof}
      We now want to find a criteria for the $p$-Riccati equation relative
      to $N$ to have a solution in $K_{N,\Pfrak}$. In general, we know that
      if $t_\Pfrak$ is a prime element of $\Pfrak$ in $K_N$, then
      $K_{N,\Pfrak}\simeq \Gcal_\Pfrak((t_\Pfrak))$. This tells us that
      $K_{N,\Pfrak}$ is isomorphic to a Laurent series field. However, we
      don't always use this representation. Instead we assume that for each
      place $\Pfrak\in\Pbb_{K_N}$, we are
      given a morphism $\iota_\Pfrak:K_N\rightarrow \Gcal_\Pfrak((T))$ which is
      continuous for the topology associated to the place $\Pfrak$ and
      whose image is dense in $\Gcal_\Pfrak((T))$. We say that
      $\iota_\Pfrak$ is a parametrization of $K_{N,\Pfrak}$. Note that in
      $\Gcal_\Pfrak((T))$ we now have $\frac{\ud}{\ud x}=T'\frac{\ud}{\ud T}$
      where $T'=\iota_\Pfrak(x)^{-1}$.\\

      Over fields of Laurent series we can apply a Newton iteration to find solutions
      to a higher precision from a given seed as illustrated by the
      following proposition.
      \begin{Proposition}\label{prop_newton_iter}
        Let $f_0\in\Gcal_{\Pfrak}((T))$ and $n\in\Zbb$ be such that
        \[\frac{\ud^{p-1}}{\ud x^{p-1}}f_0+f_0^p=y_N+O(T^{pn}).\]
        We set $e_\Pfrak:=1-\nu_\Pfrak(T')$.
        There exists $f_1\in\Gcal_{\Pfrak}((T))$ such that
        $f_1=f_0+O(T^{pn+(p-1)e_\Pfrak})$ and
        \[\frac{\ud^{p-1}}{\ud
        x^{p-1}}f_1+f_1^p=y_N+O(T^{p(pn+(p-1)e_\Pfrak)}).\]
      \end{Proposition}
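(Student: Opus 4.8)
The plan is to carry out a single additive Newton step. For brevity write $\wp(f)=f^{(p-1)}+f^{p}$; this map is additive on $K_{N,\Pfrak}$, because $f\mapsto f^{(p-1)}=\partial^{p-1}(f)$ is additive and $f\mapsto f^{p}$ is the Frobenius, so for any $\delta\in K_{N,\Pfrak}$,
\[
  \wp(f_0+\delta)=\wp(f_0)+\delta^{(p-1)}+\delta^{p}.
\]
Put $r:=\wp(f_0)-y_N$, so $\nu_\Pfrak(r)\geq pn$ by hypothesis. I will produce $\delta$ with $\nu_\Pfrak(\delta)\geq pn+(p-1)e_\Pfrak$ and $\delta^{(p-1)}=-r$; then $f_1:=f_0+\delta$ automatically satisfies $f_1=f_0+O(t_\Pfrak^{pn+(p-1)e_\Pfrak})$ and $\wp(f_1)=y_N+\delta^{p}$ with $\nu_\Pfrak(\delta^{p})=p\,\nu_\Pfrak(\delta)\geq p(pn+(p-1)e_\Pfrak)$, which is exactly the assertion.

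Two structural facts about the completion $K_{N,\Pfrak}\cong\Gcal_\Pfrak((t_\Pfrak))$ make this work. First, $\partial^{p}=0$ on $K_{N,\Pfrak}$: indeed $K_N$ verifies Hypothesis~\ref{hypothesis_x}, so $\partial^{p}$ vanishes on $K_N$ by Proposition~\ref{properties_sec_1}(i), and $\partial$ (hence $\partial^{p}$) is continuous for the $\Pfrak$-adic topology since $\nu_\Pfrak(\partial g)\geq\nu_\Pfrak(g)-e_\Pfrak$; as $K_N$ is dense in its completion, $\partial^{p}=0$ there too. In particular $\partial(r)=f_0^{(p)}=0$, i.e.\ $r$ lies in $\ker\partial$. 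Second, there exists $\xi\in K_{N,\Pfrak}$ with $\partial(\xi)=1$ and $\nu_\Pfrak(\xi)\geq e_\Pfrak$. To see this, start from some $x$ with $\partial(x)=1$ (Hypothesis~\ref{hypothesis_x}); writing $\partial(g)=t_\Pfrak'\,\frac{\drm g}{\drm t_\Pfrak}$ on $\Gcal_\Pfrak((t_\Pfrak))$ one gets $\frac{\drm x}{\drm t_\Pfrak}=(t_\Pfrak')^{-1}$. Being a formal $t_\Pfrak$-derivative, this series has vanishing coefficient at every monomial $t_\Pfrak^{jp-1}$, hence it can be integrated term by term to a series $\xi\in\Gcal_\Pfrak((t_\Pfrak))$ with $\frac{\drm\xi}{\drm t_\Pfrak}=(t_\Pfrak')^{-1}$, that is $\partial(\xi)=1$, and with $\nu_\Pfrak(\xi)\geq 1+\nu_\Pfrak\bigl((t_\Pfrak')^{-1}\bigr)=1-\nu_\Pfrak(t_\Pfrak')=e_\Pfrak$.

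With these facts in hand, set $\delta:=r\,\xi^{p-1}$ and $f_1:=f_0+\delta$. Since $r\in\ker\partial$, iterating Leibniz' rule together with $\partial(\xi)=1$ gives $\partial^{k}(\xi^{p-1})=\frac{(p-1)!}{(p-1-k)!}\,\xi^{p-1-k}$ for $0\leq k\leq p-1$, whence $\delta^{(p-1)}=r\,(p-1)!=-r$ by Wilson's theorem. Consequently
\begin{align*}
  f_1^{(p-1)}+f_1^{p}
  &=\wp(f_0)+\delta^{(p-1)}+\delta^{p}\\
  &=(y_N+r)+(-r)+r^{p}\xi^{p(p-1)}=y_N+r^{p}\xi^{p(p-1)},
\end{align*}
and since $\nu_\Pfrak(r^{p}\xi^{p(p-1)})=p\,\nu_\Pfrak(r)+p(p-1)\nu_\Pfrak(\xi)\geq p(pn+(p-1)e_\Pfrak)$ while $\nu_\Pfrak(\delta)=\nu_\Pfrak(r)+(p-1)\nu_\Pfrak(\xi)\geq pn+(p-1)e_\Pfrak$, the element $f_1$ has both announced properties.

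The routine ingredients are the continuity of $\partial$ and the passage to the completion. The one genuinely delicate step is the construction of the local coordinate $\xi$ with $\nu_\Pfrak(\xi)\geq e_\Pfrak$: this is precisely where $e_\Pfrak$---hence the possibly wild ramification of $K_N$ over $K$---enters the precision bound. The conceptual heart is simply the remark that $\wp(f_0)-y_N$ lies in $\ker\partial$: this annihilates every cross-term in the Newton step, so the residual error is exactly the $p$-th power $\delta^{p}$, which explains the precision jump $pn\rightsquigarrow pn+(p-1)e_\Pfrak\rightsquigarrow p(pn+(p-1)e_\Pfrak)$.
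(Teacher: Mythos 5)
Your argument is correct, and it realises the same one-step Newton correction as the paper: both proofs replace $f_0$ by $f_0+\delta$ where $\delta$ has valuation at least $pn+(p-1)e_\Pfrak$ and $\delta^{(p-1)}$ cancels the error $r=f_0^{(p-1)}+f_0^p-y_N$, so that the new defect is the $p$-th power $\delta^p$. Where you genuinely diverge is in how the corrector is produced. The paper first asserts that $r$, lying in $\ker\frac{\ud}{\ud x}$, admits a $(p-1)$-fold primitive $\Ical(f_0)$, and then controls its valuation by an iterated integration lemma: any $h\in\Im\frac{\ud}{\ud x}$ has a primitive, with no terms of $p$-divisible valuation, of valuation $\nu_\Pfrak(h)+e_\Pfrak$, applied $p-1$ times. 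You instead build a single element $\xi$ with $\partial(\xi)=1$ and $\nu_\Pfrak(\xi)\geq e_\Pfrak$ (by term-by-term integration of $(t_\Pfrak')^{-1}=\frac{\ud x}{\ud t_\Pfrak}$, which has no coefficients in degrees $\equiv -1\pmod p$ precisely because it is a formal derivative) and take the closed-form corrector $\delta=r\,\xi^{p-1}$, using $r\in\ker\partial$, Leibniz and Wilson's theorem to get $\delta^{(p-1)}=-r$ in one stroke. This buys two things: the valuation bound $\nu_\Pfrak(\delta)\geq pn+(p-1)e_\Pfrak$ is immediate rather than obtained by an inductive bookkeeping of primitives, and you actually prove the surjectivity of $\frac{\ud^{p-1}}{\ud x^{p-1}}$ onto $\ker\frac{\ud}{\ud x}$ in the completion, a point the paper uses ("there exists $\Ical(f_0)$ such that \dots") without justification. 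The paper's iterative lemma, on the other hand, is the form of the statement reused later (e.g.\ in the explicit local systems of Theorem~\ref{theorem_system_local}), which is presumably why it is phrased that way there. Your auxiliary appeals to continuity of $\partial$ and density of $K_N$ in $K_{N,\Pfrak}$ to get $\partial^p=0$ are at the same level of rigour as the paper's own use of $\frac{\ud}{\ud x}=t_\Pfrak'\frac{\ud}{\ud t_\Pfrak}$, so nothing is missing.
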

      \begin{proof}
        Let $g:=\frac{\ud^{p-1}}{\ud x^{p-1}}f_0+f_0^p-y_N$. For any
        $f\in\Gcal_\Pfrak((T))$, $\frac{\ud^{p-1}}{\ud x^{p-1}}f$ is
        a constant since $\frac{\ud^p}{\ud x^p}=0$. Since $y_N\in C_N$ is
        also a constant, it follows that $\frac{\ud g}{\ud x}=0$. Thus
        there exists $\Ical(f_0)\in\Gcal_\Pfrak((T))$ such that
        $\frac{\ud^{p-1}}{\ud x^{p-1}}\Ical(f_0)=g$. Furthermore, we claim
        that we can take $\Ical(f_0)$ such that $\nu_\Pfrak(\Ical(f_0))=
        pn+(p-1)e_\Pfrak$.\\
        Indeed, let $h\in\Im\left(\frac{\ud}{\ud x}\right)$ and
        $H=\sum_{k=\nu_\Pfrak(H)}^\infty h_kT^k\in\Gcal_\Pfrak((T))$ 
        such that $\frac{\ud}{\ud x}H=h$. Then we set
        $H_1:=H-\sum_{k\in\Zbb}h_{pk}T^{pk}$. We have $\frac{\ud}{\ud
        x}H_1=\frac{\ud}{\ud x} H=h$. Furthermore $p$ does not divide
        $\nu_\Pfrak(H_1)$. But we also have
        \[\frac{\ud}{\ud x}H_1=T'\frac{\ud}{\ud
        T}\left(\sum_{k=\nu_\Pfrak(H_1)}^\infty
        h_kT^k\right)=T'\sum_{k=\nu_\Pfrak(H_1)-1}(k+1)h_{k+1}T^k.\]
        It follows that
        $\nu_\Pfrak(h)=\nu_\Pfrak(H_1)-1+\nu_\Pfrak(T')$ which is to
        say that $h$ admits a primitive $H_1$ verifying
        $\nu_\Pfrak(H_1)=\nu_\Pfrak(h)+e_\Pfrak$. Applying this result
        $p{-}1$ times, we conclude that we can take $\Ical(f_0)$
        such that $\nu_\Pfrak(I(f_0))= pn+(p-1)e_\Pfrak$.

        Next, we consider $f_1:=f_0-\Ical(f_0)$. By definition
        $f_1=f_0+O(T^{pn+(p-1)e_\Pfrak})$ and
        \begin{align*}
          \frac{\ud^{p-1}}{\ud x^{p-1}}f_1 +f_1^p&=\frac{\ud^{p-1}}{\ud
          x^{p-1}}f_0+f_0^p-\frac{\ud^{p-1}}{\ud
          x^{p-1}}\Ical(f_0)-\Ical(f_0)^p\\
          &=g+y_N-g-\Ical(f_0)^p\\
          &=y_N+O(T^{p(pn+(p-1)e_\Pfrak)})
        \end{align*}
      \end{proof}
      \begin{Corollaire}\label{corollary_criteria}
        The $p$-Riccati equation relative to $N$ admits a solution in
        $K_{N,\Pfrak}=\Gcal_\Pfrak((T))$ if and only if there exists
        $f\in\Gcal_\Pfrak((T))$ such that
        \[\frac{\ud^{p-1}}{\ud
        x^{p-1}}f+f^p=y_N+O(T^{p(1-e_\Pfrak)}).\]
        In particular if $\nu_\Pfrak(y_N)\geq p{\cdot}\nu_\Pfrak(T')$ then
        the $p$-Riccati equation relative to $N$ always has a solution in
        $K_{N,\Pfrak}$.
      \end{Corollaire}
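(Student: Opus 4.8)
The plan is to treat the forward implication as trivial and derive the converse by bootstrapping the Newton iteration of Proposition~\ref{prop_newton_iter} into an exact solution. Indeed, if $f\in K_{N,\Pfrak}$ solves the $p$-Riccati equation then a fortiori $\frac{\ud^{p-1}}{\ud x^{p-1}}f+f^p=y_N=y_N+O(t_\Pfrak^{p(1-e_\Pfrak)})$, so only the ``if'' direction requires an argument.

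For that direction, suppose $f_0\in\Gcal_\Pfrak((t_\Pfrak))$ satisfies $\frac{\ud^{p-1}}{\ud x^{p-1}}f_0+f_0^p=y_N+O(t_\Pfrak^{p(1-e_\Pfrak)})$. Since $e_\Pfrak=1-\nu_\Pfrak(t_\Pfrak')$ is an integer, I would set $n_0:=1-e_\Pfrak\in\Zbb$, so that the hypothesis says precisely that the error of $f_0$ is $O(t_\Pfrak^{pn_0})$. Then I would apply Proposition~\ref{prop_newton_iter} inductively, producing $f_{k+1}\in\Gcal_\Pfrak((t_\Pfrak))$ with $f_{k+1}=f_k+O(t_\Pfrak^{n_{k+1}})$ and error $O(t_\Pfrak^{pn_{k+1}})$, where $n_{k+1}:=pn_k+(p-1)e_\Pfrak$ is again an integer, so the proposition applies at the next step. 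The recurrence $n_{k+1}+e_\Pfrak=p(n_k+e_\Pfrak)$ together with $n_0+e_\Pfrak=1$ gives $n_k=p^k-e_\Pfrak\to+\infty$; hence $(f_k)$ is Cauchy for the $t_\Pfrak$-adic topology and converges to some $f\in\Gcal_\Pfrak((t_\Pfrak))$ by completeness of the Laurent series field. Since $\frac{\ud}{\ud x}$ acts as $t_\Pfrak'\frac{\ud}{\ud t_\Pfrak}$ and is therefore continuous, and the Frobenius $g\mapsto g^p$ is continuous as well, passing to the limit in $\frac{\ud^{p-1}}{\ud x^{p-1}}f_k+f_k^p=y_N+O(t_\Pfrak^{pn_k})$ (with $pn_k\to\infty$) yields $\frac{\ud^{p-1}}{\ud x^{p-1}}f+f^p=y_N$, so $f$ is a solution in $K_{N,\Pfrak}$.

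The ``in particular'' clause then follows by applying the first part to the seed $f_0=0$: the condition $\frac{\ud^{p-1}}{\ud x^{p-1}}f_0+f_0^p=y_N+O(t_\Pfrak^{p(1-e_\Pfrak)})$ reduces to $\nu_\Pfrak(y_N)\geq p(1-e_\Pfrak)=p\cdot\nu_\Pfrak(t_\Pfrak')$, which is the assumed inequality.

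Granting Proposition~\ref{prop_newton_iter}, the argument has no real obstacle; the single point needing care is the convergence of the iteration, i.e. that each step \emph{strictly} raises the precision. This is exactly what pins down the threshold at $p(1-e_\Pfrak)$: the affine map $n\mapsto pn+(p-1)e_\Pfrak$ has multiplier $p>1$ and fixed point $-e_\Pfrak$, so the iteration escapes to $+\infty$ iff the seed exponent $n_0$ is an integer strictly above $-e_\Pfrak$, the smallest such being $n_0=1-e_\Pfrak$. The only other thing to verify, that the limit solves the equation on the nose, is handled by the continuity of $\frac{\ud}{\ud x}$ and of the $p$-th power map on $\Gcal_\Pfrak((t_\Pfrak))$ together with $pn_k\to\infty$.
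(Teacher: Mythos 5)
Your proof is correct and follows essentially the same route as the paper: both directions reduce to iterating Proposition~\ref{prop_newton_iter} from the approximate seed and passing to the limit in the complete field $\Gcal_\Pfrak((t_\Pfrak))$, with the ``in particular'' clause obtained from the seed $f=0$. Your explicit solution $n_k=p^k-e_\Pfrak$ of the precision recurrence (and the continuity remark for taking the limit) is just a slightly more explicit version of the paper's argument that the precision sequence is strictly increasing once it starts above the fixed point $-e_\Pfrak$.
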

      \begin{proof}
        Let $f$ be such that $\frac{\ud^{p-1}}{\ud
        x^{p-1}}f+f^p=y_N+O(T^{p(1-e_\Pfrak)})$ and set $f_0:=1$.
        We construct a recursive sequence $(f_k)_{k\in\Nbb}\in
        K_{N,\Pfrak}^{\Nbb}$ such that the
        term $f_{k+1}$ is constructed from $f_k$ using
        Proposition~\ref{prop_newton_iter}. We set
        $n_k:=\max\{n\in\Nbb|\frac{\ud^{p-1}}{\ud
        x^{p-1}}f_k+f_k^p=y_N+O(T^{pn})\}$ and show that the
        sequence $n_k$ is strictly increasing. Indeed we have
        $n_0>-e_\Pfrak$. Thus $n_1\geq
        pn_0+(p-1)e_\Pfrak>n_0-(p-1)e_\Pfrak+(p-1)e_\Pfrak=n_0$. It follows
        that $n_1>n_0$ and $n_1>-e_\Pfrak$. By induction we show that
        $n_k>-e_\Pfrak$ for all $k$ and conclude that $n_{k+1}>n_k$ the
        same way.\\
        From Proposition~\ref{prop_newton_iter} it also follows that
        $f_k=f_l+O(T^{n_l})$ for all $k\geq l$. Thus a solution to
        the $p$-Riccati equation in $K_{N,\Pfrak}$ is given by
        $\lim_{k\to\infty}f_k$.

        Let us now suppose that $\nu_\Pfrak(y_N)\geq
        p{\cdot}\nu_\Pfrak(T')$. Then by definition of $e_\Pfrak$, the
        function $f=0$ verifies
        \[\frac{\ud^{p-1}}{\ud
        x^{p-1}}f+f^p=y_N+O(T^{p(1-e_\Pfrak)})\] so the $p$-Riccati
        equation relative to $N$ must have a solution in $K_{N,\Pfrak}$ by
        what precedes.
      \end{proof}
      Corollary~\ref{corollary_criteria} is very important because it states that for almost all (all
      except a finite number) place $\Pfrak\in\Pbb_{K_N}$, the $p$-Riccati
      equation has a solution in $K_{N,\Pfrak}$. Indeed,
      $\nu_\Pfrak(T')$ being the valuation of the divisor
      $2(x)_--\Diff(K_N)$, the only places where the existence of a
      solution is not obvious are the places where the valuation of the
      divisor $p^{-1}\cdot(y_N)+\Diff(K_N)-2(x)_-$ is negative. 
      Since the divisor $\Diff(K_N)$ is effective, those places are either poles of 
      $y_N$ or poles of $x$.\\
      We now see how to check if the $p$-Riccati equation has a solution in
      those places.

      \begin{Lemme}\label{p-1_power_derivation}
        For any $f\in K_{N,\Pfrak}$, $\frac{\ud^{p-1}}{\ud
        x^{p-1}}f=\frac{\ud^{p-1}}{\ud T^{p-1}}(T'^{p-1}f)$.
      \end{Lemme}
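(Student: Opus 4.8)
For any $f \in K_{N,\Pfrak}$, we have
$$\frac{\ud^{p-1}}{\ud x^{p-1}} f = \frac{\ud^{p-1}}{\ud t_\Pfrak^{p-1}}\bigl(t_\Pfrak'^{\,p-1} f\bigr).$$

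The plan is to prove the stronger statement that $\frac{\ud^{p-1}}{\ud x^{p-1}}$ and $f\mapsto\frac{\ud^{p-1}}{\ud t_\Pfrak^{p-1}}\!\big(t_\Pfrak'^{\,p-1}f\big)$ coincide as operators on $K_{N,\Pfrak}=\Gcal_\Pfrak((t_\Pfrak))$. Write $\delta:=\frac{\ud}{\ud t_\Pfrak}$. Two facts will be used repeatedly: $\delta^p=0$ on $\Gcal_\Pfrak((t_\Pfrak))$ (it annihilates every $t_\Pfrak^n$, since $n(n-1)\cdots(n-p+1)$ is a product of $p$ consecutive integers), and $\ker\delta=(K_{N,\Pfrak})^p$ because $\Gcal_\Pfrak$ is perfect. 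The derivation $\frac{\ud}{\ud x}$ extends continuously to $K_{N,\Pfrak}$, and by the chain rule $\frac{\ud}{\ud x}=t_\Pfrak'\,\delta$ with $t_\Pfrak'=\frac{\ud}{\ud x}t_\Pfrak=\delta(x)^{-1}$, a unit. Since $\frac{\ud}{\ud x}$ and $\delta$ both kill $p$-th powers, both operators in question are additive and $(K_{N,\Pfrak})^p$-linear. Finally $x\notin(K_{N,\Pfrak})^p$ (otherwise $\frac{\ud}{\ud x}x=0$), so $Y^p-x^p$ is the minimal polynomial of $x$ over $(K_{N,\Pfrak})^p$, and as $[K_{N,\Pfrak}:(K_{N,\Pfrak})^p]=p$ the family $1,x,\dots,x^{p-1}$ is a $(K_{N,\Pfrak})^p$-basis of $K_{N,\Pfrak}$. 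Hence it suffices to check the identity on $x^i$ for $0\le i\le p-1$.

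On the left, the power rule gives $\frac{\ud^{p-1}}{\ud x^{p-1}}x^i=i(i-1)\cdots(i-p+2)\,x^{\,i-p+1}$, which vanishes for $0\le i\le p-2$ (the product runs over $p-1$ consecutive integers, one of which is $0$) and equals $(p-1)!$ for $i=p-1$. For the right-hand side I would write $t_\Pfrak'^{\,p-1}=\delta(x)^{1-p}=\delta(x)^{-p}\cdot\delta(x)$, with $\delta(x)^{-p}\in(K_{N,\Pfrak})^p=\ker\delta$, so that $\frac{\ud^{p-1}}{\ud t_\Pfrak^{p-1}}\!\big(t_\Pfrak'^{\,p-1}x^i\big)=\delta(x)^{-p}\,\delta^{p-1}\!\big(x^i\delta(x)\big)$. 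For $0\le i\le p-2$ we have $i+1\not\equiv 0\pmod{p}$, hence $x^i\delta(x)=\tfrac1{i+1}\delta(x^{i+1})$ and $\delta^{p-1}\!\big(x^i\delta(x)\big)=\tfrac1{i+1}\delta^{p}(x^{i+1})=0$, matching the left side. For $i=p-1$, writing $x^{p-1}\delta(x)=x^p\cdot\tfrac{\delta(x)}{x}$ with $x^p\in\ker\delta$ reduces the right-hand side to $\delta(x)^{-p}\,x^p\,\delta^{p-1}\!\big(\tfrac{\delta(x)}{x}\big)$.

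What remains, and where the real content lies, is the identity $\delta^{p-1}\!\big(\tfrac{\delta(x)}{x}\big)=-\big(\tfrac{\delta(x)}{x}\big)^p$. Put $v:=\delta(x)/x$. Conjugating $\delta$ by multiplication by $x$ gives $M_x^{-1}\circ\delta\circ M_x=\delta+M_v$, where $M_v$ is multiplication by $v$; hence $(\delta+M_v)^p=M_x^{-1}\circ\delta^p\circ M_x=0$. On the other hand the classical characteristic-$p$ identity $(\delta+M_v)^p=\delta^p+M_{\delta^{p-1}(v)}+M_{v^p}$ holds for any derivation $\delta$ of a commutative ring of characteristic $p$ (an instance of Jacobson's formula for $p$-th powers in restricted Lie algebras; I would record a short inductive proof). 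Together with $\delta^p=0$ this yields $\delta^{p-1}(v)+v^p=0$, the desired identity. Substituting back, the right-hand side at $i=p-1$ equals $\delta(x)^{-p}\,x^p\cdot\big(-(\delta(x)/x)^p\big)=-1=(p-1)!$ by Wilson's theorem, so the identity holds on $x^{p-1}$ as well. This exhausts the basis and proves the lemma. The main obstacle is precisely this last identity, which I expect to dispatch via the $p$-th power formula above; everything else is bookkeeping with $(K_{N,\Pfrak})^p$-linearity and the elementary power rule.
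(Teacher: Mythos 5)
Your proof is correct, but it follows a genuinely different route from the paper's. The paper works entirely inside the Ore algebra $K_{N,\Pfrak}\langle\partial_*\rangle$ with $\partial_*$ acting as $\ud/\ud t_\Pfrak$: using $\ud^p/\ud x^p=0$, it shows that $(t_\Pfrak'\partial_*)^p$ commutes with every element of $K_{N,\Pfrak}$, hence lies in $K_{N,\Pfrak}[\partial_*^p]$; comparing leading coefficients and evaluating on the constant function $1$ gives $(t_\Pfrak'\partial_*)^p=t_\Pfrak'^{\,p}\partial_*^p$, and cancelling $t_\Pfrak'\partial_*$ on the left (legitimate since the Ore algebra is a domain) yields $(t_\Pfrak'\partial_*)^{p-1}=\partial_*^{p-1}t_\Pfrak'^{\,p-1}$, which is the lemma applied to $f$. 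You instead note that both sides are $(K_{N,\Pfrak})^p$-linear, reduce the check to the basis $1,x,\dots,x^{p-1}$ of $K_{N,\Pfrak}$ over $(K_{N,\Pfrak})^p=\ker\delta$, dispatch $0\le i\le p-2$ directly, and settle $i=p-1$ via the Hochschild--Jacobson identity $(\delta+M_v)^p=\delta^p+M_{\delta^{p-1}(v)}+M_{v^p}$ together with Wilson's theorem. Both arguments rest ultimately on $\delta^p=0$; the paper's is shorter and exploits the structure of the operator ring, whereas yours is more computational and makes the basis-level content explicit at the cost of invoking (or reproving) the $p$-th power formula. One small terminological caveat: when you call $t_\Pfrak'=\delta(x)^{-1}$ ``a unit'' you should mean a nonzero (hence invertible) element of the field $K_{N,\Pfrak}$, not a unit of the valuation ring --- $t_\Pfrak'$ typically has nonzero $\Pfrak$-valuation (e.g.\ at poles of $x$); this does not affect your argument, which only inverts in the field.
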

      \begin{proof}
        We consider the ring of differential operators
        $K_{N,\Pfrak}\langle\partial_*\rangle$ where
        $\partial_*f=f\partial_*+\frac{\ud}{\ud T}(f)$ for all $f\in
        K_{N,\Pfrak}$. We know that
        $\frac{\ud}{\ud x}=T'\frac{\ud}{\ud T}$ so we want to
        show that $(T'
        \partial_*)^{p-1}=\partial_*^{p-1}T'^{p-1}$. We know that for
        all $f\in K_{N,\Pfrak}$, $\frac{\ud^p}{\ud x^p}f=0$. It follows
        that $(T' \partial_*)^p f=\sum_{i=0}^{p}\begin{pmatrix}
        p\\i\end{pmatrix} \frac{\ud^i}{\ud
        x^i}f(T'\partial)^{p-i}=f(T'\partial_*)^p$. Thus
        $(T'\partial_*)^p$ commutes with all the elements of $K_{N,\Pfrak}$
        so it is an element of $K_{N,\Pfrak}[\partial_*^p]$ and is of the
        form $a_1\partial_*^p+a_0$. But the leading coefficients $a_1$ is
        necessarily $T'^{p-1}$ and $a_0=\frac{\ud^p 1}{\ud x^p}=0$. Thus
        we have
        $(T'\partial_*)^{p}=T'^p\partial_*^p=T'\partial_*^pT'^{p-1}$.
        We can simplify by $T'\partial_*$ on the left and get the desired
        equality.
      \end{proof}
      \begin{Theoreme}\label{theorem_system_local}
        Let $\Pfrak\in\Pbb(K_{N})$,
        $i_\Pfrak:K_N\rightarrow\Gcal_\Pfrak((T))$ be a parametrization of
        $K_{N,\Pfrak}$ and let $T'=\iota_\Pfrak(x)^{-1}$ We suppose that
        $\eta:=\nu_\Pfrak(T')-p^{-1}\cdot
        \nu_\Pfrak(y_N)>0$. Let
        $(g_0,g_1,\ldots,g_{\eta-1})\in\Gcal^{\eta}$ be the first $\eta$
        coefficients of $(T')^{p-1}$ and
        $(a_0,\ldots,a_{\eta-1})\in\Gcal^\eta$ be the first $\eta$
        coefficients of $\iota_\Pfrak(a)$ with $a$ being a $p$-th root of $y_N$. Let $\Gcal_\Pfrak=\Fbb_{p^b}$ with
        $b>0$. We identify $\Gcal_\Pfrak^\eta$ with $\Fbb_p^{b\eta}$. We set
        \[D_{p-1}(\Pfrak):=
        \begin{pmatrix}
          \raisebox{0ex}[4em][2em]{\normalfont\Large\bfseries 0}\\
          \begin{matrix}
            g_r&\cdots& g_0&&&&&\\
            \vdots&&&\ddots&&&&\\
            g_{\eta-1-p}&&\cdots&&g_0&0&\cdots&0\\
            g_{\eta-1}&&\cdots&&g_p&g_{p-1}&\cdots&g_0
          \end{matrix}
        \end{pmatrix}\]
        where $D_{p-1}(\Pfrak)\in M_{b\eta}(\Fbb_p)$ is a block matrix and the
        coefficient $g_i$ is the matrix of the multiplication by $g_i$ in
        $\Gcal_\Pfrak$. Let $\Phi$ be the diagonal block matrix in
        $M_{b\eta}(\Fbb_p)$ whose diagonal blocks are all the matrix of the
        Frobenius endomorphism over $\Gcal_\Pfrak$. Then the $p$-Riccati equation
        relative to $N$ has a solution in $K_{N,\Pfrak}$ if and only if the
        system
        \[(\Phi-D_{p-1}(\Pfrak))X=\Phi ^t(a_0,\ldots,a_{\eta-1})\]
        has a solution in $\Gcal_\Pfrak^{\eta}$.
      \end{Theoreme}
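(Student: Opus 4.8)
The plan is to use Corollary~\ref{corollary_criteria} to replace the question by a congruence modulo $t_\Pfrak^{p\,\nu_\Pfrak(t_\Pfrak')}$, expand that congruence in powers of $t_\Pfrak$, and recognise the announced system after a finite truncation. Since $e_\Pfrak = 1 - \nu_\Pfrak(t_\Pfrak')$ we have $p(1-e_\Pfrak) = p\,\nu_\Pfrak(t_\Pfrak')$, so Corollary~\ref{corollary_criteria} says that the $p$-Riccati equation relative to $N$ has a solution in $K_{N,\Pfrak} = \Gcal_\Pfrak((t_\Pfrak))$ if and only if there is $f \in \Gcal_\Pfrak((t_\Pfrak))$ with $\frac{\ud^{p-1}}{\ud x^{p-1}}f + f^p = y_N + O(t_\Pfrak^{p\,\nu_\Pfrak(t_\Pfrak')})$. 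I would also record that $y_N$, being a constant of $K_{N,\Pfrak}$, is a $p$-th power $a^p$ with $a \in \Gcal_\Pfrak((t_\Pfrak))$ (the residue field $\Gcal_\Pfrak$ being perfect), that $\nu_\Pfrak(a) = p^{-1}\nu_\Pfrak(y_N) = \nu_\Pfrak(t_\Pfrak') - \eta$, and that the $a_0,\dots,a_{\eta-1}$ of the statement are the coefficients of $t_\Pfrak^{\nu_\Pfrak(a)}, \dots, t_\Pfrak^{\nu_\Pfrak(a)+\eta-1}$ in $a$.

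Next I would expand the congruence. Lemma~\ref{p-1_power_derivation} rewrites $\frac{\ud^{p-1}}{\ud x^{p-1}}f$ as $\frac{\ud^{p-1}}{\ud t_\Pfrak^{p-1}}\bigl((t_\Pfrak')^{p-1}f\bigr)$; a product of $p-1$ consecutive integers is divisible by $p$ unless its largest factor is $\equiv -1 \Pmod p$, in which case the product is $\equiv (p-1)! \equiv -1 \Pmod p$ by Wilson, so $\frac{\ud^{p-1}}{\ud t_\Pfrak^{p-1}}(t_\Pfrak^m) = -t_\Pfrak^{m-p+1}$ when $m \equiv -1 \Pmod p$ and $0$ otherwise. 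Writing $f = \sum_j f_j t_\Pfrak^j$, this shows that $\frac{\ud^{p-1}}{\ud x^{p-1}}f$, $f^p$ and $y_N$ all involve only exponents divisible by $p$, and that the coefficient of $t_\Pfrak^{pk}$ in $\frac{\ud^{p-1}}{\ud x^{p-1}}f + f^p - y_N$ has the shape $f_k^p - \bigl((t_\Pfrak')^{p-1}f\bigr)_{pk+p-1} - c^p$ with $c$ a coefficient of $a$, hence is $\Fbb_p$-linear in the coefficients of $f$ and of $a$ once the $p$-th power is read as the Frobenius of $\Gcal_\Pfrak$.

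Then I would carry out a support reduction. I expect to show that a solution of the congruence may be taken with support in $\ldbrack\nu_\Pfrak(a);\nu_\Pfrak(t_\Pfrak')-1\rdbrack$, a set of cardinality $\eta$: a coefficient $f_j$ with $j \geq \nu_\Pfrak(t_\Pfrak')$ contributes to $f^p$ and to $\frac{\ud^{p-1}}{\ud x^{p-1}}f$ only in degrees $\geq p\,\nu_\Pfrak(t_\Pfrak')$ — for the second, although differentiating $p-1$ times drops valuations by up to $p-1$, the result still has exponents divisible by $p$, and no multiple of $p$ lies strictly between $p\,\nu_\Pfrak(t_\Pfrak')-(p-1)$ and $p\,\nu_\Pfrak(t_\Pfrak')$ — so these $f_j$ may be set to $0$; and a short valuation estimate using $\eta \geq 1$ forbids $\nu_\Pfrak(f) < \nu_\Pfrak(a)$. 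Putting $x_i := f_{\nu_\Pfrak(a)+i}$ for $0 \leq i < \eta$, the congruence becomes equivalent to the $\eta$ scalar equations given by the coefficients of $t_\Pfrak^{pk}$ for $\nu_\Pfrak(a) \leq k \leq \nu_\Pfrak(t_\Pfrak')-1$. Expanding $(t_\Pfrak')^{p-1} = \sum_{m\geq 0}g_m t_\Pfrak^{(p-1)\nu_\Pfrak(t_\Pfrak')+m}$ and putting $k = \nu_\Pfrak(a)+r$, the $r$-th of these equations reads $x_r^p - \sum_{s=0}^{\eta-1}g_{(p-1)(1-\eta)+pr-s}\,x_s = a_r^p$ (with $g_m := 0$ for $m < 0$); the subscript $(p-1)(1-\eta)+pr-s$ sweeps exactly $\ldbrack 0;\eta-1\rdbrack$ on the support of the sum and decreases by $p$ each time $r$ decreases by $1$, so identifying $\Gcal_\Pfrak = \Fbb_{p^b}$ with $\Fbb_p^{b}$ — multiplication by $g_m$ becoming the $b\times b$ block $g_m$ and the Frobenius becoming $\Phi$ — turns these equations into exactly $(\Phi - D_{p-1}(\Pfrak))X = \Phi\,{}^t(a_0,\dots,a_{\eta-1})$. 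Hence this system and the $p$-Riccati equation in $K_{N,\Pfrak}$ are simultaneously solvable, which is the claim.

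The step I expect to be the main obstacle is the support reduction: pinning down precisely which coefficients of $f$ are free, checking that discarding the others does not affect solvability modulo $t_\Pfrak^{p\,\nu_\Pfrak(t_\Pfrak')}$, and verifying that the resulting banded matrix has exactly the displayed shape (the block of zero rows at the top, the step-$p$ banding of the $g_m$). All of this must be done uniformly in the sign of $\nu_\Pfrak(t_\Pfrak')$, so as to handle unramified, tamely and wildly ramified places and poles of $x$ at once — in particular the cases where $(t_\Pfrak')^{p-1}$ itself has a pole at $\Pfrak$.
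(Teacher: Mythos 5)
Your proof is correct and follows essentially the same route as the paper's: reduce to a congruence modulo $t_\Pfrak^{p\nu_\Pfrak(t_\Pfrak')}$ via Corollary~\ref{corollary_criteria}, apply Lemma~\ref{p-1_power_derivation} to express $\frac{\ud^{p-1}}{\ud x^{p-1}}f$ through $t_\Pfrak$-Taylor coefficients, truncate to the window $\ldbrack\nu_\Pfrak(a);\nu_\Pfrak(t_\Pfrak')-1\rdbrack$ using the same valuation bound the paper proves, and read off the banded $\Fbb_p$-linear system. The only difference is that you spell out the ``straightforward computation'' the paper omits (Wilson's theorem for $\frac{\ud^{p-1}}{\ud t_\Pfrak^{p-1}}(t_\Pfrak^m)$ and the explicit indexing $m=(p-1)(1-\eta)+pr-s$), which is worth keeping.
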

      \begin{proof}
        Let $f\in K_{N,\Pfrak}$ verifying $\frac{\ud^{p-1}}{\ud
        x^{p-1}}f+f^p=y_N$. Then $\nu_\Pfrak(f)\geq p^{-1}\nu_\Pfrak(y_N)$.
        Indeed we know that
        $\nu_\Pfrak(y_N)\geq\min(p\nu_\Pfrak(f),\nu_\Pfrak(\frac{\ud^{p-1}}{\ud
        x^{p-1}}f))$. If we had $\nu_\Pfrak(f)<p^{-1}\nu_\Pfrak(y_N)$ then
        in particular $\nu_\Pfrak(f)<\nu_\Pfrak(T')-1$. Since
        $\nu_\Pfrak\left(\frac{\ud^{p-1}f}{\ud x^{p-1}}\right)\geq
        \nu_\Pfrak(f)+(p-1)(\nu_\Pfrak(T')-1)$, we find that 
        $\nu_\Pfrak\left(\frac{\ud^{p-1}f}{\ud
        x^{p-1}}\right)>p\nu_\Pfrak(f)$. Thus we would have 
        $\nu_\Pfrak(y_N)=\min(p\nu_\Pfrak(f),\nu_\Pfrak(\frac{\ud^{p-1}}{\ud
        x^{p-1}}f))=p\nu_\Pfrak(f)$, which is a contradiction. Thus
        $\nu_\Pfrak(f)\geq p^{-1}\cdot\nu_\Pfrak(y_N)$.\\
        We set $f:=\sum_{k=0}^\infty
        f_k T^{k+p^{-1}\cdot\nu_\Pfrak(y_N)}$. We claim that 
        $X={}^t(f_0,\ldots,f_{\eta-1})$ is a solution of the system 
        $(\Phi-D_{p-1}(\Pfrak))X=\Phi \cdot {}^t(a_0,\ldots,a_{\eta-1})$. It is in fact
        enough to check that the vector $-D_{p-1}(\Pfrak)
        ^t(f_0,\ldots,f_{\eta-1})$ is the vector of the coefficients of
        $T^{pk}$ in $\frac{\ud^{p-1}f}{\ud x^{p-1}}$ for
        $k\in\ldbrack p^{-1}\cdot\nu_\Pfrak(y_N);
        \nu_\Pfrak(T')-1\rdbrack$.  We know
        that $\frac{\ud^{p-1}}{\ud x^{p-1}}(f)=\frac{\ud^{p-1}}{\ud
        T^{p-1}}(T'^{p-1}f)$ and the result follows from a
        straightforward computation.\\
        Conversely, if 
        $(\Phi-D_{p-1}(\Pfrak))X=\Phi \cdot {}^t(a_0,\ldots,a_{\eta-1})$ has a solution
        $(f_0,\ldots,f_{\eta-1})\in\Gcal_\Pfrak^\eta$ then we set
        $f=\left(\sum_{k=0}^{\eta-1}f_kT^{k}\right)T^{p^{-1}\cdot\nu_\Pfrak(y_N)}$
        and claim that $\frac{\ud^{p-1}}{\ud
        x^{p-1}}f+f^p=y_N+O(T^{p(1-e_\Pfrak)})$ which proves the
        existence of a solution according to
        Corollary~\ref{corollary_criteria}. From what precedes, we know
        that the equality is true for the coefficients of $t^{pk}$ for
        $k\in\ldbrack p^{-1}\cdot \nu_\Pfrak(y_N),-e_\Pfrak\rdbrack$.
        Furthermore, since $\frac{\ud^{p-1}}{\ud x^{p-1}}f+f^p$ and $y_N$
        are constant, all the other coefficients before
        $T^{p(1-e_\Pfrak)}$ are equal to zero which gives the
        desired result.
      \end{proof}
    
    We can now write an algorithm for testing the irreducibility of an
    operator $N(\partial^p)$ where $N$ is an irreducible polynomial over
    $C$. From Lemma~\ref{lemma_struc_K}, we know that we can take $N_*\in K[Y]$
    such that $N_*^p(Y)=N(Y^p)$ and $K_N\simeq\nicefrac{K[Y]}{N_*}$. If we
    denote by $a$ the image of $Y$ in $K_N$ then $y_N=a^p$. This
    representation is easier to manipulate (because smaller by a factor $p$
    in all generality) so we consider that the entry of our algorithm is
    the polynomial $N_*$.   
    \begin{algorithm}
      \begin{minipage}{.95\textwidth}
  \begin{flushleft}
    \emph{Input:} $N_*\in K[Y]$ a separable irreducible polynomial.\\
    \emph{Output:} Whether or not $N_*^p(\partial)$ is irreducible in
    $K\langle\partial\rangle$
  \end{flushleft}
  \BlankLine
    \begin{enumerate}
      \item Set $K_N:=K[a]=\nicefrac{K[Y]}{N}$ where $a$ is a root of $N$.
      \item Compute $\Sbb:=\Supp(a)_- \cup\Supp(x)_-$.
      \item For $\Pfrak$ in $\Sbb$ do:
        \begin{enumerate}
          \item Compute $\iota_\Pfrak:K_{N,\Pfrak}\rightarrow
            \Gcal_\Pfrak((T))$ a parametrization of $K_{N,\Pfrak}$.
          \item Compute $T'=\iota_\Pfrak(x)^{-1}$ and set
            $\eta:=\nu_\Pfrak(T')-\nu_\Pfrak(a)$.
          \item If $\eta>0$ do:
            \begin{enumerate}
              \item Compute $T'$ at relative precision $\eta$.
              \item With fast exponentiation on $T'$, compute the first 
                $\eta$ coefficients\\ $(g_0,\ldots,g_{\eta-1})$ of
                $T'^{p-1}$.
              \item Compute $(a_0,\ldots,a_{\eta-1})$ the first $\eta$
                coefficients of $\iota_\Pfrak(a)$.
              \item Check if the system
                $(\Phi-D_{p-1}(\Pfrak))X=\Phi
                ^t(a_0,\ldots,a_{\eta-1})$ defined in
                Theorem~\ref{theorem_system_local} has a solution in
                $\Gcal_\Pfrak^\eta$.
          \item If it doesn't, return \textbf{False} and stop the
            algorithm.
            \end{enumerate}
        \end{enumerate}
      \item return \textbf{True}
    \end{enumerate}
    \caption{irreducibility\_test\_outline}
    \label{algo_irreducibility}
      \end{minipage}
  \end{algorithm}
  The correctness of Algorithm~\ref{algo_irreducibility} is easily deduced
  from the discussion that precedes. However
  Algorithm~\ref{algo_irreducibility} says very little on the way
  the parametrization $\iota_\Pfrak$ in step~(3a) should be computed should be
  computed. For that matter it also doesn't explain how to compute $(a)_-$
  or $(x)_-$. One way to proceed would be to use
  \emph{OM}-factorization algorithms~\cite{GuMoNa11,BaNaSt12,PoWe22} which,
  although it was primarily conceived for prime ideal factorisation in
  number fields, can be quite easily adapted to the context of algebraic
  function fields and can be used to recover prime elements $t_\Pfrak$ of small
  ``size" in $K_N$. As \emph{OM}-factorisation algorithms can also be
  used to compute evaluations in residue fields, it enables us to compute
  Taylor expansions of $t_\Pfrak'$ and $a$ in a naive way.\\

  It turns out that proceeding this way is probably a waste of the
  informations provided by these algorithms and not very efficient.
  Instead we now show how to use the rational Puiseux
  expansions~\cite{Duval89} of $N_*$
  to solve those problems. Rational Puiseux expansions of $N_*$ are not
  always available if $\deg_Y(N_*)=\deg_Y(N)\geq p$ as some places may be
  wildly ramified. Should this problem arise, we could instead use
  Hamburger-Noether expansions of $N_*$~\cite{CaFa99}. While those have the benefit of
  always existing, there exists to our knowledge no complexity bounds for
  them. From now on we proceed under the assumption that rational Puiseux
  Expansions are available to us. It turns out that checking the existence
  of a local solution in $\Pfrak$ to the $p$-Riccati
  equation is even simpler when the place $\Pfrak$ has a rational Puiseux
  expansion.

  \begin{Lemme}\label{lemma_simplesyslocal}
    Let $\Pfrak$ be either a pole of $a$ or a places at infinity. If there
    exists $(\xi,\gamma,\tilde{y})\in\Gcal_\Pfrak^2\times\Gcal_\Pfrak((T))$
    such that
    \[\begin{array}{crcl}
      i_\Pfrak:&K_N&\rightarrow&\Gcal_\Pfrak((T))\\
      &x&\mapsto&\xi+\gamma T^{e_\Pfrak}\\
      &a&\mapsto&\tilde{y}
    \end{array}\]
    is well defined, continuous for the topology associated to $\Pfrak$ and
    dense, and $a_{\eta-1}$ is the coefficient of $T^{\nu(T')-1}$ in $\tilde{y}$ then
    the $p$-Riccati equation relative to $N$ has a local solution in
    $K_{N,\Pfrak}$ if and only if the equation
    \[x^p-\gamma^{1-p}x=a_{\eta-1}\] has a solution in $\Gcal_\Pfrak$.
  \end{Lemme}
  \begin{Remarque}
    The existence of such a parametrisation of $K_{N,\Pfrak}$ in particular
    proves that $p\nmid e_\Pfrak$. Indeed, any parametrisation of
    $K_{N,\Pfrak}$ extends the derivation of $K_N$ to $\Gcal_\Pfrak((T))$
    uniquely by continuity. However if $p|e_\Pfrak$ then we would have
    $\frac{\ud}{\ud x}x=0$ which is absurd.
  \end{Remarque}
  \begin{proof}
    If such $\xi,\gamma,\tilde{y}$ exists then the morphism of the Lemma
    gives a parametrization of $K_{N,\Pfrak}$. Let
    $\eta=\nu(T')-\nu_\Pfrak(a)=1-e_i-\nu_\Pfrak(a)$ and let
    $a_0,\dots,a_{\eta-1}$ be the first coefficients of $\tilde{y}$. Then
    according to Theorem~\ref{theorem_system_local}, the $p$-Riccati
    equation relative to $N$ has a local solution in $K_{N,\Pfrak}$ if and
    only if the system
    $(\Phi-D_{p-1}(\Pfrak))X=\Phi^t(a_0,\dots,a_{\eta-1})$ has a solution.
    But this system is block lower triangular and all the diagonal blocks
    except the last one are invertible as the matrices of the Frobenius
    endomorphism. Thus the obstruction can only occur in the last row.
    Since $T'^{p-1}=\gamma^{1-p} T^{(p-1)(1-e_\Pfrak)}$ the equation given
    by the last row is precisely the one written in the lemma.
  \end{proof}

  For the rest of this section we proceed under the assumption that
  $\deg_Y(N)=\deg_Y(N_*)<p$ and restrain ourselves to the case
  $K=\Fbb_q(x)$ with $q=p^b$. Let $N_*\in\Fbb_q[x,y]$ with $d_x=\deg_x
  N_*$ and $d_y=\deg_y N_*$. It follows that $K_N$ is a field extension of
  $\Fbb_q(x)$ of degree $d_y$. As such, any element $f\in K_N$ can be
  represented by $d$ rational functions in $\Fbb_q(x)$. 

  \begin{Notation}For any element
  $f=\frac{1}{D_f}\sum_{i=0}^{d_y-1} f_i a^i\in K_N$ such that
  $D_f,f_0,\ldots,f_{d_y-1}\in\Fbb_q[x]$ with
  $\gcd(D_f,f_0,\ldots,f_{d_y-1})=1$ we write
  \[\deg f:=\max(\deg D_f,\deg f_0,\ldots,\deg f_{d_y-1}).\]
  \end{Notation}

  \begin{algorithm}
      \begin{minipage}{.95\textwidth}
  \begin{flushleft}
    \emph{Input:} $N_*\in \Fbb_q[X,Y]$ a separable irreducible polynomial.\\
    \emph{Output:} Whether or not $N^p(\partial)$ is irreducible in
    $\Fbb_q(x)\langle\partial\rangle$
  \end{flushleft}
  \BlankLine
    \begin{enumerate}
      \item $l_c\leftarrow$ the leading coefficient of $N_*$.
      \item Factorize $l_c$ in $\Fbb_q[X]$.
      \item \textbf{For} $l$ ranging across all irreducible factors of
        $l_c$ \textbf{do}:
        \begin{enumerate}
          \item $\Kbb_l\leftarrow\nicefrac{\Fbb_q[X]}{l}$, $\xi_l\leftarrow$ the image
            of $X$ in $\Kbb_l$.
          \item $P_l\leftarrow N_*(X+\xi_l,Y)$
          \item Use~\cite{PoWe21} to compute the set
            $\{(\gamma_1 T^{e_1},\lceil
            y_1\rceil,\Kbb_1),\dots,(\gamma_{g_l}T^{e_{g_l}},\lceil
            y_{g_l}\rceil,\Kbb_{g_l})\}$ associated to $P_l$ as in Theorem~\ref{theorem_compratPuiseux}.
          \item \textbf{For} $1\leq i\leq g_l$ \textbf{do}:
            \begin{enumerate}
              \item $\eta\leftarrow 1-e_i-\nu_T(\lceil y_i\rceil)$
              \item \textbf{If} $\nu\geq 0$ \textbf{do}:
                \begin{enumerate}
                  \item Compute $a_0,\dots,a_{\eta-1}$ the first $\eta$
                    coefficients of $\tilde{y_i}\in\Kbb_i((T))$ as in
                    Theorem~\ref{theorem_compratPuiseux}(ii)
                  \item Check if the equation
                    $x^p-\gamma_i^{1-p}x=a_{\eta-1}$ has a solution in
                    $\Kbb_i$.
                  \item If it doesn't, return \textbf{False} and stop the
                    algorithm.
                \end{enumerate}
            \end{enumerate}
        \end{enumerate}
      \item $P_\infty\leftarrow X^{\deg_X(N_*)}N_*(1/X,Y)$
      \item Repeat steps (3c) and (3d) for $P_\infty$.
       %   \item Use~\cite{PoWe21} to compute the set
       %     $\{(\gamma_1 T^{e_1},\lceil
       %     y_1\rceil,\Kbb_1),\dots,(\gamma_{g_\infty}T^{e_{g_\infty}},\lceil
       %     y_{g_\infty}\rceil,\Kbb_{g_\infty})\}$ associated to $P_\infty$ as in Theorem~\ref{theorem_compratPuiseux}.
       %   \item \textbf{For} $1\leq i\leq g_\infty$ \textbf{do}:
       %     \begin{enumerate}
       %       \item $\eta\leftarrow 1-e_i-\nu_T(\lceil y_i\rceil)$
       %       \item \textbf{If} $\nu\geq 0$ \textbf{do}:
       %         \begin{enumerate}
       %           \item Compute $a_0,\dots,a_{\eta-1}$ the first $\eta$
       %             coefficients of $\tilde{y_i}\in\Kbb_i((T))$ as in
       %             Theorem~\ref{theorem_compratPuiseux}(ii)
       %           \item Check if the system
       %             $(\Phi-D_{p-1,\log_p(\#\Kbb_i)}(\gamma_i^{1-p},0,\dots,0))X=\Phi^t(a_0,\dots,a_{\eta-1})$
       %             has a solution in $\Kbb_i$.
       %           \item If it doesn't, return \textbf{False} and stop the
       %             algorithm.
       %         \end{enumerate}
       %     \end{enumerate}
          \item return \textbf{True}.
    \end{enumerate}
        \caption{irreducibility\_test\_on\_$\Fbb_q(x)$}
    \label{algo_irreducibility_rational}
      \end{minipage}
  \end{algorithm}
\begin{Remarque}
  Note that we may also remove step (6) in
  Algorithm~\ref{algo_irreducibility_rational} if we instead had
  Theorem~\ref{theorem_compratPuiseux} yield all the rational Puiseux
  expansions of $P_\infty$ instead of just the non integral ones.
\end{Remarque}

%  \paragraph*{Computing local uniformisers}
  \begin{Theoreme}
    Let $N_*\in\Fbb_{p^b}[x,y]$ be an irreducible polynomial and let
    $d_x:=\deg_x(N_*)$ and $d_y:=\deg_y(N_*)$. Not counting the cost of
    univariate factorisations, if $d_y<p$,  
    Algorithm~\ref{algo_irreducibility_rational} determines whether
    $N_*^p(\partial)$ is
    irreducible or not in $\tilde{O}((d_yd_x^2+d_y^2dx+b^{\omega-1}(d_x+d_y)^\omega)b\log(p)+b^2(d_x+d_y)^2\log(p)^2)$ 
    bit operations.
  %  solving the
  %  local problems can be done at the cost of computing $(a)_-$ and $(x)_-$ as well as
  %  $O_\varepsilon(d_xd_y b\log(p))$ bit operations from computing
  %  \emph{OM-representations} and prime elements, $O(d_x+d_y)$ evaluation
  %  of functions in $K_N$ of degree $O(d_x^2d_y+d_xd_y^2)$ and
  %  $O_\varepsilon(b^\omega(d_x+d_y)^\omega\log^2(p)+(d_x^3d_y^2+d_x^2d_y^3+d_xd_y^4)b\log(p))$
  %  bit operations.
  \end{Theoreme}
  \begin{proof}
    Let us first prove that the algorithm is correct. We begin by proving
    the following lemma:
    \begin{Lemme}\label{lemme_proofirredtesthm}
      Let $l$ be an irreducible factor of $l_c$. With the notations of
      Algorithm~\ref{algo_irreducibility_rational}, for all
      $i\in\ldbrack1,g_l\rdbrack$
      \[\begin{array}{rcl}
        K_{N_*}&\rightarrow& \Kbb_i((T))\\
        X&\mapsto&\gamma_i T^{e_i}+\xi_l\\
        a&\mapsto& \tilde{y_i}
      \end{array}\] is well defined, dense and induces a valuation on $K_{N_*}$
      associated to a place above $l$. The map that to $i$ associate
      this pole is surjective.
    \end{Lemme}
    \begin{proof}
      The fact that the morphisms are well-defined and induces a valuation
      on $K_{N_*}$ is obvious. To show that they are dense, let
      $i\in\ldbrack1;g_l\rdbrack$.
      There is an irreducible factor $N_i$ of $N_*(X+\xi_l,Y)$ in $\Kbb_l[X,Y]$ such
      that $N_i(\tilde{x_i},\tilde{y_i})=0$. Let $F=\Kbb_l\cdot K_{N_*}$.
      Then $F\simeq \nicefrac{\Kbb_l(X)[Y]}{N_i}$ through the morphism
      $(X\mapsto X+\xi_l,Y\mapsto Y)$. The fact that
      $\nicefrac{\Kbb_l(X)[Y]}{N_i}\rightarrow \Kbb_i((T))$, $X\mapsto
      \tilde{x_i}$, $Y\mapsto\tilde{Y_i}$ is dense is just
      Proposition~\ref{prop_ratpuiseux}. It follows that $F\rightarrow
      \Kbb_i((T))$ is dense and since $F/K_N$ is a constant field extension
      it preserves completions which proves the density. Since the
      valuation of $X-\xi$ is $e_i$ this shows that the associated place
      lies above $l$.\\

      To show that it is surjective, let $\Pfrak$ be a place above
      $l$, and $\Pfrak'\in\Pbb_F$ the unique place lying above $\Pfrak$ and
      $X-\xi$. Through the shift $X\mapsto X+\xi_l$ we may assume that
      $\Pfrak'$ lies above $X$. Then according to
      Corollary~\ref{cor_place-factor}, there is an irreducible factor
      $N_i$ of
      $N_*$ in $\Kbb_l\ldbrack X\rdbrack[Y]$ such that $F_{\Pfrak'}\simeq
      \Kbb_l((x))[Y]/N_i$. According to Proposition~\ref{prop_ratpuiseux},
      if $\tilde{x},\tilde{y}$ is a rational Puiseux expansion of $N_*$
      associated to the factor $N_i$, of coefficient field $\Kbb$ then the
      morphism $(F\rightarrow \Kbb((T))$, $X\mapsto \tilde{x}$, $Y\mapsto
      \tilde{y}$) is continuous and dense. Thus we
      have $K_{N_*,\Pfrak}\simeq F_{\Pfrak'}$ which concludes the proof.
    \end{proof}
    Lemma~\ref{lemme_proofirredtesthm} proves that steps (3) let us compute
    parametrisation of $K_{N,\Pfrak}$ for all the finite poles of $a$ (in
    fact a little more because we also consider the integral Puiseux
    expansions of $N$ but those are cancelled by the condition
    $1-e_i-\nu(\lceil y_i\rceil)\geq 0$ and do not change the overall
    complexity) and
    that all the computations that we do correspond to one of those. If
    that was not the case then we might have a test failing that correspond
    to none of the places of $K_N$.\\

    In a similar manner, we show that step $(4)$ $(5)$ capture all the
    places at infinity. The proof of the
    correctness of Algorithm~\ref{algo_irreducibility_rational} now follows
    easily from Lemma~\ref{lemma_simplesyslocal}\\

    For the proof of complexity we first evaluate the cost of steps (3a) to
    (3d) in terms of arithmetic operations in $\Kbb_l$. Step (3b) is a
    shift of $d_y+1$ polynomials of degree at most $d_x$ in $\Kbb_l[X]$ and can thus be done
    $\tilde{O}(d_yd_x)$ arithmetic operations in $\Kbb_l$. We state in
    Theorem~\ref{theorem_compratPuiseux}, following~\cite{PoWe21} that
    step (3c) can be done in $\tilde{O(d_y\delta)}$ operations in $\Kbb_l$
    where $\delta$ is the valuation in $X$ of
    $\mathrm{Res}_Y(N_*(X+\xi_l,Y),\partial_YN_*(X+\xi_l,Y))$. This is the same as the
    valuation in $l$ of $\mathrm{Res}_Y(N_*,\partial_YN_*)$ which we denote
    $\delta_l$.
    We finally study the cost of step (d). For each $i\in\ldbrack
    1;g_l\rdbrack$, we denote $f_i=[\Kbb_i:\Kbb_l]$. Note that we must have
    $\sum_{i=1}^{g_l}e_if_i\leq d_y$. Indeed, $N_*$ factors in over
    $\Kbb_l[X,Y]$ as a product of conjugated irreducible factors of the
    same degree $d$. If we group our rational Puiseux expansion depending on
    which factor of $N_*$ they come from, we have $\sum e_if_i\leq m$ for
    each group. As they are as many groups as they are factors of $N_*$ we
    have the aforementioned inequality. Computing the rational Puiseux
    expansions from their singular part up to the desired precision can be
    done in quasilinear time following~\cite[Corollary~5.1 and 5.2 p.
    251]{KuTr78} so $\tilde{O}(\pm e_if_i+f_i\nu_i((a)_-))$ operations in
    $\Kbb_l$ (the $\pm e_if_i$ part is in fact negative for finite places so it
    only matters for the places at infinity.

    In total, step (3) excluding solving the linear system in (3b) cost
    \[\tilde{O}(\sum_{l|l_c}\deg(l)(d_y
    d_x+d_y\delta_l+\sum_{\Pfrak|l}e_\Pfrak[\Gcal_\Pfrak:\Kbb_l]+[\Gcal_\Pfrak:\Kbb_l]\nu_\Pfrak((a)_-)))\]
    arithmetic operations over $\Fbb_p$. Now using that
    $\sum_{l|l_c}\delta_l\deg(l)\leq\deg(\mathrm{Res}_Y(N_*,\partial_YN_*))=\tilde{O}(d_yd_x)$,
    $\sum_{\Pfrak|l}e_\Pfrak[\Gcal_\Pfrak:\Kbb_l]\leq d_y$ and
    $\sum_{\Pfrak}[\Gcal_\Pfrak:\Kbb_l]\deg(l)\nu_\Pfrak((a)_-)=\deg((a)_-)=d_x$
    we find that all of those computations can be done in
    $\tilde{O}(d_yd_x^2+d_y^2d_x)$ arithmetic operations in $\Fbb_q$.

    The step $(3b)$ can be done in
    $\tilde{O}((b^2f_i^2\deg(l)^2\log(p)+b^\omega f_i^\omega\deg(l)^{\omega}))$
    arithmetic operations in $\Fbb_p$ (where $\omega$ is an exponent for
    the matrix multiplication). We remember that this computation
    can only happen for the place which are poles of $(x)$ or poles of
    $(a)$. Thus the total cost is at most
    $\tilde{O}(b^2(\deg(x)_-+\deg(a)_-)^2\log(p)+b^\omega(\deg(x)_-+\deg(a)_-)^\omega)=\tilde{O}(b^2(d_y+d_x)^2\log(p)+
    b^\omega(d_y+d_x)^\omega)$ arithmetic operations in $\Fbb_p$.\\

    $P_\infty$ has the same overall degrees as $N_*$ so the cost of step
    $(4)$ does not change the overall complexity.

    It follows that the overall bit complexity of
    Algorithm~\ref{algo_irreducibility_rational} is
    $\tilde{O}((d_yd_x^2+d_y^2dx+b^{\omega-1}(d_x+d_y)^\omega)b\log(p)+b^2(d_x+d_y)^2\log(p)^2)$
    bit operations.

  \end{proof}

\section{Solving the norm equation equation}\label{section_resolution}
    The goal of this section is to present an algorithm to solve the
    $p$-Riccati equation relative to an irreducible polynomial $N\in C[Y]$
    over $K_N$. We discuss its complexity and its applications to the
    factorisation of differential operators in $K\langle\partial\rangle$.
    This algorithm makes use of algebraic
    geometry tools such as Riemann-Roch spaces and the Picard group of $K_N$.\\
    We recall that van der Put and Singer presented,
    in~\cite[§13.2.1]{PuSi03}, a method to compute
    solutions of $p$-Riccati equations over $\Fbb_q(x)$. Their method will serve as a guideline
    for the techniques we develop in the general case.\\
    We keep the notations of the previous section. In addition, we suppose
    that $N\in C[Y]$ is a fixed irreducible polynomial and use the
    notations introduced in Notation~\ref{notation_K_N}. In particular,
    we recall that $S_N$ denotes the set of solutions of the $p$-Riccati
    equation $f^{(p-1)}+f^p=y_N$.
  \begin{Proposition}\label{borne_val}
    Let $\Pfrak$ be a place of $K_N$, $t_{\Pfrak}$ be a prime element of
    $\Pfrak$. Then, for all $f\in S_N$, we have
    $\nu_\Pfrak(f)\geq\min(p^{-1}\nu_{\Pfrak}(y_N),\nu_\Pfrak(t'_\Pfrak)-1)$.
  \end{Proposition}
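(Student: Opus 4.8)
The plan is to localise at the place $\Pfrak$ and examine the valuations of the three terms in the equation $f^{(p-1)}+f^p=y_N$, working in the completion $K_{N,\Pfrak}=\Gcal_\Pfrak((t_\Pfrak))$. The key tool is the estimate on the valuation of a $(p-1)$-th derivative: by Lemma~\ref{p-1_power_derivation} we have $\frac{\ud^{p-1}}{\ud x^{p-1}}f=\frac{\ud^{p-1}}{\ud t_\Pfrak^{p-1}}\bigl((t_\Pfrak')^{p-1}f\bigr)$, and since differentiating a Laurent series with respect to the uniformiser $t_\Pfrak$ drops the valuation by at most $1$ each time (it drops by exactly $1$ unless the leading term is killed by $p$), we get
\[
\nu_\Pfrak\!\left(\frac{\ud^{p-1}f}{\ud x^{p-1}}\right)\ \geq\ (p-1)\nu_\Pfrak(t_\Pfrak')+\nu_\Pfrak(f)-(p-1)\ =\ \nu_\Pfrak(f)+(p-1)\bigl(\nu_\Pfrak(t_\Pfrak')-1\bigr).
\]

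First I would suppose, for contradiction, that $\nu_\Pfrak(f)<\min\bigl(p^{-1}\nu_\Pfrak(y_N),\ \nu_\Pfrak(t_\Pfrak')-1\bigr)$. From $\nu_\Pfrak(f)<\nu_\Pfrak(t_\Pfrak')-1$ the displayed inequality gives $\nu_\Pfrak\!\left(\frac{\ud^{p-1}f}{\ud x^{p-1}}\right)>p\,\nu_\Pfrak(f)=\nu_\Pfrak(f^p)$, so the two summands on the left-hand side have distinct valuations and the ultrametric inequality is an equality: $\nu_\Pfrak\bigl(f^{(p-1)}+f^p\bigr)=p\,\nu_\Pfrak(f)$. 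On the other hand the left-hand side equals $y_N$, whence $p\,\nu_\Pfrak(f)=\nu_\Pfrak(y_N)$, i.e.\ $\nu_\Pfrak(f)=p^{-1}\nu_\Pfrak(y_N)$, contradicting $\nu_\Pfrak(f)<p^{-1}\nu_\Pfrak(y_N)$. This closes the argument.

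The one point requiring a little care — and the main (minor) obstacle — is justifying the valuation drop of $\frac{\ud}{\ud t_\Pfrak}$ on Laurent series: one must note that $\nu_\Pfrak\bigl(\frac{\ud}{\ud t_\Pfrak}H\bigr)\geq\nu_\Pfrak(H)-1$ always (the coefficient of $t_\Pfrak^{\nu_\Pfrak(H)-1}$ in the derivative is $\nu_\Pfrak(H)\cdot h_{\nu_\Pfrak(H)}$, which may vanish mod $p$, giving $\geq$ rather than $=$), and then iterate $p-1$ times together with the factor $(t_\Pfrak')^{p-1}$ pulled out by Lemma~\ref{p-1_power_derivation}. Everything else is a routine comparison of valuations via the ultrametric inequality. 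Note this proposition is essentially the valuation bound already extracted inside the proof of Theorem~\ref{theorem_system_local}; here it is simply isolated and stated for all places at once.
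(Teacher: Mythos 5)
Your proof is correct and uses essentially the same argument as the paper: the key estimate $\nu_\Pfrak(f^{(p-1)})\geq\nu_\Pfrak(f)+(p-1)(\nu_\Pfrak(t_\Pfrak')-1)$ when $\nu_\Pfrak(f)<\nu_\Pfrak(t_\Pfrak')-1$, followed by the ultrametric equality applied to $f^{(p-1)}+f^p=y_N$. The only (cosmetic) difference is that you argue by contradiction whereas the paper argues directly, and you spell out the justification of the valuation estimate via Lemma~\ref{p-1_power_derivation}, which the paper simply asserts; your added care there is appropriate and sound.
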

  \begin{proof}
    We have
    \begin{align*}
      \nu_{\Pfrak}(y_N)&=\nu_{\Pfrak}(f^{(p-1)}+f^p)\\
      &\geq\min(\nu_{\Pfrak}(f^{(p-1)},p\nu_{\Pfrak}(f)).
    \end{align*}
    Furthermore equality holds if
    $\nu_{\Pfrak}(f^{(p-1)})\neq p\nu_{\Pfrak}(f)$.
    Since $\nu_\Pfrak(f^{(p-1)})\geq
    \nu_\Pfrak(f)+(p-1)(\nu_\Pfrak(t_\Pfrak')-1)$ if
    $\nu_{\Pfrak}(f)<\nu_{\Pfrak}(t_\Pfrak')-1$, we find in particular that
    $p\nu_{\Pfrak}(f)<\nu_{\Pfrak}(f)+(p-1)(\nu_{\Pfrak}(t_\Pfrak')-1)\leq
    \nu_{\Pfrak}(f^{(p-1)})$ so $\nu_{\Pfrak}(y_N)=p\nu_{\Pfrak}(f)$.
  \end{proof}
  In fact we can show that if solutions exists, some of them verify a
  slightly better bound.
  \begin{Definition-Proposition}\label{residue_manip}
    Let $f\in S_N$ and $\Pfrak$ be a
    place of $K_N$ verifying
    $\nu_{\Pfrak}(y_N)\geq p\nu_\Pfrak(t_\Pfrak')$. Then
    there exists $k\in\Fbb_p$ such that for all $g\in K_N$, if
    $\nu_{\Pfrak}(g)\equiv k\pmod p$ then $f-\frac{g'}{g}\in S_N$ and
    $\nu_{\Pfrak}\left(f-\frac{g'}{g}\right)\geq \nu_\Pfrak(t_\Pfrak')$.
    We call $k$ the ramified residue of $f$ in $\Pfrak$ and write 
    \[\mathfrak{Re}_{\Pfrak}(f):=k.\]
  \end{Definition-Proposition}
  \begin{proof}
   % Set $e:=e(\Pfrak'|\Pfrak)$.
   % We first consider the easy case when $\Pfrak=\Pfrak_\infty$. Then for any
   % $g\in K_N$, $f-\frac{g'}{g}\in S_N$, thus
   % $\nu_{\Pfrak'}\left(f-\frac{g'}{g}\right)\geq 1-e$.\\
  
   % We now suppose that $\Pfrak\neq\Pfrak_\infty$.\\
    If $\nu_{\Pfrak}(f)\geq \nu_\Pfrak(t_\Pfrak')$ then we can take $k=0$. Indeed in this case,
    if $\nu_{\Pfrak}(g)\equiv 0\pmod p$ then there exists $l\in\Nbb$ such
    that $g=t_{\Pfrak}^{pl}u$ with $\nu_{\Pfrak}(u)=0$. Then 
    $\frac{g'}{g}=\frac{u'}{u}+pl\frac{t'_{\Pfrak}}{t_{\Pfrak}}=\frac{u'}{u}$.
    Since $\nu_\Pfrak(u)=0$, we can write
    $u=\sum_{n=0}^{\infty}u_nt_\Pfrak^n$ and
    $u'=t_\Pfrak'\sum_{k=0}^{\infty}(n+1)u_{n+1}t_\Pfrak^n$. Thus
    $\nu_\Pfrak(u')\geq \nu_\Pfrak(t_\Pfrak')$ and
    $\nu_{\Pfrak}\left(\frac{g'}{g}\right)\geq \nu_\Pfrak(t_\Pfrak')$ which yields the
    result.\\
    Suppose now that $\nu_{\Pfrak}(f)=\nu_\Pfrak(t_\Pfrak')-1$. We set
    $e=1-\nu_\Pfrak(t_\Pfrak')$, $a:=(t_\Pfrak^{e-1}t_\Pfrak')(\Pfrak)$ and 
    $c:=(t_\Pfrak^ef)(\Pfrak)$. Let us show
    that $c\in\Fbb_p^\times a$. The characteristic $p$ does not divide $e$,
    and we know that $\nu_\Pfrak(f^{(p-1)})=-pe$. Furthermore we know 
    (Lemma~\ref{p-1_power_derivation}) that
    $f^{(p-1)}:=\frac{\ud^{p-1}}{\ud t_\Pfrak^{p-1}}(t_\Pfrak^{p-1}f)$.
    It follows that
    $(t_\Pfrak^{pe}f^{(p-1)})(\Pfrak)=-a^{p-1}c$ 
    and 
    $(t_\Pfrak^{pe}f^p)(\Pfrak)=c^p$.
    But 
    $t_\Pfrak^{pe}(f^{(p-1)}+f^p)(\Pfrak)=(t_{\Pfrak}^{pe}y_N)(\Pfrak)=0$
    since $\nu_\Pfrak(y_N)>-pe$.
    It follows that
    $t_\Pfrak^{pe}(f^{(p-1)}+f^p)(\Pfrak)=c^p-a^{p-1}c=0$. Thus $c^{p-1}=a^{p-1}$ and $c\in\Fbb_p^{\times}a$. We set $k:=c\cdot
    a^{-1}$.

    Let $g\in K_N$ be such that $\nu_\Pfrak(g)\equiv k\pmod p$.
    There exists $l\in\Zbb$ and $u\in K_N$ such that $\nu_\Pfrak(u)=0$ and
    $g=t_\Pfrak^{pl+k}u$. Then
    $\frac{g'}{g}=k\frac{t_\Pfrak}{t_\Pfrak}+\frac{u'}{u}$.
    Since $\nu_\Pfrak(u)=0$, $\nu_\Pfrak(u')> -e$ and
    $\nu_\Pfrak\left(\frac{g'}{g}\right)=-e$. Then
    $\left(t_\Pfrak^{e}\frac{g'}{g}\right)(\Pfrak)=k(t_\Pfrak^{e-1}t_\Pfrak)(\Pfrak)=ka=c$.
    Thus
    $\left(t_\Pfrak^e\left(f-\frac{g'}{g}\right)\right)(\Pfrak)=0$,
    which is to say that
    $\nu_\Pfrak\left(f-\frac{g'}{g}\right)\geq 1-e=\nu_\Pfrak(t_\Pfrak')$.
  \end{proof}
   In particular if $\Pfrak$ is neither ramified nor a
  pole of $y_N$ then $S_N$ contains an element with no pole in $\Pfrak$.
  This local improvement on the bound provided in
  Proposition~\ref{borne_val} is accomplished by adding an element of the form
  $\frac{g'}{g}$. Unfortunately adding such an element makes new poles
  appear in general so this local approach is not enough.
  We globalize it in the following theorem.
  \begin{Theoreme}\label{theo_local_sol}
    Let $f\in S_N$ and $S:=\{\Pfrak\in \Pbb_{K_N}|\nu_{\Pfrak}(y_N)<
    p\nu_\Pfrak(t_\Pfrak')\}$. Set
    \[\mathfrak{Re}(f):=\sum_{\substack{\Pfrak\in\Pbb_{K_N}\\\Pfrak\notin
    S}}\mathfrak{Re}_{\Pfrak}(f)\cdot\Pfrak.\]
    If there exist $D',D_p\in\mathrm{Div}(F)$ such that 
    $\mathfrak{Re}(f)\sim D'+pD_P$
    then $S_N$ contains an element $\varphi$ verifying for all places $\Pfrak$ outside $S\cup
    \mathrm{supp}(D')$ that 
    $\nu_{\Pfrak}(\varphi)\geq \nu_\Pfrak(t_\Pfrak')$.
  \end{Theoreme}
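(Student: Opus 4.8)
The plan is to produce $\varphi$ explicitly in the form $\varphi=f-\frac{g'}{g}$ for a single well-chosen $g\in K_N^{\times}$. By Lemma~\ref{affine_S_N} any such $\varphi$ automatically belongs to $S_N$, so the whole content lies in the choice of $g$. The mechanism is Definition-Proposition~\ref{residue_manip}: at a place $\Pfrak\notin S$ the element $f-\frac{g'}{g}$ has valuation at least $\nu_\Pfrak(t_\Pfrak')$ as soon as $\nu_\Pfrak(g)\equiv\mathfrak{Re}_\Pfrak(f)\pmod p$, and here only the class of $\nu_\Pfrak(g)$ modulo $p$ matters, since $\frac{g'}{g}$ is unchanged when $g$ is multiplied by a $p$-th power. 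So it suffices to exhibit $g$ whose divisor realises the prescribed residue mod $p$ at every place outside $S\cup\mathrm{supp}(D')$, and this is exactly what the linear-equivalence hypothesis delivers.

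Concretely, I would first fix an honest divisor $R\in\mathrm{Div}(K_N)$ whose coefficient at each place $\Pfrak$ is an integer lift of $\mathfrak{Re}_\Pfrak(f)\in\Fbb_p$, so that $R$ represents $\mathfrak{Re}(f)$; since two such lifts differ by an element of $p\cdot\mathrm{Div}(K_N)$, the condition $\mathfrak{Re}(f)\sim D'+pD_P$ does not depend on the choice, any change being absorbed into $D_P$. The hypothesis then yields $g\in K_N^{\times}$ with $(g)=R-D'-pD_P$, and I set $\varphi:=f-\frac{g'}{g}$, which lies in $S_N$ by Lemma~\ref{affine_S_N}.

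Next I would verify the valuation bound. Let $\Pfrak$ be a place with $\Pfrak\notin S\cup\mathrm{supp}(D')$. Reading off the coefficient of $\Pfrak$ in $(g)=R-D'-pD_P$ and using $\nu_\Pfrak(D')=0$ gives $\nu_\Pfrak(g)=\nu_\Pfrak(R)-p\,\nu_\Pfrak(D_P)\equiv\mathfrak{Re}_\Pfrak(f)\pmod p$; note that $\mathrm{supp}(D_P)$ is irrelevant because its contribution is a multiple of $p$, and that the case $\mathfrak{Re}_\Pfrak(f)=0$ (that is, $\Pfrak\notin\mathrm{supp}\,\mathfrak{Re}(f)$) is covered by the very same congruence. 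Since $\Pfrak\notin S$ means $\nu_\Pfrak(y_N)\geq p\,\nu_\Pfrak(t_\Pfrak')$, Definition-Proposition~\ref{residue_manip} applies with $k=\mathfrak{Re}_\Pfrak(f)$ and gives $\nu_\Pfrak(\varphi)=\nu_\Pfrak\bigl(f-\tfrac{g'}{g}\bigr)\geq\nu_\Pfrak(t_\Pfrak')$, which is the claim.

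I do not expect a real obstacle: the analytic work is already packaged in Proposition~\ref{borne_val} and Definition-Proposition~\ref{residue_manip}, and what remains is the bookkeeping of treating the $\Fbb_p$-valued symbol $\mathfrak{Re}(f)$ as a genuine divisor. The only points I would be careful to state are that the linear-equivalence condition is insensitive to the choice of integer lifts (it is, via $D_P$) and that the congruence argument goes through unchanged at the places in $\mathrm{supp}(D_P)$ and at the places where the ramified residue vanishes.
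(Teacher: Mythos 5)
Your proposal is correct and follows the same approach as the paper: reduce to exhibiting $g$ with $(g)=\mathfrak{Re}(f)-D'-pD_P$, set $\varphi=f-\frac{g'}{g}$, invoke Lemma~\ref{affine_S_N} for membership in $S_N$, and conclude via Definition-Proposition~\ref{residue_manip} once the congruence $\nu_\Pfrak(g)\equiv\mathfrak{Re}_\Pfrak(f)\pmod p$ is established at each $\Pfrak\notin S\cup\mathrm{supp}(D')$. Your added remark about the choice of integer lifts of the $\Fbb_p$-valued residue being absorbed into $D_P$ is a useful clarification that the paper leaves implicit, but it does not change the argument.
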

  \begin{proof}
    Since $\mathfrak{Re}(f)\sim D'+pD_p$, there exists $g\in K_N$ such that $\mathfrak{Re}(f)-D'-pD_p=(g)$.
    From Lemma~\ref{affine_S_N}, we deduce that $f-\frac{g'}{g}\in S_N$.
    Let $\Pfrak\in\Pbb_{K_N}\backslash(S\cup \mathrm{supp}(D'))$.
    Then we find
        \begin{align*}
          \nu_{\Pfrak}(g)&=\nu_{\Pfrak}(\mathfrak{Re}(f))-\nu_{\Pfrak}(D')-p\nu_{\Pfrak}(D_p)\\
          &=\nu_{\Pfrak}(\mathfrak{Re}(f))-0-p\nu_{\Pfrak}(D_p)\\
          &\equiv \nu_\Pfrak(\mathfrak{Re}(f))\pmod p\\
          &\equiv \mathfrak{Re}_{\Pfrak}(f)\pmod p
        \end{align*}
        By definition of $\mathfrak{Re}_{\Pfrak}(f)$, $f-\frac{g'}{g}$ is an
        element of $S_N$ verifying for any place $\Pfrak$ outside
        $S\cup\mathrm{Supp}(D')$ that
        \[\nu_\Pfrak(f-\frac{g'}{g})\geq \nu_\Pfrak(t_\Pfrak').\]
  \end{proof}
  \begin{Definition}
    We consider
    $\Gfrak^p_N=\text{\large$\nicefrac{\mathrm{Cl}(K_N)}{p\mathrm{Cl}(K_N)}$}$.
    Since $K_N$ is an algebraic function field of characteristic $p$, $\Gfrak^p_N$ is a 
    finite commutative group of the form $\Gfrak^p_N\simeq
    \left(\text{\Large$\nicefrac{\Zbb}{p\Zbb}$}\right)^n$
    for some $n\in\Nbb^*$.
  \end{Definition}
  \begin{Corollaire}\label{local_sol}
    For each place $\Pfrak\in\Pbb_{K_N}$ we denote by $t_\Pfrak$ a prime
    element of $\Pfrak$.\\
    Let $(D_1,\ldots,D_n)\in\mathrm{Div}(K_N)^n$ be a lifting of a generating
    family of
    $\Gfrak^p_N$ viewed as a $\Fbb_p$-vector space.\\
    Let $S=\bigcup_{i=1}^n\Supp D_N$ and set
    \[A:=\max\left(\sum_{\Pfrak\in
    S}\Pfrak+\mathrm{Diff}(K_N)-2(x)_-,\frac{(y_N)_-}{p}\right).\]
    If $S_N$ is not empty then it contains an element of $\Lcal(A)$.
  \end{Corollaire}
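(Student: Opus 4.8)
The plan is to begin with any $f\in S_N$ (one exists by hypothesis), to feed it into the globalisation Theorem~\ref{theo_local_sol} for a well-chosen divisor $D'$, and then to check by a place-by-place valuation estimate that the resulting solution $\varphi$ already lies in $\Lcal(A)$. The key observation is that the hypothesis of Theorem~\ref{theo_local_sol} holds automatically here: $\Gfrak^p_N=\mathrm{Cl}(K_N)/p\mathrm{Cl}(K_N)$ is a finite $\Fbb_p$-vector space generated by the classes of $D_1,\dots,D_n$, so the class of the residue divisor $\mathfrak{Re}(f)$ has image $\sum_{i=1}^n\overline{c_i}\,\overline{[D_i]}$ in $\Gfrak^p_N$; picking integer representatives $c_i\in\{0,\dots,p-1\}$ and setting $D':=\sum_{i=1}^n c_iD_i$ yields $\mathfrak{Re}(f)-D'\in p\mathrm{Cl}(K_N)$, i.e.\ $\mathfrak{Re}(f)\sim D'+pD_p$ for some $D_p\in\mathrm{Div}(K_N)$. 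Moreover $\Supp(D')\subseteq\bigcup_i\Supp(D_i)=S$, so Theorem~\ref{theo_local_sol} furnishes $\varphi\in S_N$ such that $\nu_\Pfrak(\varphi)\geq\nu_\Pfrak(t_\Pfrak')$ for every place $\Pfrak\notin S$ with $\nu_\Pfrak(y_N)\geq p\nu_\Pfrak(t_\Pfrak')$.

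The next step is to verify $\nu_\Pfrak(\varphi)\geq-\nu_\Pfrak(A)$ at every place $\Pfrak$. Using the identity $\nu_\Pfrak(t_\Pfrak')=\nu_\Pfrak(2(x)_\infty-\Diff(K_N))$ noted after Corollary~\ref{corollary_criteria}, together with the fact that the field of constants $C_N$ of $K_N$ consists of $p$-th powers (Lemma~\ref{lemma_struc_K}), so that $y_N$ is a $p$-th power in $K_N$ and hence $p\mid\nu_\Pfrak(y_N)$ at every place and $p^{-1}(y_N)_-$ is an integral divisor, one computes
\[-\nu_\Pfrak(A)=\min\bigl(\nu_\Pfrak(t_\Pfrak')-\varepsilon_\Pfrak,\ \min(0,p^{-1}\nu_\Pfrak(y_N))\bigr),\]
where $\varepsilon_\Pfrak=1$ if $\Pfrak\in S$ and $\varepsilon_\Pfrak=0$ otherwise. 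If $\Pfrak\notin S$ and $\nu_\Pfrak(y_N)\geq p\nu_\Pfrak(t_\Pfrak')$, the first step gives $\nu_\Pfrak(\varphi)\geq\nu_\Pfrak(t_\Pfrak')\geq-\nu_\Pfrak(A)$. At every other place I would instead use Proposition~\ref{borne_val}, which holds for all of $S_N$ and gives $\nu_\Pfrak(\varphi)\geq\min(p^{-1}\nu_\Pfrak(y_N),\nu_\Pfrak(t_\Pfrak')-1)$: when $\Pfrak\in S$ this quantity already dominates $-\nu_\Pfrak(A)$, and when $\Pfrak\notin S$ but $\nu_\Pfrak(y_N)<p\nu_\Pfrak(t_\Pfrak')$ the divisibility $p\mid\nu_\Pfrak(y_N)$ forces $p^{-1}\nu_\Pfrak(y_N)\leq\nu_\Pfrak(t_\Pfrak')-1$, hence $\min(p^{-1}\nu_\Pfrak(y_N),\nu_\Pfrak(t_\Pfrak')-1)=p^{-1}\nu_\Pfrak(y_N)\geq-\nu_\Pfrak(A)$. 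This yields $\varphi\in\Lcal(A)$.

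I expect the conceptual heart to be the first step — noticing that reducing modulo $p$ in the divisor class group makes the hypothesis of Theorem~\ref{theo_local_sol} unconditional while keeping $\Supp(D')$ inside $S$ — and the only genuinely delicate part afterwards to be the bookkeeping of valuation estimates at the finitely many troublesome places (the poles of $x$, the poles of $y_N$, and the places of $S$), where one has to balance the local improvement $\nu_\Pfrak(\varphi)\geq\nu_\Pfrak(t_\Pfrak')$ valid off the bad set against the universal bound $\nu_\Pfrak(\varphi)\geq\min(p^{-1}\nu_\Pfrak(y_N),\nu_\Pfrak(t_\Pfrak')-1)$; the fact that $y_N\in(K_N)^p$ is exactly what makes this balancing go through without losses from rounding.
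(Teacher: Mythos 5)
Your proposal is correct and follows essentially the same route as the paper: reduce $\mathfrak{Re}(f)$ modulo $p$ in the class group against the generating family to get $D'$ supported in $S$, apply Theorem~\ref{theo_local_sol}, and finish with Proposition~\ref{borne_val} together with the identification of $\nu_\Pfrak(t_\Pfrak')$ with the valuation of $2(x)_\infty-\Diff(K_N)$. Your place-by-place verification (including the use of $p\mid\nu_\Pfrak(y_N)$ to handle the rounding at poles of $y_N$) just makes explicit what the paper compresses into its final sentence.
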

  \begin{proof}
    Let $f\in S_N$ and let $\mathfrak{Re}(f)$ be defined similarly as in
    Theorem~\ref{theo_local_sol}. Since $D_1,\ldots,D_n$ is a basis of
    $\Gfrak^p_N$, there exists a linear combination $D'=a_1D_1+\ldots+a_nD_n$ such
    that $\mathfrak{Re}(f)\equiv D'\pmod{\Gfrak_N^p}$. Thus there exists
    $D_p\in\mathrm{Div}(K_N)$ such that
    \[\mathfrak{Re}(f)\sim D'+pD_p.\]
    %According to theorem~\ref{theo_local_sol}, $S_N$ contains an element
    %$f_*$
    %with no poles outside of $\mathrm{supp}(D')\cup S$ (since $S$
    %contains a place of degree $1$).
    Besides $\mathrm{supp}(D')\subset\bigcup_{i=1}^n\mathrm{supp}(D_i)\subset S$.
    According to Theorem~\ref{theo_local_sol}, $S_N$ contains an element
    $f^*$ verifying for all places outside of $S$ that $\nu_{\Pfrak}(f^*)\geq
    \nu_\Pfrak(t_\Pfrak')$.
    The corollary now follows from Proposition~\ref{borne_val} and the fact that
    the valuation in $\Pfrak$ of the divisor $\Diff(K_N)-2(x)_-$ is
    precisely $-\nu_\Pfrak(t_\Pfrak')$.
  \end{proof}
  \begin{Definition}\label{construct_divisor}
    For any effective divisor $D$ over $K_N$, we define
    \[A(D):=\max\left(\sum_{\Pfrak\in\Supp D}\Pfrak
    +\mathrm{Diff}(K_N)-2(x)_-,\frac{(y_N)_-}{p}\right).\]
    We say that $D$ is a generating divisor of $\Gfrak^p_N$ if and only if
    $(\Pfrak)_{\Pfrak\in\Supp D}$ is a generating family of $\Gfrak^p_N$.
    In this case 
    \[S_{N}=\varnothing\Leftrightarrow S_N\cap\Lcal(A(D))=\varnothing.\]

    For a family of effective divisors $(D_1,\ldots,D_n)$ we define
    \[A(D_1,\ldots,D_n)=A(D_1+\ldots+ D_n).\]
  \end{Definition}

  To our knowledge there exists no algorithm able to compute the cokernel
  of the multiplication by $p$ in the divisor class group of a curve
  $\Ccal$ of genus $g$ in polynomial time in $g$ and the characteristic
  $p$. We instead opt to choose enough uniformly random elements of
  $\Gfrak^p_N$ to generate the whole group. Since we know that $\Gfrak^p_N$
  is of the form $\Fbb_p^n$ with $n$ being an integer smaller than $g+1$,
  we know that we only need to select $O(g)$ elements on average. We use  
  Algorithm~\ref{algo_irreducibility} beforehand to ensure the existence of
  a solution. We refer to~\cite[Section~3.5]{Bruin10} in which the author
  present an algorithm to select uniformly random elements in
  $\mathrm{Cl}^0(K_N)$. If $K_N$ is seen as the regular function field of a
  curve $\Ccal$, \cite[Algorithm~3.7]{Bruin10} presupposes the choice of a
  line bundle $\Lcal$ over $\Ccal$ of degree at least $2g+1$. Since we are
  working over finite fields, line bundles of arbitrary degrees exists and
  we can choose a line bundle of degree exactly $2g+1$. Then we can use 
  \cite[Algorithm~3.7]{Bruin10} to pick uniformly random elements in
  $\mathrm{Cl}^0(K_N)$ represented by uniformly random effective divisors of degree 
  $2g+1$ in polynomial time in $g$ and $\log(q)$. 
  However, \cite[Algorithm~3.7]{Bruin10} also suppose that the zeta
  function of $\Ccal$ is known in order to ensure the uniform distribution of the divisors.
  The computation of the zeta function can be done in time polynomial in
  $g$ and linear in $b$ and $p$ (precisely $\tilde{O}(pbd_x^6d_y^4)$ bit
  operations~\cite{tuitman17}). 
  \begin{Remarque}
    In~\cite[section~13.2]{BCEJM10} the authors also state that
    $\mathrm{Cl}^0(K_N)$ is generated by the places of degree less than
    $1+2\log_q(4g-2)$. This in turns guarantees the existence of $D$ a
    generating divisor of $\Gfrak_{N_*}^p$ of degree $\tilde{O}(d_xd_y)$. 
    However the probability of generating
    $\Gfrak^p_N$ with $O(g)$ uniformly chosen effective divisors of degree
    less than $1+\log_q(4g-2)$ could be very low which is why we do not use
    it for our algorithm.
  \end{Remarque}
  From now on we will assume that we are able to pick uniformly random
  elements in $\mathrm{Cl}^0(K_N)$. If $\Gfrak^{p}_N$ is of dimension $r$ over
  $\Fbb_p$ then we only need on average to select $O(r)$ elements to generate $\Gfrak^{p}_N$.\\

  We now discuss in more details the computation of the linear system
  representing the $p$-Riccati equation over some
  $\Lcal(A(D))$. The main issue lies in the computation of the $(p-1)$-th
  derivative of the elements of a basis of $\Lcal(A(D))$. Instead of
  computing their exact value in $K_N$, we compute their Taylor expansion
  up to a high enough precision.

  \begin{Proposition}\label{Proposition_section_stability}
    Let $\Phi_N:K_N\rightarrow C_N$ denote the Frobenius endomorphism over
    $K_N$ and $D\in\mathrm{Div}(K_N)$. Let $f\in\Lcal(A(D))$. Then
    $\Phi_N^{-1}(f^{(p-1)}+f^p)\in\Lcal(A(D))$.
  \end{Proposition}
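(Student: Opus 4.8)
The plan is to show that the operator $f \mapsto \Phi_N^{-1}(f^{(p-1)}+f^p)$ preserves the Riemann-Roch space $\Lcal(A(D))$ by checking, place by place, that if $\nu_\Pfrak(f) \geq -\nu_\Pfrak(A(D))$ then $\nu_\Pfrak\big(\Phi_N^{-1}(f^{(p-1)}+f^p)\big) \geq -\nu_\Pfrak(A(D))$ as well. Since $\Phi_N$ is the Frobenius (so $\nu_\Pfrak(\Phi_N^{-1}(g)) = p^{-1}\nu_{\Phi_N(\Pfrak)}(g)$ after matching places appropriately, and in any case it multiplies valuations by $1/p$ up to the place correspondence), it suffices to bound $\nu_\Pfrak(f^{(p-1)}+f^p)$ from below by $-p\cdot\nu_\Pfrak(A(D))$ for every place $\Pfrak$, using $\nu_\Pfrak(f) \geq -\nu_\Pfrak(A(D))$.

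First I would recall the local derivation estimate already used repeatedly in the paper: for any $g \in K_{N,\Pfrak}$ one has $\nu_\Pfrak(g^{(p-1)}) \geq \nu_\Pfrak(g) + (p-1)(\nu_\Pfrak(t_\Pfrak')-1)$, which follows from Lemma~\ref{p-1_power_derivation} since $\frac{\ud^{p-1}}{\ud x^{p-1}}g = \frac{\ud^{p-1}}{\ud t_\Pfrak^{p-1}}(t_\Pfrak'^{p-1}g)$ and the $t_\Pfrak$-derivation drops valuation by at most $p-1$. Combining this with the trivial bound $\nu_\Pfrak(f^p) = p\,\nu_\Pfrak(f)$, I get
\[
\nu_\Pfrak(f^{(p-1)}+f^p) \;\geq\; \min\big(p\,\nu_\Pfrak(f),\ \nu_\Pfrak(f) + (p-1)(\nu_\Pfrak(t_\Pfrak')-1)\big).
\]
Now split into the two cases according to which term of $A(D) = \max\big(\sum_{\Pfrak\in\Supp D}\Pfrak + \Diff(K_N) - 2(x)_\infty,\ (y_N)_-/p\big)$ governs $\nu_\Pfrak(A(D))$. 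Recall from the remark after Corollary~\ref{corollary_criteria} that $\nu_\Pfrak$ of the divisor $\Diff(K_N) - 2(x)_\infty$ equals $-\nu_\Pfrak(t_\Pfrak')$; so the first term contributes $\nu_\Pfrak(\Supp D\text{-part}) - \nu_\Pfrak(t_\Pfrak')$, meaning $\nu_\Pfrak(f) \geq \nu_\Pfrak(t_\Pfrak') - \varepsilon_\Pfrak$ where $\varepsilon_\Pfrak \in \{0,1\}$ records whether $\Pfrak \in \Supp D$.

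In the case where $\nu_\Pfrak(A(D)) = -\nu_\Pfrak(t_\Pfrak') + \varepsilon_\Pfrak$ (first term dominates), I would plug $\nu_\Pfrak(f) \geq \nu_\Pfrak(t_\Pfrak') - \varepsilon_\Pfrak$ into the two candidate lower bounds above and check that each is $\geq -p\,\nu_\Pfrak(A(D)) = p(\nu_\Pfrak(t_\Pfrak') - \varepsilon_\Pfrak)$; the $f^p$ term gives exactly this, and the derivative term gives $\nu_\Pfrak(f) + (p-1)(\nu_\Pfrak(t_\Pfrak')-1) \geq p\,\nu_\Pfrak(t_\Pfrak') - \varepsilon_\Pfrak - (p-1)$, which is $\geq p\nu_\Pfrak(t_\Pfrak') - p\varepsilon_\Pfrak$ precisely when $(p-1)\varepsilon_\Pfrak \geq p-1$, automatic, or when $\varepsilon_\Pfrak=0$ and $\nu_\Pfrak(t_\Pfrak') \geq 1$, i.e. $\Pfrak$ unramified-or-better relative to $x$ — here the crucial point is that when this term could fail we are in the regime $\nu_\Pfrak(y_N)<p\nu_\Pfrak(t_\Pfrak')$ which forces the $(y_N)_-/p$ term or membership in $\Supp D$ to take over, so the analysis dovetails with the other case. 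In the case where $\nu_\Pfrak(A(D)) = -\nu_\Pfrak((y_N)_-)/p = \min(0,\nu_\Pfrak(y_N))/p$, i.e. $\nu_\Pfrak(y_N) \geq p\nu_\Pfrak(t_\Pfrak')$ and $\nu_\Pfrak(y_N) \leq 0$: then $\nu_\Pfrak(f) \geq p^{-1}\nu_\Pfrak(y_N)$, so $p\nu_\Pfrak(f) \geq \nu_\Pfrak(y_N) = -p\nu_\Pfrak(A(D))$ handles the $f^p$ term, and for the derivative term $\nu_\Pfrak(f) + (p-1)(\nu_\Pfrak(t_\Pfrak')-1) \geq p^{-1}\nu_\Pfrak(y_N) + (p-1)(\nu_\Pfrak(t_\Pfrak')-1)$, which I claim is $\geq \nu_\Pfrak(y_N)$ using $\nu_\Pfrak(y_N) \geq p\nu_\Pfrak(t_\Pfrak')$, i.e. it reduces to $(p-1)(\nu_\Pfrak(t_\Pfrak')-1) \geq (1-p^{-1})\nu_\Pfrak(y_N) \geq (p-1)\nu_\Pfrak(t_\Pfrak')$, which would need $-(p-1) \geq 0$ — false — so as in Proposition~\ref{borne_val} one must instead argue that equality $\nu_\Pfrak(f^{(p-1)}+f^p) = p\nu_\Pfrak(f)$ holds because the two valuations $p\nu_\Pfrak(f)$ and $\nu_\Pfrak(f^{(p-1)})$ cannot be equal in this regime (the derivative strictly exceeds $p\nu_\Pfrak(f)$ once $\nu_\Pfrak(f) < \nu_\Pfrak(t_\Pfrak')-1$, and otherwise $\nu_\Pfrak(f)\geq\nu_\Pfrak(t_\Pfrak')$ already gives everything).

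The main obstacle, then, is exactly this interplay at places where neither term of $A(D)$ is comfortably dominant and where the naive triangle-inequality bound on $f^{(p-1)}+f^p$ is too weak: the fix is to reuse the strict-inequality/equality argument from the proof of Proposition~\ref{borne_val} (which shows $\nu_\Pfrak(y_N) = p\nu_\Pfrak(f)$ whenever $\nu_\Pfrak(f) < \nu_\Pfrak(t_\Pfrak')-1$) to get $\nu_\Pfrak(f^{(p-1)}+f^p) = \min(p\nu_\Pfrak(f),\nu_\Pfrak(f^{(p-1)}))$ with the min actually attained, rather than merely a lower bound. Once that is in hand, each case is a short arithmetic check. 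I would organize the write-up as: (1) recall $\nu_\Pfrak$ of $\Diff(K_N)-2(x)_\infty$ is $-\nu_\Pfrak(t_\Pfrak')$; (2) state the derivation valuation bound from Lemma~\ref{p-1_power_derivation}; (3) fix a place $\Pfrak$, set $m := \nu_\Pfrak(A(D))$, assume $\nu_\Pfrak(f)\geq -m$, and split on whether $\nu_\Pfrak(f) \geq \nu_\Pfrak(t_\Pfrak')$ or not, in the latter subcase invoking the Proposition~\ref{borne_val}-style exact valuation; (4) conclude $\nu_\Pfrak(f^{(p-1)}+f^p) \geq -pm$, hence $\nu_\Pfrak(\Phi_N^{-1}(f^{(p-1)}+f^p)) \geq -m$, for all $\Pfrak$, which is the claim.
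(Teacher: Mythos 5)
Your overall strategy is correct and matches the paper's: work place by place, use the estimate $\nu_\Pfrak(f^{(p-1)})\geq\nu_\Pfrak(f)+(p-1)(\nu_\Pfrak(t_\Pfrak')-1)$ coming from Lemma~\ref{p-1_power_derivation}, and the case split you settle on at the end (whether or not $\nu_\Pfrak(f)\geq\nu_\Pfrak(t_\Pfrak')$) is exactly the paper's. But the execution has a genuine gap precisely in the case $\nu_\Pfrak(f)\geq\nu_\Pfrak(t_\Pfrak')$, and the fix you propose does not close it.

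Concretely: in that case the triangle-inequality bound gives only
\[
\nu_\Pfrak\bigl(f^{(p-1)}+f^p\bigr)\;\geq\;\min\bigl(p\,\nu_\Pfrak(f),\,\nu_\Pfrak(f)+(p-1)(\nu_\Pfrak(t_\Pfrak')-1)\bigr)\;\geq\;p\,\nu_\Pfrak(t_\Pfrak')-(p-1),
\]
whereas the target is $\geq -p\,\nu_\Pfrak(A(D))$. Since $-\nu_\Pfrak(A(D))\leq\nu_\Pfrak(t_\Pfrak')$ with equality possible, you can be short by up to $p-1$, and your arithmetic shows exactly this shortfall; the claim that ``$\nu_\Pfrak(f)\geq\nu_\Pfrak(t_\Pfrak')$ already gives everything'' is not true at this level of the argument. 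The argument you reach for as a fix --- the ``equality $\nu_\Pfrak(f^{(p-1)}+f^p)=p\nu_\Pfrak(f)$'' from Proposition~\ref{borne_val} --- applies in the \emph{other} case $\nu_\Pfrak(f)<\nu_\Pfrak(t_\Pfrak')-1$, where the naive bound already suffices, so it contributes nothing to the problematic case. (Also, Proposition~\ref{borne_val} assumes $f\in S_N$, which is not given here; the part of its proof you can reuse is only the strict inequality $\nu_\Pfrak(f^{(p-1)})>p\nu_\Pfrak(f)$ when $\nu_\Pfrak(f)<\nu_\Pfrak(t_\Pfrak')-1$, and that does not touch the case at issue.)

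What you are missing is an integrality observation. Because $f^{(p-1)}+f^p$ lies in $C_N$ and is therefore a $p$-th power, $\nu_\Pfrak\bigl(\Phi_N^{-1}(f^{(p-1)}+f^p)\bigr)$ is an honest integer, equal to $p^{-1}\nu_\Pfrak(f^{(p-1)}+f^p)$. The paper works directly with this quantity: in the case $\nu_\Pfrak(f)\geq\nu_\Pfrak(t_\Pfrak')$ one gets
\[
\nu_\Pfrak\bigl(\Phi_N^{-1}(f^{(p-1)}+f^p)\bigr)\;\geq\;\nu_\Pfrak(t_\Pfrak')-\tfrac{p-1}{p}\;>\;\nu_\Pfrak(t_\Pfrak')-1,
\]
and since the left-hand side is an integer this forces it to be $\geq\nu_\Pfrak(t_\Pfrak')\geq-\nu_\Pfrak(A(D))$. (Equivalently: $\nu_\Pfrak(f^{(p-1)}+f^p)$ is a multiple of $p$, so a lower bound of $p\nu_\Pfrak(t_\Pfrak')-(p-1)$ upgrades to $p\nu_\Pfrak(t_\Pfrak')$.) By converting everything to a statement about $\nu_\Pfrak(f^{(p-1)}+f^p)$ and treating it as an arbitrary integer, your write-up loses precisely this information, and that is where the proof breaks. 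I would also drop the case split based on which term of the $\max$ in $A(D)$ dominates: it is a red herring; the only case split needed is $\nu_\Pfrak(f)\lessgtr\nu_\Pfrak(t_\Pfrak')$, together with the integrality upgrade.
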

  \begin{proof}
    Let $\Pfrak\in\Pbb_{K_N}$. If $\Pfrak\notin\Supp(A(D))$ then by definition
    of $A(D)$, $f$ is not a ramified place and $f$ has no poles in $\Pfrak$.
    Thus neither $f^{(p-1)}$ nor $f^p$ has a pole in $\Pfrak$. Thus
    $\Phi_N^{-1}(f^{(p-1)}+f^p)$ has no pole in $\Pfrak$.
    For $\Pfrak\in\Supp(A(D))$, we let $t_\Pfrak$ be a prime element of
    $\Pfrak$. We know that 
    \[
      \nu_\Pfrak(\Phi_N^{-1}(f^{(p-1)}+f^p))\geq \min (p^{-1}\cdot
      (\nu_\Pfrak(f)+(p-1)(\nu_\Pfrak(t_\Pfrak')-1)),\nu_\Pfrak(f))
    \]
    Besides we know that if $\nu_\Pfrak(f)\leq\nu_\Pfrak(t_\Pfrak')-1$ then
    $p^{-1}\cdot(\nu_\Pfrak(f)+(p-1)(\nu_\Pfrak(t_\Pfrak')-1))\geq\nu_\Pfrak(f)$
    so in that case we get that 
    $\nu_\Pfrak(\Phi_N^{-1}(f^{(p-1)}+f^p))\geq \nu_\Pfrak(f)$ which
    implies the desired result since $f\in\Lcal(A(D))$.
    If now we have $\nu_\Pfrak(f)>\nu_\Pfrak(t_\Pfrak')-1$ then 
    $p^{-1}\cdot(\nu_\Pfrak(f)+(p-1)(\nu_\Pfrak(t_\Pfrak')-1))>\nu_\Pfrak(t'_\Pfrak)-1$.
    Since valuations have to be integers we deduce that
    $\nu_\Pfrak(\Phi_N^{-1}(f^{(p-1)}+f^p))\geq \nu_\Pfrak(t'_\Pfrak)\geq
    -\nu_\Pfrak(A(D))$. Thus $\Phi_N^{-1}(f^{(p-1)}+f^p)\in\Lcal(A(D))$.
  \end{proof}
  \begin{Notation}
    From this point forward we assume that $K=\Fbb_{p^b}(x)$, for
    $b\in\Nbb^*$. We suppose that
    $N\in \Fbb_{p^b}[x^p,y]$ is a fixed separable irreducible polynomial and set
    $N_*(x,y)\in \Fbb_{p^b}[x,y]$ such that $N_*^p(Y)=N(Y^p)$. Let $a$ be the $p$-th
    root of $y_N\in K_N$. Then $N_*(a)=0$ and $K_N=K[a]$. We set
    $d_x=\deg_x N_*$ and $d_y=\deg_Y N_*$.
  \end{Notation}
  \begin{Definition}
    Let $f\in K_N$. There exist unique $f_0,\ldots, f_{p-1}\in K_N$ such
    that
    \[f=\sum_{i=0}^{p-1}f_i^{p}x^i.\]
    For all $i\in\ldbrack 0;p-1\rdbrack$ We denote by $S_i(f):= f_i$ the
    $i$-th section of $f$.
  \end{Definition}
  Although we define sections for all $i\in\ldbrack0;p-1\rdbrack$, we will
  really only be interested in $S_{p-1}$ as shown in the following lemma:
  \begin{Lemme}\label{section_derivation}
    For any $f\in K_N$, \[\Phi_N^{-1}(f^{(p-1)})=-S_{p-1}(f).\]
  \end{Lemme}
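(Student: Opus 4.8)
The plan is to compute $f^{(p-1)}$ directly from the expansion of $f$ in the basis $1,x,\dots,x^{p-1}$ of $K_N$ over $C_N$.

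First I would write $f=\sum_{i=0}^{p-1}S_i(f)^p x^i$ as in the definition of the sections. Each coefficient $S_i(f)^p$ is a $p$-th power, hence a constant: for any $g\in K_N$ one has $(g^p)'=pg^{p-1}g'=0$, so $S_i(f)^p\in C_N$. Since the derivation $\partial=\ud/\ud x$ on $K_N$ is $C_N$-linear (its field of constants being exactly $C_N$), applying $\partial^{p-1}$ term by term gives
\[
  f^{(p-1)}=\sum_{i=0}^{p-1}S_i(f)^p\,(x^i)^{(p-1)}.
\]

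Next I would evaluate $(x^i)^{(p-1)}$ for $0\le i\le p-1$. As $x^i$ is a polynomial of degree $i\le p-1$ in $x$ and $\partial=\ud/\ud x$, we get $(x^i)^{(p-1)}=0$ for every $i<p-1$, while $(x^{p-1})^{(p-1)}=(p-1)!$. Wilson's theorem gives $(p-1)!\equiv-1\Pmod p$, so $(x^{p-1})^{(p-1)}=-1$ in $\Fbb_p\subseteq C_N$. Substituting, every term vanishes except the one with $i=p-1$, so that
\[
  f^{(p-1)}=-S_{p-1}(f)^p.
\]
In particular $f^{(p-1)}$ lies in the image of $\Phi_N$, as it must for the statement to make sense. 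Finally, since $(-1)^p=-1$ in $\Fbb_p$ for every prime $p$, we have $-S_{p-1}(f)^p=\bigl(-S_{p-1}(f)\bigr)^p=\Phi_N\bigl(-S_{p-1}(f)\bigr)$, and applying $\Phi_N^{-1}$ yields $\Phi_N^{-1}(f^{(p-1)})=-S_{p-1}(f)$.

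I do not expect a real obstacle here; the only points needing a little care are that the coefficients $S_i(f)^p$ are genuinely annihilated by $\partial$ (so that the Leibniz rule leaves only the factors $(x^i)^{(p-1)}$), and the elementary arithmetic input, Wilson's theorem, identifying $(p-1)!$ with $-1$ modulo $p$.
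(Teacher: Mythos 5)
Your proof is correct and follows essentially the same route as the paper's: expand $f$ in the basis $(x^i)_{0\le i\le p-1}$ over the constants, observe that $\partial^{p-1}$ kills all terms except $S_{p-1}(f)^p\,x^{p-1}$, and invoke Wilson's theorem to identify $(p-1)!$ with $-1$. The paper compresses this to a single line, while you spell out the $C_N$-linearity and the vanishing of the lower-order terms, but the argument is the same.
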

  \begin{proof}
    Let $f:=\sum_{i=0}^{p-1}f_i^p x^i$. It suffices to show that
    $f^{(p-1)}=-f_{p-1}^p$. But this is obvious since
    $f^{(p-1)}=(p-1)!f_{p-1}^p$ and $(p-1)!=-1\mod p$.
  \end{proof}
  Thus another way of writing $p$-Riccati equation is
  \[b-S_{p-1}(b)=a.\]
  We now use the fact that Lemma~\ref{section_derivation} also holds over $K_{N,\Pfrak}$ 
  for any $\Pfrak\in\Pbb_{K_N}$. Let $\Pfrak$
  be a place over $K_N$ that does not belong in $\Supp(A(D))$. Then the
  injective homomorphism from $K_N$ to its $\Pfrak$-completion induces an injective 
  homomorphism of $\Fbb_q$-vector spaces
  $\Lcal(A(D))\hookrightarrow \Gcal_\Pfrak[[t_\Pfrak]]$.
  It follows that there exists a constant $N\in\Nbb$ such that for all
  $f\in\Lcal(A(D))$, $f=0$ if and only if $\nu_\Pfrak(f)\geq N$.
  \begin{Lemme}
    Let $\Pfrak\in\mathrm{Div}(K_N)$, let $D$ be an effective divisor of $K_N$ and
    set $d:=\deg(A(D))$. For any $f\in\Lcal(A(D))$,
    \[f=0\Leftrightarrow \nu_\Pfrak(f)>\frac{d}{\deg(\Pfrak)}.\]
  \end{Lemme}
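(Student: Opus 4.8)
The statement to prove is: for $f \in \Lcal(A(D))$, we have $f = 0$ if and only if $\nu_\Pfrak(f) > \frac{d}{\deg(\Pfrak)}$, where $d = \deg(A(D))$. The forward direction is trivial since $\nu_\Pfrak(0) = +\infty$ by convention, so the content is entirely in the converse. I would argue by contraposition: assume $f \neq 0$ and show $\nu_\Pfrak(f) \leq \frac{d}{\deg(\Pfrak)}$.

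The key tool is that a nonzero element of a Riemann-Roch space $\Lcal(A(D))$ satisfies $(f) \geq -A(D)$, i.e.\ $(f) + A(D) \geq 0$ as an effective divisor. Taking degrees, since $\deg((f)) = 0$ for any principal divisor on an algebraic function field, we get $\deg(A(D)) \geq -\deg((f)_{\geq 0}$ contribution at any single place; more precisely, restricting attention to the place $\Pfrak$: the divisor $(f) + A(D)$ being effective forces $\nu_\Pfrak(f) + \nu_\Pfrak(A(D)) \geq 0$, hence $\nu_\Pfrak(f) \geq -\nu_\Pfrak(A(D))$. But that gives a lower bound, not the upper bound we want. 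The upper bound must come from the pole side: $f$ cannot have too large a zero at $\Pfrak$ because then its divisor of zeros would have degree exceeding the degree of its divisor of poles, which is bounded by $\deg(A(D)) = d$. Concretely, $(f)_0$ and $(f)_\infty$ have equal degree, and $(f)_\infty \leq A(D)$ (the pole divisor is dominated by $A(D)$ when $A(D)$ is effective — here $A(D)$ is a max of two divisors one of which, $\sum \Pfrak + \Diff(K_N) - 2(x)_\infty$, need not be effective, so I should be slightly careful; but $\frac{(y_N)_-}{p}$ is effective and in the relevant regime $A(D)$ dominates an effective divisor so $\deg(A(D)) = d \geq 0$). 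Then $\deg((f)_0) = \deg((f)_\infty) \leq \deg(A(D)^+) \leq d$ — I will need $\deg(A(D)) \le d$ which holds by definition $d = \deg A(D)$, taking $A(D)$'s positive part. Since $(f)_0 \geq \nu_\Pfrak(f) \cdot \Pfrak$ when $\nu_\Pfrak(f) > 0$, we get $\nu_\Pfrak(f) \cdot \deg(\Pfrak) \leq \deg((f)_0) \leq d$, which is exactly $\nu_\Pfrak(f) \leq \frac{d}{\deg(\Pfrak)}$. (If $\nu_\Pfrak(f) \leq 0$ the inequality $\nu_\Pfrak(f) \leq \frac{d}{\deg(\Pfrak)}$ is automatic since $d \geq 0$.)

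The main technical care needed is the handling of $A(D)$ not necessarily being effective: the divisor $\sum_{\Pfrak \in \Supp D}\Pfrak + \Diff(K_N) - 2(x)_\infty$ has the negative part $-2(x)_\infty$ (partially cancelled by the effective $\Diff(K_N)$ and the support sum). However, for the argument I only need an upper bound on $\deg((f)_\infty)$ for $f \in \Lcal(A(D))$, and $(f)_\infty \leq \max(A(D), 0) =: A(D)^+$, whose degree is at most $\deg(A(D)^+)$. One then checks $\deg(A(D)^+) \le d$ only if $d$ is meant as the degree of the positive part; rereading the statement, $d := \deg(A(D))$, and since in all cases of interest (after Corollary~\ref{local_sol}) $A(D)$ dominates the effective divisor $\frac{(y_N)_-}{p} \ge 0$, we have $A(D) \ge 0$, so $A(D)^+ = A(D)$ and $\deg A(D) = d \ge 0$. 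I would state this (that $A(D)$ is effective) at the outset, citing the definition, and then the chain $\nu_\Pfrak(f)\deg(\Pfrak) \le \deg((f)_0) = \deg((f)_\infty) \le \deg(A(D)) = d$ closes the proof. The only genuine subtlety — and the step I would flag as needing a sentence of justification rather than being fully routine — is confirming $A(D)$ is effective in the intended usage, so that $(f)_\infty \leq A(D)$ is meaningful and degree-monotone.
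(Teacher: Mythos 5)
Your proof is correct and is essentially the paper's own argument: both use that for nonzero $f\in\Lcal(A(D))$ one has $\deg(f)_\infty\leq d$, and then $\nu_\Pfrak(f)\deg(\Pfrak)\leq\deg(f)_0=\deg(f)_\infty\leq d$. You simply make explicit the effectiveness of $A(D)$ (which is automatic from the $\max$ with the effective divisor $\frac{(y_N)_-}{p}$ in the definition), a point the paper leaves implicit.
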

  \begin{proof}
    Since $f\in\Lcal(A(D))$, if $f\neq 0$ then we know that $\deg
    (f)_-\leq d$. But since $\deg
    (f)_-=\deg (f)_0$ we know that $\nu_\Pfrak(f)\leq
    \frac{\deg(f)_-}{\deg(\Pfrak)}\leq \frac{d}{\deg(\Pfrak)}$.
  \end{proof}
  Thus it suffices for a function $f\in\Bcal$ (where $\Bcal$ is a basis of
  $\Lcal(A(D))$) to compute the image of $f-S_{p-1}(f)$ modulo
  $t_\Pfrak^{\left\lfloor \frac{\deg(A(D))}{\deg \Pfrak}\right\rfloor+1}$ in
  $\Gcal_{\Pfrak}[[t_\Pfrak]]$. If one writes $f=\sum_{k=0}^{\infty} f_i
  t_\Pfrak^i$ then $S_{p-1}(f)\mod t_\Pfrak^{\left\lfloor
  \frac{\deg(A(D))}{\deg \Pfrak}\right\rfloor+1}$ can be deduced from the knowledge of the
  coefficients $f_{pk+p-1}$ for $k\leq \frac{deg(A(D))}{\deg\Pfrak}$.
  To that end we can compute the first $p\left\lfloor \frac{\deg(A(D))}
  {\deg\Pfrak}\right\rfloor+p-1$ coefficients of the Taylor expansion of
  $f$. In practice, we compute the Taylor expansion of $a$ of which we know the minimal
  polynomial, in
  $t_\Pfrak$ up to the desired precision by Newton iteration (note that by
  definition of $A(D)$, $a\in\Lcal(A(D))$). This can be done in $\tilde{O}(p\deg(A(D))d_y)$
  operations in $\Fbb_q$. Then, knowing that elements of $\Lcal(A(D))$ are given by
  polynomials $F(x,a)$ we get their Taylor expansions by composition for an
  additional cost of $\tilde{O}(p\deg(A(D))d_y)$ operations in $\Fbb_q$.
\begin{Proposition}\label{coefficient_trace_formula}
  Let $Q_i$ be the quotient of the Euclidean division of $N_*(x,y)$ by $y^{i+1}$ for
  any $i\in\Nbb$. Then for any $f:=\sum_{k=0}^{d_y-1} f_ka^k\in K_{N}$
  and any $i\in\ldbrack 0;d_y-1\rdbrack$,
  $f_i=\Tr_{K_{N}/\Fbb_q(x)}\left(\frac{Q_i(x,a) f}{\partial_y
  N_*(x,a)}\right)$.
\end{Proposition}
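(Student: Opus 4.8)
The plan is to recognise Proposition~\ref{coefficient_trace_formula} as the classical Euler dual-basis formula for the trace form of a monogenic extension, adapted to the (possibly non-monic) polynomial $N_*$. Write $d:=d_y$ and $N_*':=\partial_y N_*$, and set
\[
\omega_i:=\frac{Q_i(x,a)}{N_*'(x,a)}\in K_N,\qquad 0\leq i\leq d-1,
\]
which makes sense because $N$, hence $N_*$, is separable, so that $N_*'(x,a)\neq 0$. By $\Fbb_q(x)$-linearity of $\Tr_{K_N/\Fbb_q(x)}$ it suffices to prove that $(\omega_i)_{0\leq i\leq d-1}$ is the dual basis of $(a^k)_{0\leq k\leq d-1}$, i.e.\ that $\Tr_{K_N/\Fbb_q(x)}(a^k\omega_i)=\delta_{ik}$; granting this, any $f=\sum_k f_k a^k$ with $f_k\in\Fbb_q(x)$ yields $\Tr_{K_N/\Fbb_q(x)}(\omega_i f)=f_i$, which is the assertion.

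The first step is the polynomial identity in $K_N[y]$,
\[
N_*(x,y)=(y-a)\sum_{j=0}^{d-1}Q_j(x,a)\,y^j ,
\]
obtained from $N_*(x,y)=N_*(x,y)-N_*(x,a)$: writing $N_*=\sum_k c_k(x)y^k$, the coefficient of $y^j$ in $\big(N_*(x,y)-N_*(x,a)\big)/(y-a)=\sum_k c_k(x)\tfrac{y^k-a^k}{y-a}$ is $\sum_{k>j}c_k(x)a^{k-1-j}$, which is exactly $Q_j(x,a)$ by the definition of $Q_j$ as the quotient of $N_*$ by $y^{j+1}$. Dividing by $N_*'(x,a)$ gives $\tfrac{N_*(x,y)}{(y-a)N_*'(x,a)}=\sum_j\omega_j y^j$. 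Next I would pass to a separable closure of $\Fbb_q(x)$: since $N_*$ is separable its roots $a=a_1,\dots,a_d$ are distinct, the $d$ embeddings $\sigma_\ell\colon K_N\hookrightarrow\overline{\Fbb_q(x)}$ with $\sigma_\ell(a)=a_\ell$ satisfy $\Tr_{K_N/\Fbb_q(x)}=\sum_\ell\sigma_\ell$, and applying each $\sigma_\ell$ to the previous identity (after multiplying by $a^i$) and summing yields
\[
\sum_{j=0}^{d-1}\Big(\textstyle\sum_{\ell}\sigma_\ell(a^i\omega_j)\Big)y^j=\sum_{\ell}\frac{a_\ell^i\,N_*(x,y)}{(y-a_\ell)N_*'(x,a_\ell)}=\sum_{\ell}a_\ell^i L_\ell(y),
\]
where $L_\ell$ is the $\ell$-th Lagrange interpolation polynomial at the nodes $a_1,\dots,a_d$. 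The right-hand side has degree $\leq d-1$ and agrees with $y^i$ at all $d$ nodes, so for $i\leq d-1$ it equals $y^i$; comparing coefficients of $y^j$ gives $\Tr_{K_N/\Fbb_q(x)}(a^i\omega_j)=\delta_{ij}$, as required.

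I do not expect a genuine obstacle here; the argument is routine. The two points needing a word of care are, first, the separability of $N_*$ over $\Fbb_q(x)$ — this follows from the standing hypothesis that $N$ is separable, since $N_*^p(Y)=N(Y^p)$ forces the roots of $N_*$ to be pairwise distinct exactly when those of $N$ are — and second, the fact that $N_*$ need not be monic in $y$: its leading coefficient $c_d(x)$ appears in $N_*(x,y)/(y-a_\ell)$ but cancels in the normalisation $L_\ell(y)=N_*(x,y)/\big((y-a_\ell)N_*'(x,a_\ell)\big)$, so it does not enter the final formula. If one prefers to avoid the splitting field, the same computation can be carried out directly in $K_N[y]$ by applying $\Tr_{K_N/\Fbb_q(x)}$ coefficientwise and using $\Tr_{K_N/\Fbb_q(x)}(g(a))=\sum_\ell g(a_\ell)$, but the Lagrange-interpolation viewpoint makes the cancellation of $c_d(x)$ transparent.
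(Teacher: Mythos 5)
Your proof is correct, and it takes a genuinely different route from the paper's. The paper proceeds by citing Serre's trace formula (\emph{Corps Locaux}, III.6, Lemma~2), which supplies $\Tr_{K_N/\Fbb_q(x)}\bigl(a^i/\partial_yN_*(x,a)\bigr)=\eta_{d_y}^{-1}\delta_{i,d_y-1}$ where $\eta_{d_y}$ is the leading coefficient of $N_*$, and then deduces the general statement by a descending induction on $i$ driven by the recursion $Q_i=Q_{i+1}\,y+\eta_{i+1}$ among Horner quotients; that argument never leaves $K_N$. You instead prove the full dual-basis identity $\Tr_{K_N/\Fbb_q(x)}(a^k\omega_i)=\delta_{ik}$ in one stroke, by expanding $N_*(x,y)/\bigl((y-a)\,\partial_yN_*(x,a)\bigr)=\sum_j\omega_j y^j$, passing to the splitting field, and recognising $\sum_\ell a_\ell^i L_\ell(y)=y^i$ via Lagrange interpolation. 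In effect you reprove Serre's lemma rather than invoke it, which makes your argument self-contained and also makes the cancellation of the (non-trivial) leading coefficient visibly automatic; the cost is the detour through a splitting field and the identification $\Tr=\sum_\ell\sigma_\ell$, whereas the paper's inductive computation is purely $K$-rational. Both approaches are standard and both are sound; your observations on the separability of $N_*$ (inherited from that of $N$ via $N_*^p(Y)=N(Y^p)$) and on non-monicity are the right points to flag.
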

\begin{proof}
  Let us fix $N_*(x,y)=\sum_{k=0}^{d_y} \eta_k(x) y^k$. From \cite[Lemma~2 section~III.
  6]{Serre04} we know that $\Tr_{K_{N}/\Fbb_{p^b}(x)}\left(\frac{a^{i}}{\partial_y
  N_*(x,a)}\right)=\frac{1}{\eta_{d_y}}\delta_{i,d_y-1}$,  for all
  $i\leq d-1$. Thus the result holds for $i=d_y-1$, since $Q_{d_y-1}=\eta_{d_y}$. Then for all
  $i$ we have $Q_i=Q_{i+1} y+\eta_{i+1}$. We assume the proposition to be
  true for $i+1$. Then 
  \[
    \Tr_{K_{N}/\Fbb_{p^b}(x)}\left(\frac{Q_i(x,a) f}{\partial_y
    N_*(x,a)}\right)=\Tr_{K_{N}/\Fbb_{p^b}(x)}\left(\frac{Q_{i+1}(x,a)a f}{\partial_y
    N_*(x,a)}\right)+\eta_{i+1}\Tr_{K_{N}/\Fbb_{p^b}(x)}\left(\frac{f}{\partial_y
    N_*(x,a)}\right)\]
  and by hypothesis $\Tr_{K_{N}/\Fbb_{p^b}(x)}\left(\frac{Q_{i+1}(x,a)a f}{\partial_y
    N_*(x,a)}\right)$ is the coefficient of $a^{i+1}$ in $af$, which is
    given by $f_i-\frac{f_{d_y-1}\eta_{i+1}}{\eta_{d_y}}$, while $Tr_{K_{N}
    /\Fbb_{p^b}(x)}\left(\frac{f}{\partial_y N_*(x,a)}\right)$ is the
    coefficient of $a^{d_y-1}$ of $\frac{f}{\eta_{d_y}}$.\\

    \[\Tr_{K_{N}/\Fbb_{p^b}(x)}\left(\frac{Q_i(x,a) f}{\partial_y
    N_*(x,a)}\right)=f_i-\frac{f_{d_y-1}\eta_{i+1}}{\eta_{d_y}}+\eta_{i+1}\frac{f_{d_y-
    1}}{\eta_{d_y}}=f_i.\]
\end{proof}
\begin{Corollaire}\label{coeff_poles_trace}  
  Let $D$ be an effective divisor over $K_N$ and $P$ be an irreducible
  polynomial in $\Fbb_{p^b}[x]$ coprime with
  $\mathrm{Disc}(N_*)$ and the leading coefficient of $N_*$. If none of the
  places in $\Supp(D)$ lie above $P$ then for any
  $f\in\Lcal(A(D))$, none of the coefficients of $f$ in the basis
  $(1,a,\ldots,a^{d_y-1})$ have a pole in $P$.
\end{Corollaire}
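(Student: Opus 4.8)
The plan is to deduce the statement from Proposition~\ref{coefficient_trace_formula}, which writes $f_i=\Tr_{K_N/\Fbb_{p^b}(x)}\left(\frac{Q_i(x,a)\,f}{\partial_y N_*(x,a)}\right)$, by showing that the element under the trace is integral over the discrete valuation ring $\Fbb_{p^b}[x]_{(P)}$. Since $\Fbb_{p^b}[x]_{(P)}$ is integrally closed in $\Fbb_{p^b}(x)$, and the trace of an element of $K_N$ integral over $\Fbb_{p^b}[x]_{(P)}$ is a sum of conjugates each integral over $\Fbb_{p^b}[x]_{(P)}$, hence lies in $\Fbb_{p^b}[x]_{(P)}$, this immediately gives $f_i\in\Fbb_{p^b}[x]_{(P)}$, i.e.\ $f_i$ has no pole at $P$.

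The only step requiring care is showing that $f$ itself, and likewise $a$, has no pole at any place $\Pfrak$ of $K_N$ lying above $P$. By Definition~\ref{construct_divisor} the support of $A(D)$ is contained in $\Supp(D)\cup\Supp(\Diff(K_N))\cup\Supp((x)_-)\cup\Supp((y_N)_-)$, so I would check that none of these four divisors involves a place above $P$. No place of $\Supp(D)$ divides $P$ by hypothesis; the poles of $x$ all lie above the infinite place of $\Fbb_{p^b}(x)$, which is distinct from the finite place cut out by $P$; the extension $K_N/\Fbb_{p^b}(x)$ ramifies only above primes dividing $\Disc(N_*)$ or $\mathrm{lc}(N_*)$ (at a finite prime coprime with both, $a$ is integral and the power basis $1,a,\dots,a^{d_y-1}$ has discriminant a unit), both coprime with $P$, so $\Supp(\Diff(K_N))$ contains no place above $P$; and writing $N_*(x,y)=\sum_{k=0}^{d_y}\eta_k(x)y^k$ with $\eta_{d_y}=\mathrm{lc}(N_*)$, the relation $\sum_k\eta_k(x)a^k=0$ together with $\nu_\Pfrak(\eta_{d_y})=0$ forces $\nu_\Pfrak(a)\geq 0$, hence $\nu_\Pfrak(y_N)=p\,\nu_\Pfrak(a)\geq 0$, so $\Supp((y_N)_-)$ contains no place above $P$ either. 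Therefore $A(D)$ has no place above $P$ in its support, and every $f\in\Lcal(A(D))$ satisfies $\nu_\Pfrak(f)\geq 0$ for all $\Pfrak$ above $P$; that is, $f$ (and, by the same relation, $a$) is integral over $\Fbb_{p^b}[x]_{(P)}$.

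It then remains to control the denominator in the trace formula. Since $a$ is integral over $\Fbb_{p^b}[x]_{(P)}$, the polynomial expressions $Q_i(x,a)$ and $\partial_y N_*(x,a)$ are integral over $\Fbb_{p^b}[x]_{(P)}$ as well; moreover the norm of $\partial_y N_*(x,a)$ from $K_N$ to $\Fbb_{p^b}(x)$ equals, up to a sign and a power of $\mathrm{lc}(N_*)$, the discriminant $\Disc(N_*)$, which is a unit in $\Fbb_{p^b}[x]_{(P)}$. An element of $K_N$ integral over $\Fbb_{p^b}[x]_{(P)}$ whose norm down to $\Fbb_{p^b}(x)$ is a unit has all its valuations at places above $P$ equal to zero, so its inverse is again integral over $\Fbb_{p^b}[x]_{(P)}$; hence $\frac{1}{\partial_y N_*(x,a)}$, and therefore $\frac{Q_i(x,a)\,f}{\partial_y N_*(x,a)}$, is integral over $\Fbb_{p^b}[x]_{(P)}$. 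Applying $\Tr_{K_N/\Fbb_{p^b}(x)}$ and Proposition~\ref{coefficient_trace_formula} concludes. The main obstacle, such as it is, is the bookkeeping in the second paragraph locating the places of $\Diff(K_N)$ and the poles of $a$ among the divisors of $\Disc(N_*)$ and $\mathrm{lc}(N_*)$; the rest is the standard compatibility of traces and norms with integral closures of discrete valuation rings.
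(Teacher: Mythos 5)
Your proof is correct and takes essentially the same route as the paper's: both invoke the trace formula of Proposition~\ref{coefficient_trace_formula}, both reduce to showing that $\tfrac{Q_i(x,a)f}{\partial_y N_*(x,a)}$ is integral locally at $P$, and both conclude from the fact that the trace of an integral element lies in the integrally closed base ring. The only real divergence is in how the denominator is controlled: you argue that $\partial_y N_*(x,a)$ is a local unit because it is integral at $P$ with norm a unit (being, up to a power of $l_c$, the discriminant $\Disc(N_*)$), while the paper instead uses the divisor inequality $(\partial_Y N_*(x,a))_+\leq (d_y-1)(l_c)_0+(\Disc(N_*))_0$ obtained from $N_1=l_c^{d_y-1}N_*(x,Y/l_c)$; these are two standard phrasings of the same fact, and your norm argument is if anything a touch cleaner.
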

\begin{proof}
  Let $l_c$ be the leading coefficient of $N_*$. The function $l_c a$ is integral and
  its minimal polynomial is $N_1=l_c^{d_y-1} N_*(x,Y/l_c)$. We have
  $\Disc(N_1)=l_c^{d_y-1}\Disc(N_*)$ and $(\partial_Y N_*(x,a))_+\leq
  (\partial_Y N_1(x,l_c a))_0
  \leq (d_y-1)(l_c)_0+(\mathrm{Disc}(N_*))_0$. This shows that for all $i$
  (with the notations of the previous proposition),
  $\frac{Q_i(x,a)}{\partial_yN_*(x,a)}$ has no poles that divides $P$ since
  the poles of $a$ are among those of $l_c$.\\
  Let us now show that $\Supp(A(D))$
  does not contain any place that divides $P$. 
  Let $\mathrm{Diff}(K_{N_*})_0$ be the different divisor of $K_{N_*}$ outside of
  the places at infinity. Since $l_c a$ is integral we know that
  $\mathrm{Diff}(K_{N_*})_0\leq (\partial_Y N_1(x,l_c a))_0\leq
  (d_y-1)(l_c)_0+(\mathrm{Disc}(N_*))_0$. Thus $\Supp(A(D))\cap\Supp(P)\subset
  (\Supp(\Diff(K_N))\cap\Supp (P))\cup
  (\Supp(a)_-\cap\Supp(P)\subset\Supp(l_c)\cap\Supp(P)=\varnothing$.
  Thus if we set $\Ocal_P$ the valuation ring associated to $P$ in
  $\Fbb_{p^b}(x)$ and
  $\Ocal_P'$ its integral closure in $K_{N}$, then for all $i$ and all
  $f\in \Lcal(A(D))$,  $\frac{Q_i(x,a)f}{\partial_yN_*(x,a)}\in \Ocal_P'$.
  It follows that if $f_i$ denotes the $i$-th coefficient of $f$ then
  $f_i=\Tr_{K_N/\Fbb_{p^b}(x)}\left(\frac{Q_i(x,a)f}{\partial_yN_*(x,a)}\right)\in
  O_P$ and $f_i$ has no pole in $P$.
\end{proof} 
When knowing the Taylor expansion of $a$ up to the desired precision, computing
the Taylor expansion of an element $f$ of $\Lcal(A(D))$ by composition requires to
compute the Taylor expansion of its coefficients. This can be done in
$\tilde{O}(p\max(\eta,\deg A(D)) d_y)$ operations in $\Fbb_{p^b}$ where $\eta$ is
the degree of the coefficients of $f$. As we show now, by
construction of $A(D)$, $\eta$ and $\deg(A(D))$ have the same order of magnitude.
  \begin{Lemme}
    Let $f\in K_N$ and $\Pfrak\in\Pbb_{\Fbb_{p^b}(x)}$. 
  \[\nu_{\Pfrak}(\Tr_{K_N/\Fbb_{p^b}(x)}(f))
    \geq \min_{\Pfrak'|\Pfrak}\left
    \lfloor\frac{\nu_{\Pfrak'}(f)}{e(\Pfrak'|\Pfrak)}\right\rfloor.\]
  \end{Lemme}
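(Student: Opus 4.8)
The plan is to reduce the global trace to a sum of local traces, and then to treat each local contribution by an elementary scaling argument. First I would invoke the standard decomposition of $\widehat{\Fbb_{p^b}(x)}_\Pfrak$-algebras
\[ K_N \otimes_{\Fbb_{p^b}(x)} \widehat{\Fbb_{p^b}(x)}_\Pfrak \;\simeq\; \prod_{\Pfrak'\mid\Pfrak} K_{N,\Pfrak'}, \]
where $\widehat{\Fbb_{p^b}(x)}_\Pfrak$ denotes the completion of $\Fbb_{p^b}(x)$ at $\Pfrak$ and $f$ maps to the diagonal tuple $(f,\dots,f)$. Base change of the trace form on one side, and the fact that the trace through a product algebra is the sum of the traces through its factors on the other, give the identity in $\widehat{\Fbb_{p^b}(x)}_\Pfrak$
\[ \Tr_{K_N/\Fbb_{p^b}(x)}(f) \;=\; \sum_{\Pfrak'\mid\Pfrak} \Tr_{K_{N,\Pfrak'}/\widehat{\Fbb_{p^b}(x)}_\Pfrak}(f). \]
Since $\nu_\Pfrak$ extends to the completion and is nonarchimedean, $\nu_\Pfrak$ of the left-hand side is at least the minimum over $\Pfrak'\mid\Pfrak$ of $\nu_\Pfrak\bigl(\Tr_{K_{N,\Pfrak'}/\widehat{\Fbb_{p^b}(x)}_\Pfrak}(f)\bigr)$, so it suffices to prove the local estimate $\nu_\Pfrak\bigl(\Tr_{K_{N,\Pfrak'}/\widehat{\Fbb_{p^b}(x)}_\Pfrak}(f)\bigr) \geq \lfloor \nu_{\Pfrak'}(f)/e(\Pfrak'\mid\Pfrak)\rfloor$ for every $\Pfrak'\mid\Pfrak$.

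For the local estimate, fix $\Pfrak'\mid\Pfrak$, write $e := e(\Pfrak'\mid\Pfrak)$, and choose a uniformizer $t$ of $\widehat{\Fbb_{p^b}(x)}_\Pfrak$, so $\nu_{\Pfrak'}(t)=e$. Put $q := \lfloor \nu_{\Pfrak'}(f)/e\rfloor$ and $g := t^{-q}f$; then $\nu_{\Pfrak'}(g) = \nu_{\Pfrak'}(f) - qe \geq 0$, so $g$ lies in the valuation ring $\Ocal_{\Pfrak'}$ of $K_{N,\Pfrak'}$. Because $\widehat{\Fbb_{p^b}(x)}_\Pfrak$ is a complete discretely valued field, $\Ocal_{\Pfrak'}$ is the integral closure of the (integrally closed) valuation ring $\Ocal_\Pfrak$ of $\widehat{\Fbb_{p^b}(x)}_\Pfrak$ in $K_{N,\Pfrak'}$; hence the trace of $g$ lies in $\Ocal_\Pfrak$, i.e. $\nu_\Pfrak(\Tr(g)) \geq 0$. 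Since $t^{q}$ belongs to the base field and the trace is $\widehat{\Fbb_{p^b}(x)}_\Pfrak$-linear, $\Tr(f) = t^{q}\,\Tr(g)$, so $\nu_\Pfrak(\Tr(f)) = q + \nu_\Pfrak(\Tr(g)) \geq q$, which is the desired bound. Summing over $\Pfrak'\mid\Pfrak$ via the first step completes the proof.

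There is no genuine obstacle here: the statement is a routine property of traces in extensions of discrete valuation rings. The only two facts used without proof — the local–global decomposition of the trace and the fact that over a complete discrete valuation ring the integral closure in a finite field extension is again a discrete valuation ring on which the trace takes integral values — are entirely standard (see e.g. Serre, \emph{Local Fields}, Ch.~II). If one wishes to avoid completions altogether, the same two ingredients can instead be phrased in terms of the semilocal Dedekind ring obtained as the integral closure of $\Ocal_\Pfrak \subseteq \Fbb_{p^b}(x)$ in $K_N$ localized at $\Pfrak$, but working in the completion keeps the bookkeeping shortest, and a sharper inequality involving $\Diff(K_N/\Fbb_{p^b}(x))$ is available if ever needed.
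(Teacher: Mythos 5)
Your proof is correct, but it takes a different route from the paper's. You first pass to the completion of the base and split $K_N\otimes_{\Fbb_{p^b}(x)}\widehat{\Fbb_{p^b}(x)}_\Pfrak$ into the product of the local fields $K_{N,\Pfrak'}$, so that the global trace becomes a sum of local traces, and then prove the bound place by place via a scaling by a power of a uniformizer of the base completion plus integrality of the local trace. The paper never leaves the global field: it takes $\Ocal_\Pfrak\subset\Fbb_{p^b}(x)$ with integral closure $\Ocal'_\Pfrak\subset K_N$, picks the place $\Pfrak^*$ above $\Pfrak$ realizing the minimum, multiplies $f$ by $P^k$ where $P$ is a prime element of $\Pfrak$ in the base and $k=\lceil -\nu_{\Pfrak^*}(f)/e(\Pfrak^*|\Pfrak)\rceil$, observes that $P^kf\in\Ocal'_\Pfrak$ at every place above $\Pfrak$, and concludes from $\Tr_{K_N/\Fbb_{p^b}(x)}(\Ocal'_\Pfrak)\subset\Ocal_\Pfrak$ (Stichtenoth, Cor.~3.3.2) together with $\Fbb_{p^b}(x)$-linearity of the trace. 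So both arguments rest on the same two pillars, namely integrality of the trace on the integral closure and pulling a base-field scalar out of the trace; the difference is that the paper does a single global scaling tailored to the worst place, which keeps the proof free of completions and of the local--global trace identity, while your version gives the finer per-place estimates and points toward sharper bounds involving the different. One caveat for your route: the decomposition into a product of fields and the identity $\Tr_{K_N/\Fbb_{p^b}(x)}(f)=\sum_{\Pfrak'\mid\Pfrak}\Tr_{K_{N,\Pfrak'}/\widehat{\Fbb_{p^b}(x)}_\Pfrak}(f)$ require $K_N/\Fbb_{p^b}(x)$ to be separable (otherwise the tensor product may have nilpotents); this does hold in the paper's setting because $N_*$ is separable, but it is an extra hypothesis your argument uses and the paper's avoids, and it is worth stating explicitly.
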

\begin{proof}
  Let $\Ocal_\Pfrak$ be the valuation ring associated to the place
  $\Pfrak$ and $\Ocal_\Pfrak'$ be its integral closure in $K_N$.
  For any $f\in K_N$, if $f\in \Ocal_\Pfrak'$ then $\Tr_{K_N/\Fbb_{p^b}(x)}(f)\in\Ocal_{\Pfrak}$
  \cite[Corollary~3.3.2]{StHe08}.\\
  It follows that if $\Pfrak$ is a pole of $\Tr_{K_N/\Fbb_{p^b}(x)}(f)$, then at least
  one of the places lying under $\Pfrak$ is a pole of $f$. Let
  $\Pfrak^*$ above $\Pfrak$ be such that
  \[\left\lfloor
  \frac{\nu_{\Pfrak^*}(f)}{e(\Pfrak^*|\Pfrak)}\right\rfloor=\min_{\Pfrak'|\Pfrak}\left                  \lfloor\frac{\nu_{\Pfrak'}(f)}{e(\Pfrak'|\Pfrak)}\right\rfloor.\]
  %Set  $-\nu:=\nu_{\Pfrak^*}(f)$, $e:=e(\Pfrak^*|\Pfrak)$
  Set $k=\left\lceil
  \frac{-\nu_{\Pfrak^*}(f)}{e(\Pfrak^*|\Pfrak)}\right\rceil$ and  $P\in K_N$
  a prime element of $\Pfrak$. Then for any $\Pfrak'$ above $\Pfrak$ we have

  \[
    \nu_{\Pfrak'}(P^k f)=ke(\Pfrak'|\Pfrak)+\nu_{\Pfrak'}(f).\]
  By definition $k\geq -\frac{\nu_{\Pfrak'}(f)}{e(\Pfrak'|\Pfrak)}$ thus
  $\nu_{\Pfrak'}(P^kf)\geq 0$. It follows that
  \begin{align*}
    \nu_{\Pfrak}(\Tr_{K_N/\Fbb_{p^b}(x)}(P^k f))&=
    \nu_{\Pfrak}(P^k\Tr_{K_N/\Fbb_{p^b}(x)}(f))\\
    &=k+\nu_{\Pfrak}(\Tr_{K_N/\Fbb_{p^b}(x)}(f))\\
    &\geq 0\\
    \nu_{\Pfrak}(\Tr_{K_N/\Fbb_{p^b}(x)}(f))&\geq -k
  \end{align*}
  which is the desired result.
\end{proof}
\begin{Proposition}\label{divisor<->coeff}
  Let $D$ be an effective divisor over $K_N$ and $f=\frac{1}{f_{-1}}\sum_{i=0}^{d_y-1}f_i
  a^i\in\Lcal(A(D))$ where $f_{-1},f_0,\ldots,f_{d_y-1}\in\Fbb_q[x]$ are
  globally coprime polynomials.\\
  Then for any $i\in\ldbrack-1;d_y-1\rdbrack$, both $\deg(f_i)$ and
  $\deg(A(D))$ are in $O(\deg(D)+d_xd_y)$.\\
\end{Proposition}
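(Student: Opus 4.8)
The statement asks for two estimates: $\deg(A(D))=O(\deg(D)+d_xd_y)$, and $\deg(f_i)=O(\deg(D)+d_xd_y)$ for every $i\in\ldbrack-1;d_y-1\rdbrack$. The plan is to obtain the first by a direct Riemann--Hurwitz computation and then to reduce the second to it: an element $f\in\Lcal(A(D))$ has pole divisor in $K_N$ of degree $\le\deg(A(D))$, and what remains is to translate this geometric bound into a bound on the polynomials $f_{-1},\dots,f_{d_y-1}$ of the globally reduced representation $f=\frac1{f_{-1}}\sum_if_ia^i$, using the trace formula of Proposition~\ref{coefficient_trace_formula}.

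First, $\deg(A(D))$. Since $D$ is effective, $\sum_{\Pfrak\in\Supp D}\Pfrak\le D$, and $2(x)_-$, $\Diff(K_N)$, $(y_N)_-$ are effective, so with $E_1=\sum_{\Pfrak\in\Supp D}\Pfrak+\Diff(K_N)$ one has $A(D)\le\max(E_1,(y_N)_-/p)\le E_1+(y_N)_-/p$, whence
\[\deg(A(D))\le\deg(D)+\deg(\Diff(K_N))+\deg\bigl((y_N)_-/p\bigr).\]
Because $N_*$ is separable --- checked earlier from $N_*(Y)=\prod_j(Y-y_j^{1/p})$ --- Riemann--Hurwitz gives $\deg(\Diff(K_N))=2g_{K_N}-2+2d_y$, and the genus of the plane curve $N_*=0$ satisfies $g_{K_N}\le(d_x-1)(d_y-1)$ (at most the number of interior lattice points of the Newton polygon of $N_*$), so $\deg(\Diff(K_N))=O(d_xd_y)$. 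Finally $(y_N)_-/p=(a)_-$ and $\deg((a)_-)=\deg((a)_0)=[K_N:\Fbb_q(a)]\le\deg_x(N_*)=d_x$, since $x$ is a root of $N_*(X,a)\in\Fbb_q(a)[X]$. Altogether $\deg(A(D))=O(\deg(D)+d_xd_y)$.

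Next, the coordinates. From $(f)_-\le A(D)$, the finite part of the pole divisor of $f$ in $K_N$ has degree $O(\deg(D)+d_xd_y)$. For $\deg(f_{-1})$: by global coprimality $f_{-1}$ is the minimal common denominator of the rational functions $f_i/f_{-1}$, so it suffices to produce $\Delta\in\Fbb_q[x]$ with $\Delta f\in\Fbb_q[x][a]$ and $\deg\Delta=O(\deg(D)+d_xd_y)$, for then $f_{-1}\mid\Delta$. I would take $\Delta=u\delta$, where $u\in\Fbb_q[x]$ clears the finite poles of $f$ (its degree is a sum of the corresponding local pole orders, controlled by the degree of the finite part of $(f)_-$), and $\delta\in\Fbb_q[x]$ depends only on $N_*$ and satisfies $\delta\cdot\Ocal'\subseteq\Fbb_q[x][a]$, where $\Ocal'$ is the integral closure of $\Fbb_q[x]$ in $K_N$ --- such a $\delta$ is built from $l_c$ and $\Disc_y(N_*)$ (the conductor of the order generated by $a$). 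Since $uf\in\Ocal'$ we get $\delta uf\in\Fbb_q[x][a]$, hence $\deg(f_{-1})=O(\deg(D)+d_xd_y)$. For $0\le i\le d_y-1$, Proposition~\ref{coefficient_trace_formula} gives
\[f_i=f_{-1}\cdot\Tr_{K_N/\Fbb_q(x)}\!\Bigl(\tfrac{Q_i(x,a)\,f}{\partial_y N_*(x,a)}\Bigr),\]
a polynomial, so its only pole is at $x=\infty$; applying the preceding lemma on valuations of traces at $x=\infty$, together with $\deg((Q_i(x,a))_-)=O(d_xd_y)$ and $\deg((\partial_y N_*(x,a))_0)=O(d_xd_y)$ (both confined to the poles of $a$ and the zeros coming from $\Disc_y(N_*)$), $(f)_-\le A(D)$, and the bound just obtained for $\deg(f_{-1})$, bounds $-\nu_\infty(f_i)=\deg(f_i)$ by $O(\deg(D)+d_xd_y)$.

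The step I expect to be the real obstacle is the production of $\delta$ with $\deg\delta=O(d_xd_y)$. The basis $(1,a,\dots,a^{d_y-1})$ fails to be an integral basis precisely at the finitely many places dividing $l_c$ and $\Disc_y(N_*)$, and the naive route --- pass to the monic minimal polynomial $l_c^{d_y-1}N_*(x,y/l_c)$ of $l_ca$ and bound the index of its order by the discriminant --- only yields $\deg\delta=O(d_xd_y^2)$ when $\deg l_c$ is large. Reaching the sharper $O(d_xd_y)$ requires a careful place-by-place analysis at those bad places, using that the poles of $a$ already lie in $\Supp(A(D))$ and that $\deg(\Diff(K_N))=O(d_xd_y)$; the remaining arguments are routine bookkeeping with valuations. (When $N_*$ is monic in $y$, so $l_c=1$, the naive bound already suffices.)
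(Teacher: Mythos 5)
Your bound on $\deg(A(D))$ is essentially the paper's own computation (use $A(D)\le D+\Diff(K_N)-2(x)_\infty+(a)_\infty$, $\deg(\Diff(K_N)-2(x)_\infty)=2g-2$, $g\le(d_x-1)(d_y-1)$, $\deg(a)_\infty\le d_x$) and is fine. The gap is exactly where you flag it, and it is fatal to the stated bound rather than a technicality: your treatment of the denominator rests on a polynomial $\delta$ with $\delta\,\Ocal'\subseteq\Fbb_q[x]\oplus\Fbb_q[x]a\oplus\cdots\oplus\Fbb_q[x]a^{d_y-1}$ and $\deg\delta=O(d_xd_y)$, which you do not construct; as you yourself note, the index/conductor route through the monic model $l_c^{d_y-1}N_*(x,Y/l_c)$ only yields $O(d_xd_y^2)$ when $\deg l_c$ is comparable to $d_x$, so as written you obtain only $\deg f_{-1}=O(\deg D+d_xd_y^2)$. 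Since your estimate for the numerators is $\deg f_i\le\deg f_{-1}$ plus a trace bound, the loss propagates to every coefficient, and the ``careful place-by-place analysis'' you defer to is precisely the missing content of the proof.

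The paper's proof shows that no conductor or integral-basis analysis is needed, and this is the idea you are missing: clear denominators with $\partial_yN_*(x,a)$ itself. By Proposition~\ref{coefficient_trace_formula}, the coefficients of $\partial_yN_*(x,a)f$ in the basis $(1,a,\ldots,a^{d_y-1})$ are the traces $\Tr_{K_N/\Fbb_q(x)}(Q_i(x,a)f)$, and the valuation lemma for traces applied at every prime $P$ of $\Fbb_q(x)$ gives $\nu_P(\Tr(Q_if))\ge\min_{\Pfrak|P}\lfloor(\nu_\Pfrak(Q_i)+\nu_\Pfrak(f))/e(\Pfrak|P)\rfloor$. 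Using $\nu_\Pfrak(f)\ge-\nu_\Pfrak(A(D))$ and the fact that each $Q_i(x,a)$ is a polynomial in $x$ and $a$ of bidegree at most $(d_x,d_y-1)$, hence has pole divisor of degree $O(d_xd_y)$ supported above the poles of $x$ and $a$, one gets $\deg(\Tr(Q_if))_\infty\le\deg(Q_i)_\infty+\deg A(D)=O(\deg D+d_xd_y)$. Crucially this lower bound on $\nu_P$ is uniform in $i$ and local at each $P$, so it bounds the numerators and the least common denominator of $\partial_yN_*(x,a)f$ simultaneously; there is no hidden factor $d_y$ from taking a least common multiple. Multiplying back by $(\partial_yN_*(x,a))^{-1}$, whose coefficients have degree $O(d_xd_y)$ by resultant/B\'ezout bounds, recovers $f$ with all coefficients (including $f_{-1}$) of degree $O(\deg D+d_xd_y)$ --- the bound your conductor-based plan does not reach.
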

\begin{proof}

  Let $P\in\Fbb_q[x]$ be an irreducible polynomial and let $Q_i$ denote the
  quotient of the Euclidean division of $N_*$ by $Y^{i+1}$ applied to $x$
  and $a$. If $P$ is a pole of $\Tr_{K_{N_*}/\Fbb_q(x)}(Q_if)$:

  \begin{align*}
    \nu_P(\Tr_{K_{N_*}/\Fbb_q(x)}(Q_if))\deg(P)
    &\geq \min_{\Pfrak|P}\left\lfloor\frac{\nu_{\Pfrak}(Q_i)+
    \nu_\Pfrak(f)}{e(\Pfrak|P)}\right\rfloor\deg(P)\\
    &\geq \sum_{\Pfrak|P}(\nu_\Pfrak(Q_i)+\nu_\Pfrak(f))\deg(\Pfrak)\\
    &\geq \sum_{\Pfrak|P}(\nu_\Pfrak(Q_i)-\nu_\Pfrak(A(D)))\deg(\Pfrak)\\
  \end{align*}
  It follows that
  \[\deg (\Tr_{K_{N_*}/\Fbb_q(x)}(Q_if))_- \leq \deg (Q_i)_-+\deg(A(D)).\]
  But $A(D)\leq D+\mathrm{Diff}(K_N)-2(x)_-+(a)_-$.
  Thus, since $\deg(\Diff(K_N)-2(x)_-)=2g-2$, where $g$ denotes the
  genus of $K_N$, $\deg A(D)\leq \deg(D)+d_x+2g-2$.
  Since $g\leq (d_x-1)(d_y-1)$ (see~\cite[Corollary~2.6]{Beelen09}), it
  follows that $\deg(A(D))=O(\deg(D)+d_xd_y)$ and 
  $\deg (\Tr_{K_{N_*}/\Fbb_q(x)}(Q_if))_-=O(d_xd_y+\deg(D))$.
  Thus, according to Corollary~\ref{coefficient_trace_formula},
  $\partial_yN_*(x,a)f$ has coefficients of degree $O(d_xd_y+\deg(D))$.
  Since $(\partial_yN_*(x,a)f)^{-1}$ has coefficients of size $O(d_xd_y)$, the
  result follows.
\end{proof}
\begin{Notation}\label{notation_T_P(B)}
  Let $\Bcal$ be a basis of $\Lcal(A(D))$ with $D\in\mathrm{Div}(K_{N})$, and $P\in\Fbb_q[x]$ 
  an irreducible polynomial verifying the hypothesis of
  Corollary~\ref{coeff_poles_trace}. Let $\Pfrak\in\Pbb(K_N)$ be lying over
  $P$ and $t_\Pfrak$ be a prime element of $\Pfrak$ and $B_0$ be an
  $\Fbb_p$-basis of $\Gcal_\Pfrak$.
  We denote by $\Tcal_\Pfrak(\Bcal)$ the matrix with coefficient in $\Fbb_p$
  whose columns are the Taylor expansion of the image of elements of
  $\Bcal$ by the map $f\mapsto f-S_{p-1}(f)$, at
  precision $\left\lfloor\frac{\deg A(D)}{\deg \Pfrak}\right\rfloor+1$ written in
  the basis $\Bcal_0\times(t_\pfrak^i)_{i\deg(\Pfrak)\leq\deg(A(D))}$.
\end{Notation}
We can now write the final version of our algorithm the solve the
$p$-Riccati equation in Algorithm~\ref{algo_p-riccati_irred}.\\
  \begin{algorithm}
  \begin{flushleft}
    \emph{Input:} $N_*\in \Fbb_{p^b}[x,y]$ an irreducible separable polynomial.\\
    \emph{Output:} $f\in K[a]$, where $a$ is a root of $N_*$ such that
    $f^{(p-1)}+f^{p}=a^p$, if such an $f$ exists. 

  \end{flushleft}
    \begin{enumerate}
      \item Test if $N_*^p(\partial)$ is irreducible using
        Algorithm~\ref{algo_irreducibility}.
      \item \textbf{If} $N_*^p(\partial)$ is irreducible \textbf{return}.
      \item Set $d_y:=\deg_Y N_*$ and
        $K_{N_*}:=K[a]=\nicefrac{\Fbb_{p^b}(x)[y]}{(N_*)}$.
      \item Compute $(a)_-$.
      \item Compute $A:=\mathrm{Diff}(K_{N_*})-2(x)_-$.
      \item Set $D:=0$, $n:=0$ and $l=0$.
      \item\label{step_random_div} Select
        $(D_1,\ldots,D_{n})\in\mathrm{Div}(K_{N})^{n}$ a family of
        $n$ randomly chosen effective divisors\\ of degrees $2g+1$
      \item $D\leftarrow D+D_1+\ldots D_n$, $l\leftarrow l+n$, $n\leftarrow
        n+\max(1,l)$ and $A(D):=A$.
      \item \textbf{For} $\Pfrak\in\Supp D$ \textbf{do:}
    \begin{itemize}
      \item $A(D)\leftarrow A(D)+\Pfrak$ 
    \end{itemize}
  \item $A(D)\leftarrow \max((a)_-,A(D))$.
  \item Compute a basis $\Bcal$ of $\Lcal(A(D))$
  \item Select $P\in\Fbb_q[x]$ an irreducible polynomial verifying the
    hypothesis of\\ Corollary~\ref{coeff_poles_trace} with respect to $D$ and $\Pfrak|P$.
  \item Compute the Taylor expansion  $V$ of $a$ in $t_\Pfrak$ at precision
    $\left\lfloor \frac{\deg A(D_1,\ldots,D_{g+1})}{\deg \Pfrak}\right\rfloor+1$
  \item Compute $\Tcal_{\Pfrak}(\Bcal)$ (see Notation~\ref{notation_T_P(B)}).
  \item Solve $\Tcal_{\Pfrak}(\Bcal)X=V$.
    \item \textbf{If} a solution $X$ exists reconstruct a solution to the
      $p$-Riccati equation from it\\
        and \textbf{return} it.
  \item \textbf{Else} redo from step~\ref{step_random_div}
  \end{enumerate}
    \caption{p-Riccati\_with\_irreducibility}
    \label{algo_p-riccati_irred}
  \end{algorithm}

\begin{Theoreme}\label{complexity_size_p-riccati}
  Let $r$ be the dimension of $\Gfrak^p_N$ over $\Fbb_p$, where
  $N(y^p)=N_*^p(y)$. We have $r\leq d_xd_y$ and Algorithm~\ref{algo_p-riccati_irred} returns 
  if it exists a solution of the $p$-Riccati equation relative to $N$ whose coefficients are of
  degree $O(rd_xd_y)$ at the cost of 
  \begin{itemize}
    \item testing the irreducibility of
      $N_*^p(\partial)$ using Algorithm~\ref{algo_irreducibility}
    \item factoring the divisors $(a)_-$ and $(x)_-$
    \item computing the different divisor of $K_{N}$
    \item selecting $O(r)$ uniformly random elements of $\mathrm{Div}(K_{N_*})$ of degree
      $2g+1$
    \item computing  $O(\log_2(r))$ basis of Riemann-Roch spaces of
      dimension $O(rd_xd_y)$.
    \item $\tilde{O}(bpr^2d_x^2d_y^3+(brd_xd_y)^{\omega})$ bit operations.
  \end{itemize}
  The total complexity of the computation is polynomial in $b$, $d_x$
  and $d_y$ and linear in $p$.
\end{Theoreme}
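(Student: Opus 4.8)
The plan is to establish correctness and the cost estimates by going through Algorithm~\ref{algo_p-riccati_irred} step by step, after first disposing of the structural bound $r\leq d_xd_y$. For the latter, reduce the degree exact sequence $0\to\mathrm{Cl}^0(K_N)\to\mathrm{Cl}(K_N)\xrightarrow{\deg}\Zbb\to 0$ modulo $p$: right-exactness of $-\otimes\Zbb/p\Zbb$ gives $\dim_{\Fbb_p}\Gfrak^p_N\leq\dim_{\Fbb_p}\bigl(\mathrm{Cl}^0(K_N)/p\mathrm{Cl}^0(K_N)\bigr)+1$, and since $\mathrm{Cl}^0(K_N)$ is a finite abelian group, $\mathrm{Cl}^0(K_N)/p\mathrm{Cl}^0(K_N)\simeq\mathrm{Cl}^0(K_N)[p]$ embeds into the $p$-torsion of the Jacobian of $K_N$, whose $\Fbb_p$-dimension (the Hasse--Witt rank) is at most the genus $g$. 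Hence $r\leq g+1$; combining with $g\leq(d_x-1)(d_y-1)$ (\cite[Corollary~2.6]{Beelen09}) and $d_x,d_y\geq1$ yields $r\leq d_xd_y$.

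Next, correctness. By Lemma~\ref{section_derivation} together with $\Phi_N^{-1}(f^p)=f$ and $\Phi_N^{-1}(a^p)=a$, the $p$-Riccati equation $f^{(p-1)}+f^p=a^p$ is equivalent to the $\Fbb_p$-affine equation $f-S_{p-1}(f)=a$ in $K_N$. Step~1 runs Algorithm~\ref{algo_irreducibility}: if $N_*^p(\partial)$ is irreducible then $S_N=\varnothing$ and returning nothing is correct; otherwise a solution exists. In the loop, $A(D)$ is assembled from the accumulated sum $D=D_1+\cdots+D_l$ of random effective divisors exactly as in Definition~\ref{construct_divisor}, and Step~11 computes a basis $\Bcal$ of $\Lcal(A(D))$. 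By Proposition~\ref{Proposition_section_stability} the map $f\mapsto f-S_{p-1}(f)$ sends $\Lcal(A(D))$ into itself, and $a\in\Lcal(A(D))$ by construction; choosing in Step~12 a place $\Pfrak$ over a prime $P$ satisfying the hypotheses of Corollary~\ref{coeff_poles_trace} (so $\Pfrak\notin\Supp A(D)$) makes the truncation $\Lcal(A(D))\hookrightarrow\Gcal_\Pfrak[[t_\Pfrak]]$ at precision $\lfloor\deg A(D)/\deg\Pfrak\rfloor+1$ injective, by the vanishing criterion that an element of $\Lcal(A(D))$ is zero once its $\nu_\Pfrak$ exceeds $\deg A(D)/\deg\Pfrak$. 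Thus a solution of $\Tcal_\Pfrak(\Bcal)X=V$ is exactly an element of $S_N\cap\Lcal(A(D))$, and by Corollary~\ref{local_sol}/Definition~\ref{construct_divisor}, once the classes of $D_1,\dots,D_l$ generate $\Gfrak^p_N$ this intersection is nonempty precisely when $S_N$ is. Hence every output is a genuine solution, and the loop terminates: since $\Gfrak^p_N\simeq\Fbb_p^r$, a family of $O(r)$ uniformly distributed classes (obtained via the uniform sampling of~\cite{Bruin10}, itself using the precomputed zeta function) generates it with probability at least $\prod_{i\geq1}(1-p^{-i})>0$, so the doubling schedule of Step~8 reaches $\Theta(r)$ divisors, hence a generating family, after $O(\log r)$ iterations in expectation.

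For the degree of the output and the bit cost, note that at the successful iteration we have used $l=O(r)$ random divisors of degree $2g+1=O(d_xd_y)$, so $\deg D=O(rd_xd_y)$ and, by Proposition~\ref{divisor<->coeff}, $\deg A(D)=O(rd_xd_y)$ while every coefficient in the basis $1,a,\dots,a^{d_y-1}$ of any element of $\Lcal(A(D))$---in particular of the returned solution---has degree $O(rd_xd_y)$. The remaining work splits as in the statement: Step~1 is the irreducibility test; Steps~4--5 reduce to factoring $(a)_\infty$ and $(x)_\infty$ and computing $\Diff(K_{N_*})$; Step~7 selects $O(r)$ random divisors of degree $2g+1$ over the whole run; Step~11 computes $O(\log_2 r)$ Riemann--Roch bases of dimension $O(\deg A(D))=O(rd_xd_y)$. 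For the arithmetic, in Step~13 one Newton-iterates on $N_*$ to get the Taylor expansion of $a$ at $\Pfrak$ to precision $O(prd_xd_y)$, then Taylor-expands each of the $O(rd_xd_y)$ basis elements by composing the expansions of its coefficients (of degree $O(rd_xd_y)$) with that of $a$, costing $\tilde{O}(prd_xd_y^2)$ operations in $\Fbb_q$ per element by the estimate preceding Proposition~\ref{coefficient_trace_formula}, hence $\tilde{O}(pr^2d_x^2d_y^3)$ operations in $\Fbb_q$, i.e.\ $\tilde{O}(bpr^2d_x^2d_y^3)$ bit operations; extracting the $S_{p-1}$-parts and assembling $\Tcal_\Pfrak(\Bcal)$ stays within this bound. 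Finally the $\Fbb_p$-linear system of Steps~15--16 has size $O(b\deg A(D))=O(brd_xd_y)$ and is solved in $\tilde{O}((brd_xd_y)^\omega)$ bit operations. The geometric growth of $\deg D$ along the loop makes the last iteration dominate, yielding the total $\tilde{O}(bpr^2d_x^2d_y^3+(brd_xd_y)^\omega)$; with $r=O(d_xd_y)$ and $g=O(d_xd_y)$ every term is polynomial in $b,d_x,d_y$ and linear in $p$.

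The step I expect to be the main obstacle is the termination/probabilistic part of the second paragraph: making rigorous that $O(r)$ sampled divisor classes generate $\Gfrak^p_N$ with constant probability---which forces one to combine the uniform sampling of~\cite{Bruin10} (hence the zeta-function precomputation) with the doubling schedule---together with the bookkeeping needed to guarantee that the successive $\deg A(D)$ encountered in the loop all stay in $O(rd_xd_y)$ and that the dominant arithmetic really does concentrate at the final iteration, so that the advertised cost is not overshot.
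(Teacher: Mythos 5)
Your proposal is correct and follows essentially the same path as the paper's proof: reduce the equation to the $\Fbb_p$-affine form $f-S_{p-1}(f)=a$ via the section map, stabilise $\Lcal(A(D))$ under it, truncate injectively at a place $\Pfrak\notin\Supp A(D)$ to get a finite linear system, and bound each stage of the algorithm separately. The one genuine value you add is that you actually prove $r\leq d_xd_y$ — via $\mathrm{Cl}^0/p\mathrm{Cl}^0\simeq\mathrm{Cl}^0[p]$, the $p$-rank bound on the Jacobian giving $r\leq g+1$, and then $g\leq(d_x-1)(d_y-1)$ — whereas the paper asserts $r\leq g+1$ without justification in the definition of $\Gfrak^p_N$ and the inequality $r\leq d_xd_y$ appears only in the theorem statement. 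You also make the probabilistic termination argument explicit (the product $\prod_{i\geq1}(1-p^{-i})>0$), where the paper simply states that $O(\log_2 r)$ doubling rounds suffice ``on average.'' Neither of these changes the decomposition or the key lemmas used; they just supply details the paper leaves implicit.
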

\begin{proof}
  The cost of steps $(1)$ to $(5)$ in Algorithm~\ref{algo_p-riccati_irred}
  is the cost of using Algorithm~\ref{algo_irreducibility}. The
  cost of step (4) is the cost of computing $(a)_-$, $(x)_-$ and 
  $\mathrm{Diff}(K_{N_*})$.\\
  The degree of $D$ roughly doubles at each repetition of steps $(7)$ to $(15)$ and are
  repeated on average $O(\log_2(r))$ times after which we have selected
  $O(r)$ uniformly random elements of $\Gfrak^p_N$ which form a generating
  family of it.
  The cost of steps
  (7) to (10) is essentially the cost of selecting uniformly random
  divisors of degree $2g+1$. By definition of $A(D)$, it
  is of degree $O(d_xd_y+rg)$. Since $g=O(d_xd_y)$ we find that
  $A(D)$ is of degree $O(rd_xd_y)$.\\
  Since we know that the solution of the $p$-Riccati equation constructed by
  Algorithm~\ref{algo_p-riccati_irred} is an element of
  $\Lcal(A(D))$, Proposition~\ref{divisor<->coeff} states
  that this solution has coefficients of degree $O(rd_xd_y)$.\\
  The cost of step (11) is thus the cost of computing a basis of
  $\Lcal(A(D))$ which is of dimension
  $O(\deg(A(D)) \subset O(rd_xd_y)$.\\
  Step (11) requires the computation of $\mathrm{Disc}(N_*)$ whose cost is
  negligible in regard of the final result.\\
  The cost of steps (13) and (14) is the cost of computing the Taylor
  expansions of $O(rd_xd_y)$ functions in $\Lcal(A(D))$ using Newton
  iterations. The cost for one such function is $\tilde{O}(bprd_xd_y^2)$ bit
  operations so the total cost is $\tilde{O}(bpr^2d_x^2d_y^3)$. Using this
  we can compute the Taylor expansions of $h-S_{p-1}(h)$ for $h\in\Bcal$
  at precision $O(\frac{rd_xd_y}{\deg(\Pfrak)})$. We recall that $\Bcal$ is
  an $\Fbb_{p^b}$-basis of $\Lcal(A(D))$. Since we want the result on an
  $\Fbb_p$-basis, we still need to multiply the result by an $\Fbb_p$-basis
  of $\Fbb_{p^b}$ which can be done in $\tilde{O}((brd_xd_y)^2\log(p))$ bit
  operations. Thus steps
  (13) and (14) can be done in $\tilde{O}(bpr^2d_x^2d_y^3+(brd_xd_y)^2)$ binary
  operations.\\
  Finally, step (14) is a matter of solving a $\Fbb_p$-linear system of
  size $O(brd_xd_y)\times O(brd_xd_y)$ which can be done in
  $\tilde{O}((b rd_xd_y)^{\omega})$ operations in $\Fbb_p$.\\
  Reconstructing the solution to the $p$-Riccati equation is a matter of
  summing $O(rd_xd_y^2)$ polynomial coefficients in $\Fbb_{p^b}[x]$ of degree
  $O(rd_xd_y)$ which can be done in $\tilde{O}(br^2d_x^2d_y^3\log(p))$ bit
  operations.

  The sum of those cost yield the final result.
\end{proof}

  \begin{Remarque}\label{iterative_improvement}
    In our experiments we often found that $r=O(1)$ hence the expression of
    the complexity in terms of this additional parameter and not purely in
    terms of $d_x$ and $d_y$.
  \end{Remarque}

\section{Factoring differential operators}\label{section_fact_eff}

Now that we have a working algorithm to solve $p$-Riccati equations and
degree bounds for the solutions, we discuss how it fits in the broader
context of differential operators factorisation. We begin by discussing how
to go from a solution of the $p$-Riccati equation relative to $N$, to the
corresponding irreducible divisor of $N(\partial^p)$.

\begin{Proposition}\label{Proposition_kernel_irred_div}
  Let $N\in C[Y]$ be a separable irreducible polynomial and $f\in K_N$ be a solution
  to the $p$-Riccati equation relative to $N$. If $L$ is a generator of the
  ideal of operators in $K\langle\partial\rangle$ which are left multiple
  of $\partial-f$ then $L$ is an irreducible divisor of $N(\partial^p)$.
\end{Proposition}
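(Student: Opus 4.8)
The plan is to establish two facts about $L$ and then combine them: that $L$ right-divides $N(\partial^p)$, and that $\mathrm{ord}(L)\leq\deg N$. One first observes that the set of operators in question really is a left ideal of $K\langle\partial\rangle$: if $M=M_1(\partial-f)$ with $M_1\in K_N\langle\partial\rangle$ and $P\in K\langle\partial\rangle$, then $PM=(PM_1)(\partial-f)$ is again a left multiple of $\partial-f$. Hence $L$ is well defined up to a nonzero scalar, and an operator of $K\langle\partial\rangle$ lies in this ideal precisely when it is right-divisible by $L$.

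For the first fact, I would argue that $\partial-f$ already right-divides $N(\partial^p)$ inside $K_N\langle\partial\rangle$. Since $K_N$ satisfies Hypothesis~\ref{hypothesis_x}, Proposition~\ref{p-riccati_order_one} applied over $K_N$ turns the relation $f^{(p-1)}+f^p=y_N$ into the statement that $\partial-f$ right-divides $\partial^p-y_N$; and since $g^{(p)}=0$ for every $g\in K_N$, the element $\partial^p$ is central in $K_N\langle\partial\rangle$, so factoring $N(Y)=(Y-y_N)M(Y)$ over $C_N$ gives $N(\partial^p)=M(\partial^p)(\partial^p-y_N)$. Therefore $\partial-f$ right-divides $N(\partial^p)$, so the operator $N(\partial^p)\in K\langle\partial\rangle$ is a left multiple of $\partial-f$, hence belongs to the left ideal generated by $L$, which is exactly to say that $L$ right-divides $N(\partial^p)$ in $K\langle\partial\rangle$.

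For the order bound, I would note that the left ideal generated by $L$ is by construction the kernel of the $K\langle\partial\rangle$-linear reduction map $\psi\colon K\langle\partial\rangle\to W:=\nicefrac{K_N\langle\partial\rangle}{K_N\langle\partial\rangle(\partial-f)}$ sending an operator to its class. Thus $\psi$ induces an injection $\Dcal_L\hookrightarrow W$, and since $\dim_{K_N}W=1$ this forces $\mathrm{ord}(L)=\dim_K\Dcal_L\leq[K_N:K]=\deg N$.

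To conclude: since the $p$-Riccati equation relative to $N$ has the solution $f\in K_N$, the algebra $\Dcal_{N(\partial^p)}$ is isomorphic to $M_p(C_N)$, so Proposition~\ref{properties_sec_1}~(v) tells us every irreducible divisor of $N(\partial^p)$ has order $\deg N$. As $\partial-f$, of order $1$, right-divides $L$, we have $\mathrm{ord}(L)\geq 1$, so $L$ admits a monic irreducible right divisor $L_0$; being a right divisor of $L$, hence of $N(\partial^p)$, $L_0$ has order $\deg N$, whence $\deg N=\mathrm{ord}(L_0)\leq\mathrm{ord}(L)\leq\deg N$, so $L=L_0$ up to a nonzero scalar and $L$ is irreducible. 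The step I expect to require the most care is the bound $\mathrm{ord}(L)\leq\deg N$: the naive impulse is to compute $L$, whereas the efficient route is to identify the left ideal generated by $L$ with $\ker\psi$ and realise $\Dcal_L$ as a $K$-subspace of the one-dimensional $K_N\langle\partial\rangle$-module $\nicefrac{K_N\langle\partial\rangle}{K_N\langle\partial\rangle(\partial-f)}$. The other ingredients — centrality of $\partial^p$ over $K_N$ and the existence of an irreducible right divisor — are routine.
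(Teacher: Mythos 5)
Your proof is correct and follows essentially the same approach as the paper: a $K$-dimension count to bound $\mathrm{ord}(L)\leq\deg N$, a divisibility argument showing $L$ right-divides $N(\partial^p)$, and Proposition~\ref{properties_sec_1}.(v) to conclude irreducibility. Your divisibility step is slightly more streamlined — you note directly that $N(\partial^p)$ lies in the left ideal generated by $L$, whereas the paper detours through the $\gcrd$ of $L$ and $N(\partial^p)$ — and your handling of the final irreducibility deduction (passing to an irreducible right divisor $L_0$ and squeezing orders) is a touch more explicit than the paper's, but these are minor variants, not a different method.
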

\begin{proof}
  We consider $K\langle\partial\rangle_{\leq\deg(N)}=\{L\in
  K\langle\partial\rangle|\mathrm{ord}(L)\leq\deg(N)\}$ and the $K$-linear
  map $\psi_N:K\langle\partial\rangle_{\leq\deg(N)}\rightarrow
  \nicefrac{K_N\langle\partial\rangle}{K_N\langle\partial\rangle(\partial-f)}$
  which maps an operator to its image modulo $\partial-f$. \\Since $\dim_K 
  K\langle\partial\rangle_{\leq\deg(N)}=\deg(N)+1$ and $\dim_K
  \nicefrac{K_N\langle\partial\rangle}{K_N\langle\partial\rangle(\partial-f)}=\deg(N)$,
  $\psi_N$ has a nontrivial kernel. In particular $\mathrm{ord}(L)\leq
  \deg(N)$. Let us show that $L$ is a divisor of $N(\partial^p)$. We claim
  that $\gcrd(L,N(\partial^p))$ is a multiple of $\partial-f$. Indeed,
  $\partial-f$ is a divisor of $\partial^p-y_N$ which is a divisor of
  $N(\partial^p)$ and is also a divisor of $L$. By definition of $L$,
  $\gcrd(L, N(\partial^p))=L$ and $L$ is a divisor of $N(\partial^p)$.
  Since $\mathrm{ord}(L)\leq\deg(N)$, it has to be irreducible according to
  Proposition~\ref{properties_sec_1}.(v).
\end{proof}
The proof of this result also points to an algorithmic way of deducing an
irreducible divisor of $N(\partial^p)$ from a solution to the $p$-Riccati
equation relative to $N$.
\begin{Corollaire}\label{better_way}
  Let $N\in C[Y]$ be an irreducible polynomial and $f\in K_N$ be a solution
  to the $p$-Riccati equation relative to $N$. Set $d_y=\deg(N)$.\\
  Let $a_0=1$ and for all $i\in\ldbrack0;d_y-1\rdbrack$,
  $a_{i+1}=a_if+a_i'$. Consider the matrix $M(f)$ in $M_{d_y,d_y+1}(K)$ whose columns
  are the coefficients of the $a_i$ (in some fixed $K$ basis of $K_N$).
  Then for any nonzero $v=(v_0,\ldots,v_{d_y})\in\ker(M)$,
  $\sum_{i=0}^{d_y}v_i\partial^i$ is an irreducible divisor of
  $N(\partial^p)$.
\end{Corollaire}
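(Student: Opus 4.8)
The plan is to show that $\ker(M(f))$ corresponds exactly to the kernel of the map $\psi_N$ from Proposition~\ref{Proposition_kernel_irred_div}, so that any nonzero vector in it produces a left multiple of $\partial-f$ of order at most $d_y$, which by that proposition is an irreducible divisor of $N(\partial^p)$. The key observation is the following computation: for an operator $L=\sum_{i=0}^{d_y}v_i\partial^i\in K\langle\partial\rangle$, reducing $L$ modulo $\partial-f$ in $K_N\langle\partial\rangle$ replaces each $\partial^i$ by the scalar $a_i\in K_N$, where $a_i$ is defined by the recursion $a_0=1$, $a_{i+1}=a_i f+a_i'$. Indeed, $\partial\equiv f\pmod{\partial-f}$, and inductively $\partial^{i+1}=\partial\cdot\partial^i\equiv\partial\cdot a_i\pmod{\partial-f}$; using the commutation rule $\partial a_i=a_i\partial+a_i'$ this is $a_i\partial+a_i'\equiv a_i f+a_i'=a_{i+1}\pmod{\partial-f}$.

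With this in hand, the argument runs as follows. First I would fix a $K$-basis of $K_N$ (consistent with the one used to define $M(f)$) and note that, via the computation above, $\psi_N(L)=\sum_{i=0}^{d_y}v_i a_i$ as an element of $K_N$, whose coordinate vector in the chosen basis is precisely $M(f)\cdot{}^t(v_0,\ldots,v_{d_y})$. Hence $v\in\ker(M(f))$ if and only if $L=\sum v_i\partial^i$ lies in $\ker(\psi_N)$, i.e.\ $\partial-f$ divides $L$ on the right in $K_N\langle\partial\rangle$. Next I would invoke Proposition~\ref{Proposition_kernel_irred_div}: the generator $\tilde L$ of the ideal of left multiples of $\partial-f$ inside $K\langle\partial\rangle$ is an irreducible divisor of $N(\partial^p)$, and from the proof we know $\mathrm{ord}(\tilde L)\leq d_y$. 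Since $\psi_N$ has source of dimension $d_y+1$ and target of dimension $d_y$ over $K$, its kernel is at least one-dimensional; any nonzero $v\in\ker(M(f))$ gives an operator $L\neq 0$ of order $\leq d_y$ that is a left multiple of $\partial-f$, hence a left multiple of $\tilde L$, hence (being of order $\leq d_y=\mathrm{ord}(\tilde L)$, up to a nonzero scalar) equal to $\tilde L$ itself. Therefore $L=\sum_{i=0}^{d_y}v_i\partial^i$ is an irreducible divisor of $N(\partial^p)$.

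The only subtlety to address carefully is that $\ker(M(f))$ really is one-dimensional, so that the vector $v$ is essentially unique and no spurious kernel elements of order $<d_y$ appear that would fail to be associates of $\tilde L$; but this follows because $\psi_N$ is surjective. Indeed $1\in K_N$ lies in the image (take $L=1$), and more generally the images $\psi_N(\partial^i)=a_i$ for $0\le i\le d_y-1$ are $K$-linearly independent: a relation $\sum_{i=0}^{d_y-1}c_i a_i=0$ with $c_i\in K$ not all zero would give a left multiple of $\partial-f$ of order $<d_y$, contradicting that $\tilde L$ — the minimal such multiple — has order exactly $d_y$ (which one sees because $\Dcal_{N(\partial^p)}\simeq M_p(C_N)$ forces all irreducible divisors to have order $\deg N=d_y$ by Proposition~\ref{properties_sec_1}.(v)). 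Hence $\mathrm{rank}(M(f))=d_y$ and $\dim\ker(M(f))=1$, so every nonzero $v\in\ker(M(f))$ yields the same operator up to scalar, namely the irreducible divisor $\tilde L$. The main obstacle, such as it is, is simply being careful about the order-$<d_y$ case and justifying surjectivity of $\psi_N$; the core reduction-modulo-$(\partial-f)$ computation is routine once the recursion for the $a_i$ is set up.
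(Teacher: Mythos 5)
Your proof is correct and follows essentially the same route as the paper: identify $M(f)$ with the matrix of the reduction map $\psi_N$ via the recursion $a_{i+1}=a_if+a_i'$ for the remainders of $\partial^i$ modulo $\partial-f$, get a nonzero kernel by dimension count, and conclude with Proposition~\ref{Proposition_kernel_irred_div} together with Proposition~\ref{properties_sec_1}.(v) that any such kernel element is (a scalar multiple of) the order-$d_y$ irreducible divisor. Your extra argument that $\psi_N$ is surjective, so $\ker(M(f))$ is one-dimensional, is valid but not needed, since a nonzero kernel element of order $<d_y$ is already excluded by being a left multiple of the generator of order $d_y$.
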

\begin{proof}
  We consider $K\langle\partial\rangle_{\leq\deg(N)}=\{L\in
  K\langle\partial\rangle|\mathrm{ord}(L)\leq\deg(N)\}$ and the $K$-linear
  map $\psi_N:K\langle\partial\rangle_{\leq\deg(N)}\rightarrow
  \nicefrac{K_N\langle\partial\rangle}{K_N\langle\partial\rangle(\partial-f)}$
  which maps an operator to its image modulo $\partial-f$. For dimensional
  reasons, we know that
  $\psi_N$ has a nontrivial kernel. Besides, any nonzero
  element of the kernel is a multiple of $\partial-f$ in
  $K\langle\partial\rangle$ of order less than $\deg(N)$. From
  Proposition~\ref{Proposition_kernel_irred_div} and
  Proposition~\ref{properties_sec_1}.(v) this means that it is a
  irreducible divisor of $N(\partial^p)$.
  We claim that the matrix $M(f)$ is the matrix of this restriction from the
  basis $(1,\partial,\ldots,\partial^d)$ to the $K$-basis of $K_N\simeq
  \text{\large$\nicefrac{K_N\langle\partial\rangle}{K_N\langle\partial\rangle(\partial-f)}$}$
  we have fixed.\\
  Indeed let $L'=\partial^k l_k+\partial^{k-1}l_{k-1}+\cdots+l_0$ be any
  differential operator in $K_N\langle\partial\rangle$. Then
  there exists an operator $B=\partial^{k-1}b_{k-1}+\cdots+\partial b_1+b_0\in
  K_N\langle\partial\rangle$ and $b_{-1}\in K_N$ such that
  \[L'=B(\partial-f)+b_{-1}.\]
  Then 
  \begin{align*}
    L'&=\sum_{i=0}^{k-1}\partial^{i+1}b_i-\sum_{i=0}^{k-1}\partial^i
    (b_i'+fbi)+b_{-1}\\
    &=\partial^kb_{k-1}+\sum_{i=0}^{k-1}\partial^i(b_{i-1}-b_i'-fb_i) 
  \end{align*}
    and we find that $l_i=b_{i-1}-b_{i}'-fb_{i}$, or equivalently 
  $b_{i-1}=l_i+b_{i}'+fb_{i}$ and $b_{k-1}=l_k$. We apply this result to
  $L'=\partial^k$. It immediately follows that the corresponding $b_{-1}$
  is the $k$-th term of the recursive sequence defined by $a_0=1$,
  $a_{i+1}=a_if+a_i'$, which concludes the proof.
\end{proof}
It is now easy to see that the coefficients of
$\gcrd(N(\partial^p),\varphi_N^{-1}(\partial-f))$ are of size independent
from $p$ as long as it is also the case for the coefficients of $f$, which we
know to hold true from Theorem~\ref{complexity_size_p-riccati}.

\begin{Lemme}
  We keep the notation of Corollary~\ref{better_way} with the additional
  hypothesis that $f\in\Lcal(A(D))$ where $D\in\mathrm{Div}(K_N)$ is a
  generating divisor of $\Gfrak^p_N$. Then for all $i\in\ldbrack
  1;d_y\rdbrack$,
  \[a_i\in\Lcal\big(iA(D)+(i-1)\max(\mathrm{Diff}(K_N)-2(x)_-,\,0)\big).\]
\end{Lemme}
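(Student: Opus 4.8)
The plan is to argue by induction on $i$, tracking place by place how the two building blocks of the recursion $a_{i+1}=a_if+a_i'$ — multiplication by $f$ and differentiation — affect pole orders. First I would record the preliminaries. Write $M:=\max(\mathrm{Diff}(K_N)-2(x)_\infty,\,0)$ and $E_i:=iA(D)+(i-1)M$, so that the claim is $a_i\in\Lcal(E_i)$. From Definition~\ref{construct_divisor}, $A(D)$ is effective (it dominates $(y_N)_-/p\geq 0$) and $A(D)\geq\mathrm{Diff}(K_N)-2(x)_\infty$ (since $\sum_{\Pfrak\in\Supp D}\Pfrak\geq 0$); hence $0\leq M\leq A(D)$, so in particular $E_{i+1}-(E_i+A(D))=M\geq 0$ and $E_{i+1}\geq M$. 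I also recall, from the discussion following Corollary~\ref{local_sol}, that $\nu_\Pfrak(t_\Pfrak')=-\nu_\Pfrak(\mathrm{Diff}(K_N)-2(x)_\infty)$ for every place $\Pfrak$, so $\nu_\Pfrak(M)=\max(-\nu_\Pfrak(t_\Pfrak'),0)$. The base case is immediate: $a_1=a_0f+a_0'=f\in\Lcal(A(D))=\Lcal(E_1)$.

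Next I would establish the local derivative estimate. Fixing a place $\Pfrak$ with prime element $t_\Pfrak$ and using $g'=t_\Pfrak'\tfrac{\ud g}{\ud t_\Pfrak}$ on $K_{N,\Pfrak}$ (as in the proof of Lemma~\ref{p-1_power_derivation}), expanding $g$ as a Laurent series in $t_\Pfrak$ yields $\nu_\Pfrak(g')\geq\nu_\Pfrak(g)-1+\nu_\Pfrak(t_\Pfrak')$ for all $g\in K_N$, together with the sharper bound $\nu_\Pfrak(g')\geq\nu_\Pfrak(t_\Pfrak')$ whenever $\nu_\Pfrak(g)\geq 0$ (differentiating a power series in $t_\Pfrak$ gives a power series).

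For the inductive step, assume $a_i\in\Lcal(E_i)$; I would bound $\nu_\Pfrak(a_{i+1})\geq\min(\nu_\Pfrak(a_if),\nu_\Pfrak(a_i'))$ at each $\Pfrak$. For the product, $a_if\in\Lcal(E_i+A(D))\subseteq\Lcal(E_{i+1})$. For $a_i'$, if $\nu_\Pfrak(a_i)\geq 0$ then $\nu_\Pfrak(a_i')\geq\nu_\Pfrak(t_\Pfrak')\geq-\nu_\Pfrak(M)\geq-\nu_\Pfrak(E_{i+1})$, using $\nu_\Pfrak(M)\geq\nu_\Pfrak(\mathrm{Diff}(K_N)-2(x)_\infty)=-\nu_\Pfrak(t_\Pfrak')$ and $E_{i+1}\geq M$. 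If instead $\nu_\Pfrak(a_i)<0$, then $\nu_\Pfrak(E_i)>0$; since $\nu_\Pfrak(M)\geq 1$ would already force $\nu_\Pfrak(\mathrm{Diff}(K_N)-2(x)_\infty)\geq 1$ and hence $\nu_\Pfrak(A(D))\geq 1$, we get $\nu_\Pfrak(A(D))\geq 1$ in all cases. Then $\nu_\Pfrak(A(D))+\nu_\Pfrak(M)+\nu_\Pfrak(t_\Pfrak')\geq 1$ — the last two terms sum to $0$ when $\nu_\Pfrak(t_\Pfrak')\leq 0$ and $\nu_\Pfrak(t_\Pfrak')\geq 1$ otherwise — and the general estimate gives $\nu_\Pfrak(a_i')\geq-\nu_\Pfrak(E_i)-1+\nu_\Pfrak(t_\Pfrak')\geq-\nu_\Pfrak(E_{i+1})$, because $\nu_\Pfrak(E_{i+1})-\nu_\Pfrak(E_i)=\nu_\Pfrak(A(D))+\nu_\Pfrak(M)$. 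Combining the two cases, $a_{i+1}\in\Lcal(E_{i+1})$, which closes the induction.

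I expect the only delicate point to be this last case distinction at the ramified places and the poles of $x$: the single unit of pole order that differentiation can create has to be absorbed — by the extra copy of $M$ built into $E_{i+1}$ where $\nu_\Pfrak(t_\Pfrak')<0$, and by the fact that $\nu_\Pfrak(A(D))\geq 1$ wherever $a_i$ actually has a pole when $\nu_\Pfrak(t_\Pfrak')\geq 0$. Everything else is routine valuation bookkeeping.
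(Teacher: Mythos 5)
Your proof is correct and follows essentially the same route as the paper: track valuations place by place, use $\nu_\Pfrak(g')\geq\nu_\Pfrak(g)-1+\nu_\Pfrak(t_\Pfrak')$ together with $\nu_\Pfrak(t_\Pfrak')=-\nu_\Pfrak(\Diff(K_N)-2(x)_\infty)$, and absorb the extra unit of pole order created by differentiation into the $A(D)+M$ increment of the bound. Your case split on $\nu_\Pfrak(a_i)\geq0$ versus $\nu_\Pfrak(a_i)<0$, and then on the sign of $\nu_\Pfrak(t_\Pfrak')$, is actually slightly cleaner than the paper's ``ramified versus unramified'' split: the paper's phrase that at every ramified place $1-\nu_\Pfrak(t_\Pfrak')$ is at most twice $\nu_\Pfrak(\Diff(K_N)-2(x)_\infty)$ tacitly assumes $\nu_\Pfrak(\Diff(K_N)-2(x)_\infty)\geq1$, which can fail at tamely ramified poles of $x$, whereas your direct estimate $\nu_\Pfrak(A(D))+\nu_\Pfrak(M)+\nu_\Pfrak(t_\Pfrak')\geq1$ handles that corner uniformly.
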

\begin{proof}
  We know that $a_1=f\in\Lcal(A(D))$ so the proposition is verified here.
  We now suppose that the conclusion of the lemma holds for the index $i$.
  Let $\Pfrak\in\Pbb_{K_N}$ be a place and $t_\Pfrak$ be a prime element of
  it. Then 
  \[\nu_\Pfrak(a_i')\geq \nu_\Pfrak(a_i)+\nu_\Pfrak(t_\Pfrak')-1.\]
  In all generality, $1-\nu_\Pfrak(t_\Pfrak')$ is precisely one more than
  the valuation of $\mathrm{Diff}(K_N)-2(x)_-$ in $\Pfrak$. In particular
  if $\Pfrak$ is ramified then it is smaller than twice the valuation of
  $\mathrm{Diff}(K_N)-2(x)_-$ which is smaller than the valuation of $A(D)+\mathrm{Diff}(K_N)-
  2(x)_-$. If it is not ramified then either $a_i$ does not have a pole in
  $\Pfrak$, in which case neither does $a_i'$ or it has one and we have
  both $\nu_\Pfrak(a_i')\geq \nu_\Pfrak(a_i)-1$ and $\nu_\Pfrak(A(D))\geq 1$.
  Thus $\nu_\Pfrak(a_i)-\nu_\Pfrak(a_i')$ is once again smaller than the
  valuation of $A(D)+\mathrm{Diff}(K_N)-2(x)_-$.
  Therefore 
  $$a_i'\in\Lcal\big((i+1)A(D)+i\max(\mathrm{Diff}(K_N)-2(x)_-,0)\big.$$
  Furthermore
  since $f\in A(D)$, so too does $f a_i$ and the result follows.
\end{proof}
  \begin{algorithm}[H]
  \begin{flushleft}
    \emph{Input:} $N_*\in \Fbb_q(x)[Y]$ an irreducible separable polynomial, $f$ a
    solution of the $p$-Riccati equation relative to $N_*$.\\
    \emph{Output:} $L\in K\langle\partial\rangle$ the smallest monic multiple of
    $\partial-f$ with coefficients in $K$.
  \end{flushleft}
    \begin{enumerate}
      \item Set $K_{N_*}=\Fbb_q(x)[a]$ with $a$ a root of $N_*$.
      \item Set $d_y:=\deg N_*$.
      \item Set $a_0:=1$.
      \item \textbf{For} $i$ going from $1$ to $d_y$ \textbf{do}:
        \begin{itemize}
          \item Set $a_{i}:=a_{i-1}'+fa_{i-1}$
        \end{itemize}
      \item Set $M\in M_{d,d+1}(\Fbb_q(x))$ the matrix whose columns are
        the $a_i$\\written in the $\Fbb_q(x)$-basis $(1,a,\ldots,a^{d_y-1})$
        of $K_{N_*}$.
      \item Solve $MX=0$.
      \item Reconstruct $L$ from a solution and return it.
  \end{enumerate}
    \caption{Irreducible\_factors}
    \label{algo_p-riccati<->factor}
  \end{algorithm}
\begin{Theoreme}\label{size_p-riccati<->factor}
  Let $N_*\in\Fbb_{p^b}[x,y]$ be a separable irreducible polynomial. Keeping
  the notations of the previous sections, we suppose that
  $\dim_{\Fbb_p}\Gfrak^{p}_{N_*}=r$. Using
  Algorithm~\ref{algo_p-riccati_irred} we can compute a solution $f$ of the
  $p$-Riccati equation relative to $N$ whose coefficients are of degrees $O(rd_xd_y)$.
  Then Algorithm~\ref{algo_p-riccati<->factor} computes an irreducible
  divisor of $N_*^p(\partial)$ whose coefficients are of degree
  $O(rd_xd_y^3)$ in $\tilde{O}(rd_xd_y^{\omega+2})$ operations in
  $\Fbb_{p^b}$.
\end{Theoreme}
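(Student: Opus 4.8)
The plan is to analyse Algorithm~\ref{algo_p-riccati<->factor} directly, the correctness being already granted by Corollary~\ref{better_way}: for any nonzero $v=(v_0,\dots,v_{d_y})$ in the kernel of the matrix $M(f)$ whose columns are the coordinates of $a_0,\dots,a_{d_y}$ in the basis $(1,a,\dots,a^{d_y-1})$, the operator $\sum_{i=0}^{d_y}v_i\partial^i$ is an irreducible divisor of $N(\partial^p)=N_*^p(\partial)$. The first sentence of the statement, concerning $f$, is just a recall of Theorem~\ref{complexity_size_p-riccati}; what remains is (i) to bound the degrees of the $a_i$, hence of the returned operator, and (ii) to bound the arithmetic cost of the update loop and of the linear system. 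I would first pin down the Riemann--Roch spaces the $a_i$ live in, then read off the degrees of their coordinates, then estimate a minimal-degree nullspace vector of $M(f)$, and finally add up the costs.

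For the degree bounds I would start from Theorem~\ref{complexity_size_p-riccati}: the solution $f$ produced by Algorithm~\ref{algo_p-riccati_irred} lies in $\Lcal(A(D))$ for a generating divisor $D$ of $\Gfrak^p_N$ with $\deg A(D)=O(rd_xd_y)$. The Lemma stated just before Algorithm~\ref{algo_p-riccati<->factor} then gives $a_i\in\Lcal\big(iA(D)+(i-1)\max(\Diff(K_N)-2(x)_\infty,0)\big)$ for $1\leq i\leq d_y$. Since $\deg(\Diff(K_N)-2(x)_\infty)=2g-2$ with $g$ the genus of $K_N$, and $g\leq(d_x-1)(d_y-1)$, this degree is $O(d_xd_y)$, so $a_i\in\Lcal(\Delta_i)$ for an effective divisor $\Delta_i$ of degree $O(ird_xd_y)$. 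Feeding $\Delta_i$ into the trace-formula argument of Proposition~\ref{divisor<->coeff} in place of $A(D)$ shows that, written as $a_i=\tfrac1{D_i}\sum_{j=0}^{d_y-1}c_{ij}a^j$ in lowest terms, all $D_i$ and $c_{ij}$ have degree $O(ird_xd_y)$; in particular, after clearing the denominator of each column, $M(f)$ becomes a matrix in $M_{d_y,d_y+1}(\Fbb_{p^b}[x])$ whose entries have degree $O(rd_xd_y^2)$. A nonzero nullspace vector is obtained, up to the column denominators, by expanding the $d_y\times d_y$ minors of this matrix, each of which is a determinant of a matrix with entries of degree $O(rd_xd_y^2)$ and therefore has degree $O(d_y\cdot rd_xd_y^2)=O(rd_xd_y^3)$; so $L$ can be taken with coefficients of degree $O(rd_xd_y^3)$.

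For the cost I would compute once $a'=-\partial_xN_*(x,a)/\partial_YN_*(x,a)\in K_N$, for $\tilde O(d_xd_y^2)$ operations in $\Fbb_{p^b}$. Each pass of the loop computes $a_i=a_{i-1}'+f\,a_{i-1}$, that is, a bounded number of multiplications in $K_N=\Fbb_{p^b}(x)[a]/(N_*)$ between elements whose coordinates have degree $O(ird_xd_y)$; with fast bivariate polynomial arithmetic, fast reduction modulo $N_*$, and keeping the result in lowest terms this costs $\tilde O(ird_xd_y^2)$ operations in $\Fbb_{p^b}$, so the whole loop costs $\tilde O(rd_xd_y^4)$. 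Finally, $M(f)X=0$ is the computation of a nullspace vector of a $d_y\times(d_y+1)$ polynomial matrix whose entries have degree $O(rd_xd_y^2)$; using a fast minimal-nullspace-basis algorithm this takes $\tilde O(d_y^\omega\cdot rd_xd_y^2)=\tilde O(rd_xd_y^{\omega+2})$ operations in $\Fbb_{p^b}$, and recovering $L$ from the vector is negligible. Since $\omega\geq2$, the loop cost $rd_xd_y^4$ is absorbed into $rd_xd_y^{\omega+2}$, which yields the announced total. The delicate point is this last step: a naive evaluation--interpolation along $x$ would pay an extra factor $d_y$ and give only $\tilde O(rd_xd_y^{\omega+3})$, so reaching the exponent $\omega+2$ requires a nullspace algorithm that genuinely exploits the polynomial-matrix structure of $M(f)$; everything else is routine degree and cost bookkeeping.
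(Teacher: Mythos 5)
Your argument follows the same structure as the paper's own proof: bound the degree growth of the $a_i$ through the recursion $a_{i+1}=a_if+a_i'$, deduce a degree bound on the $d_y\times d_y$ minors of $M(f)$ (hence on the output operator), and charge the dominant cost to a fast polynomial-matrix nullspace computation as in~\cite{Storjohann03}. If anything your treatment is slightly more careful: you justify the bound $\deg a_i = O(ird_xd_y)$ via the Lemma preceding Algorithm~\ref{algo_p-riccati<->factor} together with Proposition~\ref{divisor<->coeff}, where the paper invokes only an ``immediate recurrence'' (which a priori ignores the degree inflation coming from reduction modulo $N_*$), and your loop cost $\tilde{O}(rd_xd_y^4)$ is a more honest accounting than the $\tilde{O}(rd_xd_y^3)$ stated in the paper; both discrepancies are harmless since they are dominated by the $\tilde{O}(rd_xd_y^{\omega+2})$ nullspace cost, so the final statement is unchanged.
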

\begin{proof}
  The coefficients of the irreducible divisor returned by
  Algorithm~\ref{algo_p-riccati<->factor} can be expressed using the minors of the
  matrix $M$ whose columns are the $a_i$ written in the basis
  $(1,a,\ldots,a^{d_y-1})$. Since we know that $f$ has coefficients of
  degree $O(rd_xd_y)$, by immediate recurrence we get that $a_i$ has
  coefficients of degree $O(rd_xd_y^2)$. Thus the minors of $M$ are of
  degree $O(d_y^2rd_xd_y)$ since $M$ is a matrix of size $d\times(d+1)$.
  Furthermore, the coefficients $a_i$ can all be computed in
  $\tilde{O}(rd_xd_y^3)$ operations in $\Fbb_q$. It finally remains
  to solve a linear system of size $d\times (d+1)$ with coefficients in $\Fbb_q(x)$ of degree
  $O(rd_xd_y^2)$. This can be done in in $\tilde{O}(rd_xd_y^{\omega+2})$ operations in
  $\Fbb_{p^b}$~\cite{Storjohann03}.
\end{proof}
\begin{Remarque}
  In practice we have observed that the growth of the size of the
  coefficients, from those of the solution to the $p$-Riccati equation, to
  those of the corresponding irreducible divisor of $N(\partial^p)$, is
  only linear in $d_y$ (and not quadratic as shown in
  Theorem~\ref{size_p-riccati<->factor}). We infer that the situation is
  similar to seeking the minimal polynomial of an algebraic function in
  some $K[a]$.
\end{Remarque}
\newpage
\bibliographystyle{alpha}
\bibliography{../bibliographie.bib}

\newpage
\appendix
\section{Rational Puiseux Expansion for Global-to-Local transformations}
In this appendix we present a few facts on the link between places of an
algebraic function field and polynomial factorisation over local fields. We
also introduce the concept of rational Puiseux expansion developed
in~\cite{Duval89} by D. Duval which we use in
Algorithm~\ref{algo_irreducibility_rational} to compute parametrisations of
the completions $K_{N,\Pfrak}$ for the poles $\Pfrak$ of $x$ and $y_N$. We
use results from~\cite{PoWe21} to compute rational Puiseux expansions
efficiently up to the necessary precision.\\

Let $K$ be an algebraic function field and $L$ be a finite separable
extension of $K$ generated by some algebraic element $\alpha$ of minimal
polynomial over $K$ $\pi_\alpha\in K[T]$. It is a well-known fact that for any place $\Pfrak\in
\Pbb_K$, the set of places $\Pfrak'\in\Pbb_{L}$ lying above $\Pfrak$ is in
bijection with the irreducible factors of $\pi_\alpha$ in $K_\Pfrak[T]$. This can be
seen as a consequence of the following fact:

\begin{Lemme}\cite[§2~Theorem~2]{BourbakiTopVS}
  Let $F$ be a local field and $V$ be a Hausdorff topological $F$-vector space of
  dimension $n$. Then $V$ is isomorphic (as a topological vector space) to
  $F^n$.
\end{Lemme}
\begin{Corollaire}
  Let $F$ be a local field, $\nu$ be its valuation and $L/F$ be a finite
  extension of $F$. Then there is a unique valuation $\nu'$ on $L$ that
  extends $\nu$.
\end{Corollaire}
\begin{proof}
  Let $\Pfrak$ be the place of $F$ associated to $\nu$. The set of
  valuations extending $\nu$ is in bijection with the set of places above
  $\Pfrak$. Since
  $\sum_{\Pfrak'|\Pfrak}e_{\Pfrak'|\Pfrak}f_{\Pfrak'|\Pfrak}=[L:F]$ this
  set cannot be empty. If now $\nu_1$ and $\nu_2$ are two such valuations
  on $L$ then for $i\in\{1;2\}$, $\Pfrak_i$ is the set of $x\in L$ such
  that the sequence $x^n\xrightarrow[n\infty]{}0$ for the topology induced by
  $\nu_i$. But since $\nu_i$ makes $L$ a Hausdorff topological $F$-vector
  space of finite dimension, it is homeomorphic to $F^{[L:F]}$ which does
  not depend on $\nu_i$. Thus $\Pfrak_1=\Pfrak_2$ and $\nu_1=\nu_2$.
\end{proof}
\begin{Corollaire}\label{cor_place-factor}
  Let $K$, $L$, $\Pfrak$, $\alpha$ and $\pi_\alpha$ be defined as in the
  introduction. There is a bijection
  $\sigma:\{\Pfrak'\in\Pbb_L,\:\Pfrak'|\Pfrak\}\xrightarrow{\sim}\{\pi\in
  K_\Pfrak[T],\:\pi|\pi_\alpha\wedge\pi\text{ irreducible}\}$ which
  furthermore induces isomorphisms \[L_{\Pfrak'}\xrightarrow{\sim}
  \nicefrac{K_\Pfrak[T]}{\sigma(\Pfrak')}.\]
\end{Corollaire}
\begin{proof}
  Let $\Pfrak'|\Pfrak$. We denote by $\Ocal$ the valuation ring of $K$
  containing $\Pfrak$. There is a mapping $\iota_{\Pfrak'}:L\hookrightarrow
  L_{\Pfrak'}$. We can associate to $\Pfrak'$ the minimal
  polynomial of $\iota_{\Pfrak'}(\alpha)$ over $K_\Pfrak$ which we denote
  $\pi$ and is an irreducible factor of
  $\pi_\alpha$ in $K_\Pfrak[T]$. This mapping does not depend on the choice
  of $\iota_{\Pfrak'}$. Indeed if $\iota_1$ and $\iota_2$ are two such
  morphisms then $\iota_1\circ \iota_2^{-1}$ defines a morphism from
  $\iota_2(L)\rightarrow L_{\Pfrak'}$ which extends uniquely by continuity
  into an element of $\Gal{L_{\Pfrak'}}{K_\Pfrak}$. It follows that
  $\iota_1(\alpha)$ and $\iota_2(\alpha)$ are conjugated in $L_\Pfrak'$ and
  have the same minimal polynomial over $K_\Pfrak$.\\
  Note $T\mapsto \iota_{\Pfrak'}(\alpha)$ induces a monomorphism
  $\nicefrac{K_\Pfrak[T]}{\pi}\rightarrow L_{\Pfrak'}$ whose image contains
  $L$, so is dense in $L_\Pfrak'$, and is closed since the valuations of
  $\nicefrac{K_\Pfrak[T]}{\pi}$ and the restriction of that of
  $L_{\Pfrak'}$ must coincide. Thus it is also surjective which proves that
  $L_{\Pfrak'}$ and $\nicefrac{K_\Pfrak[T]}{\pi}$ are isomorphic.\\
  Conversely, if $\pi|\pi_\alpha$ is an
  irreducible factor of $\pi_\alpha$ in $K_\Pfrak[T]$ then it defines a
  separable extension $F_\pi/K_\Pfrak$ and a morphism
  $\varphi_\pi:L\hookrightarrow F_\pi$ which maps $K$ onto itself and $\alpha$ on a
  root of $\pi$ in $F_\pi$. Let $\nu$ be the unique valuation on $F_\pi$
  which extends that of $K_\Pfrak$ and $\Ocal_\pi=\{x\in F_\pi|\nu(x)\geq 0\}$.
  Then $\varphi_\pi^{-1}(\Ocal_\pi)$ is a valuation ring of $L$ containing
  $\Ocal$ so we can associate to it a unique place $\Pfrak'|\Pfrak$.
  Note that this place does not depend on the choice of the root of $\pi$
  that defines $\varphi_\pi$. Indeed, any other choice of a root of $\pi$
  defines a unique element of $\psi\in\Gal{F_\pi}{K_\Pfrak}$. Since
  $\nu\circ \psi$ is also a valuation of $F_\pi$ that extends the valuation
  of $K_\Pfrak$, $\nu\circ \psi=\nu$.\\
  The two maps thus defined are inverse of one another which
  follows immediately from the good definition of the two maps.
\end{proof}
We now restrain ourselves to the case $K=\Fbb_q(X)$ with $q=p^n$ and
$\Pfrak=X$. Let $P\in\Fbb_q[X,Y]$ be an irreducible polynomial and $L$ be
the rupture extension of $P$. In~\cite{Duval89}, Duval introduced the
notion of rational Puiseux expansion of a curve for fields of
characteristic $0$ but which can be extended to the case where $X$ is
tamely ramified in $L$. We recall
this notion here and show that these rational Puiseux expansions fully
describe a morphism $L\rightarrow L_\Pfrak$ for all the places above $X$.

\begin{Definition}\cite[Definition~2]{PoWe21}
  Let $P_1,\dots,P_\rho$ be the irreducible factors of $P$ in
  $\Fbb_q\ldbrack X\rdbrack[Y]$.
  A system of rational Puiseux expansions of $P$ over $\Fbb_q$ is a finite set
  $\{(\tilde{x_1},\tilde{y_1}),\dots,(\tilde{x_\rho},\tilde{y_\rho})\}\subset
  \overbar{\Fbb_p}\ldbrack T\rdbrack^2$ (where $\overbar{\Fbb_p}$ is an
  algebraic closure of $\Fbb_p$) such that
  \begin{enumerate}[label=\roman*)]
    \item $P_i(\tilde{x_k},\tilde{y_k})=0$ for all $k\in\ldbrack1;\rho\rdbrack$
    \item For all $k\in\ldbrack1;\rho\rdbrack$, $\tilde{x_k}=\gamma_k T^{e_k}$ for some
      $(\gamma_k,e_k)\in\overbar{\Fbb_p}^\times\times\Nbb^{*}$
    \item Each $e_k$ is ``minimal" which is to say that there is no
      $l\in\Nbb^*$ such that
      $(\tilde{x_k},\tilde{y_k})\in\overbar{\Fbb_p}\ldbrack T^l\rdbrack^2$.
  \end{enumerate}
\end{Definition}
\begin{Lemme}
  Let $\{(\tilde{x_1},\tilde{y_1}),\dots,(\tilde{x_\rho},\tilde{y_\rho})\}$
  be a system of rational Puiseux expansions of $P$ and $i\in\ldbrack
  1;\rho\rdbrack$. Let $\Kbb_i/\Fbb_q$ be the smallest extension of
  $\Fbb_q$ containing all the coefficients of $\tilde{x_i}$ and
  $\tilde{y_i}$. Let $\xi$ be a primitive $e_i$-th root of unity, where
  $\tilde{x_i}=\gamma_i T^{e_i}$ for some $\gamma_i$.
  For each $\sigma\in\Gal{\Kbb_i}{\Fbb_q}$ set $\sigma(\gamma_i)^{1/e_i}$ a
  $e_i$-th root of $\sigma(\gamma)$.
  The Puiseux series
  $S_{\sigma,j}=\sigma(\tilde{y_i})(\xi^{j}\sigma(\gamma_i)^{-1/e_i}X^{1/e_i})$
  are pairwise distincts roots of $P_i$.
\end{Lemme}
\begin{proof}
  In~\cite[Theorem~2]{Duval89}, Duval showed that the factorisation of $P$
  in $\Fbb_q((X))[Y]$ is given by $P=\prod P_i$ with
  $P_i=\prod_{\sigma,j}(Y-S_{i,j})$ which in particular proves that the
  $S_{i,j}$ are all distincts.  
\end{proof}
\begin{Proposition}\label{prop_ratpuiseux}
  Let $\{(\tilde{x_1},\tilde{y_1}),\dots,(\tilde{x_\rho},\tilde{y_\rho})\}$
  be a system of rational Puiseux expansions of $P$ and $i\in\ldbrack
  1;\rho\rdbrack$. Let $\Kbb_i/\Fbb_q$ be the smallest extension of
  $\Fbb_q$ containing all the coefficients of $\tilde{x_i}$ and
  $\tilde{y_i}$. Then the morphism
  \[\begin{array}{crcl}
    L&\hookrightarrow&\Kbb_i((T))\\
    X&\mapsto&\tilde{x_i}\\
  Y&\mapsto&\tilde{y_i}\end{array}\] is continuous for the topologie
  induced by the valuation associated to $\Pfrak_i$ and its image is dense
  in $\Kbb_i((T))$, which is to say that $\Kbb_i((T))$ is a representation
  of $L_{\Pfrak_i}$.
\end{Proposition}
\begin{proof}
  We begin by showing that $f_i=[\Kbb_i:\Fbb_q]<\infty$. Let
  $\Kbb_i'=\Kbb_i[\gamma_i^{1/e_i}]$ where $\tilde{x_i}=\gamma_i T^{e_i}$.
  Then $\tilde{y_i}(\gamma^{-1/e_i}T^{1/e_i)})$ is a root of $P$. Since
  those can only be in finite amount and
  $\Gal{\Kbb_i'}{\Fbb_q[\gamma^{1/e_i}]}$ maps
  Puiseux solutions of $P$ to other solutions of $P$, it follows that the
  orbit of $\tilde{y_i}(\gamma^{-1/e_i}T^{1/e_i)})$ must be finite. Taking
  $n$ to be the cardinal of its orbit it follows that each coefficient
  lives in $\Fbb_q[\gamma^{1/e_i}]^n$ and thus $\Kbb_i'$ is finite.\\
  Let $P_i$ be the irreducible factor of $P$ in $\Fbb_q((X))[Y]$ such that
  $P_i(\tilde{x_i},\tilde{y_i})=0$ and $\Pfrak_i$ be the corresponding
  place. $X\mapsto \tilde{x_i}$ and $Y\mapsto \tilde{y_i}$ induces a
  monomorphism $L_{\Pfrak_i}\simeq\nicefrac{\Fbb_q((X))[Y]}{P_i}\rightarrow
  \Kbb_i((T))$. It follows that the subfield of $L_{\Pfrak_i}$ of algebraic elements over
  $\Fbb_q$ is a subfield of $\Kbb_i$, thus $f_{\Pfrak_i}\leq f_i$.
  Furthermore, we know that there is a unique valuation on $\Kbb_i((T))$
  which extends that of $\Fbb_q((X))$ and so it must be $e_i^{-1}\nu_T$. As
  there is also a unique valuation on $L_{\Pfrak_i}$ that extends that of
  $\Fbb_q((X))$, it must be $e_i^{-1}\nu_{T|L_{\Pfrak_i}}$. Therefore, the
  ramification index of $\Pfrak_i$ is $e_i$.\\
  If $\xi$ is a primitive $e_i$-th root of the
  unity then the
  $S_{\sigma,j}=\sigma(\tilde{y_i})(\sigma(\gamma_i)^{-1/e_i}\xi^j X)$ for
  $(\sigma,j)\in\Gal{\Kbb_i}{\Fbb_q}\times\ldbrack1;e_i\rdbrack$
  are $e_if_i$ pairwise distinct roots of $P_i$. Therefore $\sum_{i=1}^\rho
  e_if_i\leq \deg(P)$. But we also know that $\sum_{i=1}^\rho e_i
  f_{\Pfrak_i}=\deg(P)$. Since $f_{\Pfrak_i}\leq f_i$ for all $i$ we can deduce
  that $f_i=f_{\Pfrak_i}$. Since now $\Kbb_i((T))$ and $L_{\Pfrak_i}$ have
  the same degree over $\Fbb_q((X))$ they must be equal.
\end{proof}

\begin{Theoreme}\label{theorem_compratPuiseux}\cite[Theorem~1]{PoWe21}~\\
  Let $P\in\Fbb_q[X,Y]$ of degree $d_Y$ in $Y$ and $\alpha$ be such that
  $P(X,\alpha)=0$. Let $\delta=\nu_X(\mathrm{Res}(P,\partial_YP))$. There
  exists an algorithm finishing in
  $\tilde{O}(d_Y\delta)$ arithmetic operations in $\Fbb_q$ and returning a set
  $\{(\tilde{x_1},\lceil y_1\rceil,\Kbb_1),\dots,(\tilde{x_l},\lceil y_l\rceil,\Kbb_l)\}$
  such that
  \begin{enumerate}[label=\roman*)]
    \item $(\tilde{x_i},\lceil y_i\rceil)\in\Kbb_i[T]\times\Kbb_i[T^{\pm1}]$ for all $i$.
    \item For each $i$, there exists a unique $\tilde{y_i}\in\Kbb_i((T))$
      such that $P(\tilde{x_i},\tilde{y_i})=0$ and $\tilde{y_i}=\lceil
      y_i\rceil+ O(T^{r_i+1})$ where $r_i$ is the maximal power of $T$ in
      $\lceil y_i\rceil$.
    \item The set
      $\{(\tilde{x_1},\tilde{y_1}),\dots,(\tilde{x_l},\tilde{y_l})\}$ is a system of rational Puiseux
      expansions of $P$.
    \item $\Kbb_i$ is the coefficient field of $(\tilde{x_i},\tilde{y_i})$.
  \end{enumerate}
  For no higher cost we may obtain the subset of non integral elements.
\end{Theoreme}
\begin{Remarque}
  To keep only the non integral elements of a system of rational Puiseux
  expansion, we can compute the whole system and keep only those of
  valuation negative. In truth the algorithm in~\cite{PoWe21} proceeds by
  computing a factorisation of $P$ in $K\ldbrack X\rdbrack[Y]$ as a product
  $P=uP_0P_\infty$ where $u\in \Fbb_q[X]$, $P_0$ is monic and $P_\infty$ is
  such that $P_\infty(0)\neq 0$ up to precision $\delta$. The algorithm
  then computes a system of rational Puiseux expansions of $P_0$ then a
  system of rational Puiseux expansions of
  $Y^{\deg_Y(P_\infty)}P_\infty(1/Y)$ up to a good enough precision before
  inverting them. For our purpose we can do only the latter part. Indeed if
  $(\tilde{x_i},\tilde{y_i})$ is a non integral rational Puiseux expansion
  then the associated Puiseux serie for $P$ has a non monic minimal polynomial
  $\pi\in\Fbb_p\ldbrack X\rdbrack[Y]$ with $\pi(0)\neq 0$. It follows that
  $\pi|P_\infty$.
\end{Remarque}
\end{document}